\documentclass{llncs}
\makeatletter
\let\vec\@undefined
\expandafter\let\csname vec \endcsname\@undefined
\DeclareMathAccent{\vec}{\mathord}{letters}{"7E}
\makeatother
\usepackage{amsmath,amssymb,mathtools}
\usepackage[hidelinks]{hyperref}
\usepackage[nameinlink,capitalise]{cleveref}
\usepackage{xcolor}

\spnewtheorem{fact}[theorem]{Fact}{\bfseries}{\itshape}
\crefname{fact}{Fact}{Facts}
\Crefname{fact}{Fact}{Facts}

\newcommand{\F}{\mathbb F}
\newcommand{\E}{\mathbb E}
\newcommand{\C}{\mathcal C}
\newcommand{\Pau}{\mathcal P}
\newcommand{\Tr}{\operatorname{Tr}}

\newcommand{\Adv}{\operatorname{Adv}}
\newcommand{\one}{\mathbf 1}
\newcommand{\ip}[2]{\langle #1,#2\rangle}

\title{Quantum Leakage Resilience of Shamir Secret Sharing}
\titlerunning{Quantum Leakage Resilience of Shamir Secret Sharing}
\author{Rishabh Batra\inst{1} \and
Fuyuki Kitagawa\inst{2} \and
Ryo Nishimaki\inst{2} \and
Takashi Yamakawa\inst{2}}
\authorrunning{R. Batra et al.}
\institute{EPFL \and NTT Social Informatics Laboratories}

\begin{document}
\maketitle
\thispagestyle{plain}

\begin{abstract}
We initiate the study of quantum leakage resilience of unmodified Shamir secret sharing over prime fields. A well-studied leakage model for Shamir's secret sharing classically is single-bit local leakage from each share. We consider its quantum analogue where for each party, a local leakage channel takes as input the party's share and outputs a leaked qubit.  Without
preshared entanglement, we show that the distinguishing advantage is $2^{-\Omega(n)}$ when the threshold rate
$t/n=\tau$ exceeds $\tau_\star\approx0.73339$ by a fixed positive margin.
The proof uses finite-field Fourier analysis and properties of MDS codes. 

More generally, we allow disjoint entangled blocks of any fixed maximum size where there is no entanglement between different blocks or with the adversary, and
each block emits at most a fixed number of qubits. Security holds when the threshold rate is high enough (sufficiently close to one). We then allow a specified set of devices to share
entanglement with the adversary. We show that security holds even when a linear number of devices ($\alpha n$ for small $\alpha>0$) share entanglement with each other and with the adversary for a large enough threshold rate.  

As a complementary negative result, we also show that even classical single-bit leakage makes Shamir scheme insecure if we allow arbitrarily large entanglement between the leakage devices. A GHZ state shared by exactly $t$ leakage devices
makes even classical one-bit leakage insecure, without any
entanglement with the adversary. In this attack, each
participating device emits only one classical bit, and their
joint parity distinguishes any chosen pair of secrets with a constant advantage. Thus, for fixed threshold rates above $\tau_\star$, the maximum number of devices
that may share arbitrary entanglement with one another and with the
adversary while preserving security is linear in $n$ up to constant
factors, although the optimal support fraction remains open.

\keywords{Shamir secret sharing \and leakage resilience \and quantum leakage
\and MDS codes}
\end{abstract}

\section{Introduction}

Secret sharing distributes a secret among several parties so that only authorized
sets of parties can reconstruct it, whereas unauthorized sets learn nothing about
the secret.  This reduces reliance on any single holder for security, with
applications in threshold cryptography, distributed key management, and secure
multiparty computation; see, for example, the survey of Beimel~\cite{Beimel2011}.
The standard security guarantee holds even when an adversary learns all shares held
by an unauthorized set, provided it obtains no information about any other share.

Shamir's polynomial-based construction is one of the most commonly used secret-sharing
schemes~\cite{Shamir1979}.  It realizes a $t$-out-of-$n$ threshold
access structure: every set of at least $t$ parties is authorized, and every
smaller set is unauthorized.  To share $s\in\F_p$, the scheme chooses a uniformly
random polynomial $f$ of degree less than $t$, subject to $f(0)=s$, and gives
party $i$ the field element $f(x_i)$ as the share.  Any $t$ evaluations determine $f(0)$
by interpolation, whereas every set of fewer than $t$ shares is statistically
independent of the secret.

Shamir's algorithms are very simple and efficient, which helps explain its
widespread use in practice.  The public evaluation points can be fixed beforehand, and the
Lagrange coefficients for a reconstruction set can also be precomputed.  More
importantly, the scheme is linear; in particular, the secret can be reconstructed by simply adding the shares together after multiplying them by specific constants.  These properties make Shamir sharing a natural building
block for larger protocols, including general secure multiparty computation based
on linear secret-sharing schemes~\cite{CramerDamgardMaurer2000}.

The standard security model has a sharp gap between fully exposed shares and
shares about which the adversary learns nothing.  Implementations of cryptographic schemes in practice may 
reveal partial information through timing or electromagnetic
measurements.  Leakage-resilient cryptography asks whether security of protocols hold against such
partial exposure of information to the adversary~\cite{MicaliReyzin2004,DziembowskiPietrzak2008}.  For secret
sharing, such leakage may come from multiple devices, including from the completely hidden
shares.  Ordinary threshold privacy gives no guarantee in this
situation.  Even one bit from each share may collectively reveal information about
the secret.  This inspired the study of the local-leakage model, which allows the adversary to choose a short-output function of each share and observe all the
outputs~\cite{BDIR2021}.  Proving security for an existing scheme is particularly
useful: it protects a familiar scheme against leakage without requiring a new sharing
algorithm or a change to protocols that already use it.

There has been substantial work on classical local leakage from secret sharing (see Related work).
With advances in quantum technologies, a natural question is whether the widely
used Shamir scheme is also resilient to quantum leakage.  We initiate the study of
quantum leakage resilience of Shamir's scheme.

In the basic model, each device receives its classical share and outputs one qubit, and the adversary may measure all outputs jointly. Classical one-bit leakage is a special case, but its security guarantees do not directly establish security for quantum outputs. Preshared entanglement raises a further question: how does the security of local leakage depend on the quantum entanglement available to the devices before they receive their shares?

Our first result establishes exponentially small distinguishing advantage for Shamir sharing over $\mathbb F_p$, for every prime $p>n$ and every choice of distinct nonzero evaluation points, when the devices have no preshared entanglement and the threshold rate exceeds $\tau_\star\approx0.73339$ by a fixed positive margin. We extend this to also show that security holds at sufficiently high threshold rates for when disjoint blocks of any fixed maximum size are allowed to share entanglement. In fact, this result permits arbitrary joint computation within each block with a separate fixed bound on its output qubits and no entanglement between blocks or with the adversary.

 We then allow a specified set of devices to share
entanglement with the adversary. We show that security holds even when a linear number of devices ($\alpha n$ for small $\alpha>0$) share entanglement with each other and with the adversary for a large enough threshold rate.  We also show that even classical single-bit leakage makes Shamir scheme insecure if we allow arbitrarily large entanglement between the leakage devices. A GHZ state shared by exactly $t$ leakage devices
makes even classical one-bit leakage insecure, without any
entanglement with the adversary.
Each
participating device emits only one classical bit, yet their
joint parity distinguishes any chosen pair of secrets with advantage
$\cos^2(\pi/(2p))=1-O(p^{-2})$. Thus, for every fixed threshold rate $\tau>\tau_\star$, when we allow devices to
share arbitrary entanglement with one another and with the adversary,
the maximum number of participating devices compatible with security
is linear in $n$ up to constant factors. The threshold proved by our
security bound is not claimed to be optimal.
\subsection{Our results}

The constant $\tau_\star$ used in the theorem statements is from \cref{cor:direct-product} and its value is approximately $0.73339$.   Let $p>n$ be an odd prime.

\begin{theorem}[Informal; unentangled quantum leakage]
  \label{thm:intro-product}
 Consider an $(n,t)$-Shamir secret sharing scheme over the finite-field $\F_p$ with threshold rate
$t/n=\tau$. We allow local leakage from each share, where we apply an arbitrary local channel to each share and
output one qubit.  If there is no prior entanglement and
$\tau>\tau_\star$ with a fixed positive gap, the advantage in distinguishing any two secrets is
$\operatorname{negl}(n)$, even when the adversary jointly measures all outputs.
\end{theorem}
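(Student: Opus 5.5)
The plan is to reduce distinguishing advantage to a trace-distance bound and then expand in the Pauli basis. Let $\rho_s=\E_{f}\bigotimes_{i=1}^n \Phi_i(f(x_i))$ be the joint leaked state for secret $s$, where $\Phi_i(a)=\rho_{i,a}$ is the one-qubit state device $i$ outputs on share $a$. The advantage is $\tfrac12\|\rho_s-\rho_{s'}\|_1$. We bound this by $\tfrac{\sqrt{2^n}}{2}\|\rho_s-\rho_{s'}\|_2$. Alternatively, and better, we expand each $\rho_{i,a}=\tfrac12\sum_{P\in\Pau}c_{i,P}(a)P$ with $c_{i,I}\equiv 1$ and $|c_{i,P}(a)|\le 1$. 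We then use the triangle inequality over Pauli strings, $\tfrac12\|\rho_s-\rho_{s'}\|_1\le \tfrac12\sum_{P_1,\dots,P_n}\bigl|\E_f\prod_i c_{i,P_i}(f(x_i))-\E_f\prod_i c'_{\dots}\bigr|$, since $\|\bigotimes P_i/2^n\|_1=1$. This turns each Pauli string into a classical local-leakage expectation with bounded functions $c_{i,P_i}:\F_p\to[-1,1]$. The same expansion treats all $4^n$ strings uniformly, so the quantum problem becomes a sum of classical product-function correlations.

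Next, we apply finite-field Fourier analysis to each term. Write $\hat c_{i}(\alpha)=\E_a c_i(a)\omega^{-\alpha a}$. Shares form a coset of the Reed--Solomon code $C$ of dimension $t$, whose dual $C^\perp$ is MDS with minimum distance $t+1$. Then $\E_f\prod_i c_i(f(x_i))$ depends on $s$ only through the Fourier coefficients indexed by nonzero dual codewords (together with the secret coordinate), so the difference for a single string is bounded by $2\sum_{0\ne \alpha\in C^\perp_{\mathrm{ext}}}\prod_i|\hat c_i(\alpha_i)|$. Every such $\alpha$ has at least $t$ nonzero coordinates among the $n$ shares. The standard step is then to use MDS weight-distribution or counting: for each support $S$ with $|S|=w\ge t$, the number of codewords with support inside $S$ is at most $p^{w-t}$ (up to a constant). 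We combine this with Hölder and Parseval, $\sum_{\alpha\ne0}|\hat c(\alpha)|^2\le \E|c|^2\le 1$, to bound $\sum_{\alpha}\prod_{i\in S}|\hat c_i(\alpha_i)|$ by roughly $\prod_{i\in S}\max_\alpha|\hat c_i(\alpha)|\cdot(\text{count})$. One also has $|\hat c_i(\alpha)|\le \tfrac{2}{\pi}$-type bounds for nonzero $\alpha$ from the $\ell_2$/$\ell_\infty$ constraint.

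The main obstacle is that we cannot sum over all $4^n$ Pauli strings naively, since the factor $4^n$ would destroy the bound. The fix is to handle the strings jointly rather than term-by-term. For fixed $\alpha$ we sum over $P_i\in\{I,X,Y,Z\}$ coordinatewise, giving a per-coordinate quantity $\gamma_i(\alpha_i)=\sum_{P\ne I}|\hat c_{i,P}(\alpha_i)|$ for $\alpha_i\neq 0$. Because $\rho_{i,a}$ is a qubit state, $\sum_{P\ne I}c_{i,P}(a)^2\le 1$, so Parseval across both $P$ and $\alpha$ gives $\sum_{\alpha\ne0}\sum_P|\hat c_{i,P}(\alpha)|^2\le 1$. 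Cauchy--Schwarz over the three Paulis then gives $\sum_{\alpha\neq0}\gamma_i(\alpha)^2\le 3$. At coordinates where $\alpha_i=0$ the identity string contributes $1$ and the non-identity strings contribute $|\hat c_{i,P}(0)|$, bounded using the Bloch-ball constraint; this is where I expect the care to go. I would carry out the resulting optimization, a sum over support size $w\ge t$ of $\binom{n}{w}$ times the per-coordinate factors and the MDS count. This yields a bound of the form $\exp(n\,h(\tau))$, with $h(\tau)<0$ exactly when $\tau>\tau_\star$. I would then invoke \cref{cor:direct-product}, which presumably packages this exponent computation and defines $\tau_\star$, to conclude advantage $2^{-\Omega(n)}$ whenever $\tau\ge\tau_\star+\epsilon$.
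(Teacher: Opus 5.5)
Your argument fails at its first step, and the failure cannot be repaired downstream. You bound $\tfrac12\|\rho_{s_0}-\rho_{s_1}\|_1$ by $\tfrac12\sum_P|R_{s_0,P}-R_{s_1,P}|$, with $R_{s,P}=\E_{y\leftarrow\C_s}\prod_i c_{i,P_i}(y_i)$. This replaces the trace norm by the $\ell_1$ norm of the Pauli coefficients, which is exponentially lossy. Your proposed fix does not change this, because $\sum_P\prod_i|\hat c_{i,P_i}(\alpha_i)|=\prod_i\sum_{P_i}|\hat c_{i,P_i}(\alpha_i)|$ is an identity. The per-coordinate factor is therefore still $\gamma_i(\alpha)$ at $\alpha\neq0$, which can exceed $1$, and up to $1+\sqrt3$ at $\alpha=0$.

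Here is a concrete instance. Take $t=n$, so $\C$ is spanned by one full-weight vector $v$ with $\ip{v}{\one}\neq0$. Let device $i$ output the pure state with Bloch vector $\sqrt{2/3}\,\bigl(\cos(\theta-\phi_X),\cos(\theta-\phi_Y),\cos(\theta-\phi_Z)\bigr)$, where $\theta=2\pi v_iy/p$ and $(\phi_X,\phi_Y,\phi_Z)=(0,\pi/3,2\pi/3)$; this is a unit vector because the three squared cosines sum to $3/2$. Then $\hat c_{i,P}(0)=0$ and $|\hat c_{i,P}(\pm v_i)|=1/\sqrt6$ for $P\neq I$, so $\gamma_i(\pm v_i)=\sqrt{3/2}$. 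For $P\in\{X,Y,Z\}^n$ one gets $R_{s,P}=2\cdot6^{-n/2}\cos\bigl(2\pi s\ip{v}{\one}/p-\sum_i\phi_{P_i}\bigr)$. Hence $\sum_P|R_{s_0,P}-R_{s_1,P}|=\Theta\bigl((3/2)^{n/2}\bigr)$ for suitable $s_0,s_1$, at a rate where the theorem asserts advantage $2^{-\Omega(n)}$. By contrast, the matrix-valued coefficient satisfies $\|\widehat\sigma_i(\pm v_i)\|_1=1/2$.

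The paper keeps a unitarily invariant norm at each dual frequency. It writes $\rho_s=\sum_{a\in\C}\omega^{s\ip{a}{\one}}\bigotimes_i\widehat\sigma_i(a_i)$ with $\widehat\sigma_i(a)=\E_y\omega^{-ay}\rho_i(y)$. Multiplicativity of the trace norm then gives $\Adv\le\sum_{a\in\C\setminus\{0\}}\prod_i h_i(a_i)$ with $h_i=\|\widehat\sigma_i\|_1$. These satisfy $h_i(0)=1$ exactly, $\sum_{a\neq0}h_i(a)^2\le1$, and $h_i(a)\le\frac2\pi+\frac\pi p$ for $a\ne0$, the last via trace-norm duality and a Riemann-sum comparison. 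The Hilbert--Schmidt bound you mentioned first and then discarded would actually have been sound: it yields the factors $\sqrt{W_i(a_i)}\le1/\sqrt2$ of \Cref{thm:hs}. Its constants, however, are weaker than $h_i(0)=1$ and $2/\pi$.

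Your summation over the dual code also does not close. You bound each support class by a product of per-coordinate maxima times the count $p^{w-t+1}$ of codewords supported in a $w$-set. This discards Parseval, and at $w=n$ it leaves a factor $p^{n-t+1}>n^{(1-\tau)n}$, which no constant per-coordinate factor can offset; the theorem must hold for every prime $p>n$. The missing ingredient is the MDS Brascamp--Lieb inequality $\sum_{a\in\C}\prod_i g_i(a_i)\le\prod_i\|g_i\|_{n/k}$ with $k=n-t+1$. It is proved by H\"older over the $n$ cyclic windows of $k$ coordinates, each of which is an information set of $\C$. The paper applies it to the tilted functions $g_i(0)=1$ and $g_i(a)=\lambda h_i(a)$ for $a\ne0$, and uses the dual distance $t$ to extract $\lambda^{-t}$. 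Optimizing $\lambda$ by weighted AM--GM certifies exponential security whenever $(2/\pi)^{Q-2}<(Q-1)Q^{-Q/(Q-1)}$ with $Q=1/(1-\tau)$; the root of this equation is $\tau_\star$.

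Finally, invoking \Cref{cor:direct-product} is circular, because that corollary is the formal version of the statement you are proving. Your claim that your exponent changes sign exactly at $\tau_\star$ is also unsupported. That constant comes from the budget $1$ and the bound $2/\pi$ of the trace-norm route, not from your constants $3$, $\sqrt{3/2}$ and $1+\sqrt3$.
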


\begin{theorem}[Informal; quantum leakage from disjoint entangled blocks, no entanglement with the adversary]
  \label{thm:intro-blocks}
Fix a constant $b$. Partition the leakage devices into disjoint blocks
of at most $b$ devices each. Devices may share entanglement within a block.
There is no shared quantum state between blocks or with the adversary.
A block containing $m$ shares emits an $m$-qubit state.
There exists a threshold $\tau_\star(b)<1$, depending only on $b$,
such that, for every fixed threshold rate $\tau>\tau_\star(b)$,
the distinguishing advantage is $\operatorname{negl}(n)$, even under a
joint measurement of all outputs. \Cref{tab:block-thresholds} lists
approximations to sufficient thresholds when all blocks have the same size.
\end{theorem}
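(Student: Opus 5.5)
We will prove the block statement by the Fourier-analytic strategy used for the unentangled case, now applied with the block as the unit instead of the single device. Fix two secrets $s,s'$ and let $\rho_s$ denote the joint output state. Because there is no entanglement between blocks or with the adversary, $\rho_s=\E_{f}\bigotimes_{B}\Phi_B\bigl((f(x_i))_{i\in B}\bigr)$, where $\Phi_B$ maps the classical tuple of shares in block $B$ to an $m_B$-qubit state with $m_B=|B|\le b$. The advantage is $\tfrac12\|\rho_s-\rho_{s'}\|_1$. We bound this by $\tfrac12 2^{n/2}\|\rho_s-\rho_{s'}\|_2$ (the total output dimension is at most $2^n$) and control the Hilbert--Schmidt norm through a Pauli expansion.

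\textbf{Step 1 (Pauli--Fourier expansion).} For each block, expand $\Phi_B(y)=2^{-m_B}\sum_{P\in\Pau_{m_B}} c_P(y)P$ with real coefficients $|c_P(y)|\le1$. Then take the finite-field Fourier transform in $y\in\F_p^{B}$, writing $\hat c_P(\alpha)$ for $\alpha\in\F_p^{B}$. Expanding the tensor product shows that the coefficient of $\bigotimes_B P_B$ in $\rho_s-\rho_{s'}$ equals a sum over dual-code characters. Concretely, it is $\sum_{\alpha}\prod_B\hat c_{P_B}(\alpha_B)\,(\omega^{\ip{\alpha}{\cdot}}\text{-difference})$, restricted to $\alpha\in\F_p^n$ such that the character $y\mapsto\omega^{\ip{\alpha}{y}}$ is nonconstant on the coset of the secret. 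These are exactly the nonzero dual codewords of the Shamir code punctured at zero. The dual is a generalized Reed--Solomon code, so it is MDS with minimum distance $n-t+1$, and every contributing $\alpha$ has weight at least $n-t+1$. Its support therefore touches at least $(n-t+1)/b$ blocks.

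\textbf{Step 2 (per-block bounds).} For each block and Pauli $P$, we need the analogue of the single-device estimate behind \cref{cor:direct-product}. That is, we need a bound of the form $\sum_{\alpha_B\ne0}|\hat c_P(\alpha_B)|\le \gamma_b$ (in $\ell_1$ or a suitable $\ell_2$ variant), together with a uniform bound on $|\hat c_P(0)|$. The quantities $\sum_P\hat c_P(\alpha_B)^2$ are bounded by Parseval, since $\sum_P c_P(y)^2=2^{m_B}\Tr\Phi_B(y)^2\le 2^{m_B}$. Summing over dual codewords then reduces to a weight-enumerator calculation for MDS codes. The number of dual codewords with a given support of size $w$ is at most $(p-1)^{w-(n-t)}$ by the MDS property. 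This yields a bound of the form $\sum_{w\ge n-t+1}\binom{n}{w}(\text{per-block factor})^{w/b}(p-1)^{w-(n-t)}\cdot 2^{n/2}\cdots$.

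\textbf{Step 3 (threshold).} Optimizing over the support pattern, the bound becomes $2^{-\Omega(n)}$ once $\tau$ exceeds some $\tau_\star(b)<1$. Such a threshold exists because, as $\tau\to1$, the forced support weight $n-t+1$ shrinks relative to the $2^{n/2}$ and $p^{n-t}$ losses only linearly, while the per-block decay is a fixed factor strictly below the trivial one. Explicitly computing $\tau_\star(b)$ by solving the resulting entropy-type inequality gives \cref{tab:block-thresholds}.

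\textbf{Main obstacle.} The main obstacle is Step 2: obtaining a per-block Fourier decay factor that is strictly better than trivial and independent of $p$. The trivial bound, where $\sum_\alpha|\hat c|$ can be as large as $p^{|B|/2}$, is useless against the $(p-1)^{w}$ codeword count. The first thing to try is the paper's single-qubit argument lifted block-wise. We bound $\sum_{\alpha_B}|\hat c_P(\alpha_B)|^2$ rather than the $\ell_1$ norm and apply Cauchy--Schwarz across codewords with a fixed support, using the MDS property that any $n-t$ coordinates of the dual code are free. This converts the count $(p-1)^{w-(n-t)}$ into an $\ell_2$ sum that Parseval controls. If that is too lossy, we fall back to the crude approach of treating each block as one device with output dimension $2^{b}$ over the alphabet $\F_p^{b}$. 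This gives a weaker but still sub-unity threshold, which suffices for the existence claim.
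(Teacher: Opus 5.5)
There is a genuine gap, and it starts with the code in Step~1. The frequencies that survive averaging over $\C_s=\C_0+s\one$ are those whose character is \emph{constant} on the coset, i.e.\ $\alpha\in\C=\C_0^\perp$ (\Cref{fact:character}). The difference $\rho_{s_0}-\rho_{s_1}$ is governed by the nonzero ones. Since $\C_0$ has dimension $t-1$, $\C$ is an $[n,n-t+1,t]$ MDS code. So $n-t+1$ is its dimension, not its distance, and a fixed support of size $w$ carries at most $(p-1)^{w-t+1}$ codewords, not $(p-1)^{w-(n-t)}$.

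This is exactly the mechanism that makes high threshold rates secure. Every nonzero frequency has weight at least $t$, so it meets at least $\lceil t/b\rceil$ blocks, a number that grows with $\tau$. With your weight bound $n-t+1$, the forced number of active blocks tends to zero as $\tau\to1$. The heuristic in Step~3 therefore cannot produce a threshold $\tau_\star(b)<1$. Your weight-enumerator sum also keeps a factor growing with $p$, while the claim must hold uniformly for all primes $p>n$.

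The more fundamental problem is the obstacle you yourself flag in Step~2: it is left open, and the tools you propose cannot close it for $b\ge2$. After the $2^{n/2}$ trace-to-Hilbert--Schmidt conversion, a block enters through $W_B(u)=\sum_P|\widehat c_P(u)|^2$. Parseval and conjugate symmetry give only $\sum_uW_B(u)\le2^{m}$, $W_B(0)\ge1$, and $W_B(u)\le(2^{m}-1)/2$ for $u\ne0$. For $m\ge2$ this upper bound is at least $3/2$, so a nonzero block frequency is not penalized at all. These constraints allow a single codeword meeting $w$ blocks of size $b$ to contribute $((2^{b}-1)/2)^{w/2}$. Your fallback of treating a block as one device with output dimension $2^b$ is precisely this situation, since the single-qubit estimate $W(a)\le1/2$ is special to $d=2$.

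The missing idea is to drop the Pauli/Hilbert--Schmidt route and work with trace norms of the matrix-valued coefficients $\widehat\sigma_j(u)$:
\begin{itemize}
\item The advantage is at most $\sum_{a\in\C\setminus\{0\}}\prod_j\|\widehat\sigma_j(a_{B_j})\|_1$, with no $2^{n/2}$ loss, and $\|\widehat\sigma_j(0)\|_1=1$ exactly.
\item For $u\ne0$, write $\widehat\sigma(u)=\frac1p\sum_{v\in\F_p}\omega^{-v}\tau_v$ with density matrices $\tau_v$. Trace-norm duality reduces $\|\widehat\sigma(u)\|_1$ to a discrete average of $\max_\ell\cos(\theta-\theta_\ell)$ over $p$-th roots of unity. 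A Riemann-sum comparison and Jensen bound this by $\frac d\pi\sin(\pi/d)+\frac\pi p<1$ (\Cref{lem:block-fourier-budget}).
\item These $p$-uniform bounds, together with $\sum_{u\ne0}\|\widehat\sigma(u)\|_1^2\le d-1$, are summed over $\C$ by the block MDS Brascamp--Lieb inequality (\Cref{lem:unequal-block-bl}, exponents $q_j=n/(n-t+m_j)\approx1/(1-\tau)$). Before that, nonzero block frequencies are tilted by $\lambda\ge1$ to extract a factor $\lambda^{-\lceil t/b\rceil}$.
\end{itemize}
No counting of codewords in terms of $p$ is needed.
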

The same argument as above allows separate fixed bounds $b$ on input shares and
$\ell$ on output qubits per block, with a sufficient threshold depending
on both (see \Cref{sec:bl}).

\begin{theorem}[Informal; quantum leakage with devices entangled with the adversary]
  \label{thm:intro-ent}
Consider an $(n,t)$-Shamir secret sharing scheme over the finite field $\F_p$ with threshold rate
$t/n=\tau$. We now let $r$ of the $n$ devices share arbitrary entanglement with one another
and with the adversary, while the remaining devices do unentangled local leakage.  Let $r/n=\delta<1$, with $1-\delta$ bounded below
by a positive constant.  Each of these $r$ devices can now use its part of the shared entanglement (with the adversary) along with the knowledge of its share to compute the leaked qubit.  The distinguishing
advantage is $\operatorname{negl}(n)$ if the following inequality holds with a fixed positive gap:
\begin{equation*}
  \frac{\tau-\delta}{1-\delta}>\tau_\star.
\end{equation*}
In particular, security holds for $r=o(n)$ whenever $\tau>\tau_\star$ with a fixed positive gap, and also for $r=\alpha n$ with sufficiently
small constant $\alpha>0$ for sufficiently large $\tau$ (see \Cref{tab:support-rates}).
\end{theorem}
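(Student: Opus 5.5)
The plan is to reduce \cref{thm:intro-ent} to the unentangled case, \cref{thm:intro-product}, by conditioning on the shares of the $r$ entangled devices. Let $R\subseteq[n]$ with $|R|=r$ be the entangled devices and $U=[n]\setminus R$ the others. The first step is to give the adversary strictly more power: we let it learn the full shares $(f(x_i))_{i\in R}$ in the clear. Any joint operation of the devices in $R$ on their shares and their entanglement with the adversary can then be simulated by the adversary itself. Hence the adversary's view is a post-processing of the pair consisting of the classical values $(f(x_i))_{i\in R}$ and the unentangled leakage state $\bigotimes_{i\in U}\Phi_i(f(x_i))$. By monotonicity of trace distance, it suffices to bound the distinguishing advantage of this enlarged view.

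The second step is to condition on the revealed shares. The total advantage is an average over the value $z$ of the revealed shares, and it is bounded by the advantage given $z$. There are two cases.

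\begin{itemize}
\item If $r\ge t$, the statement is vacuous, since the hypothesis forces $\tau>\delta$.
\item If $r<t$, then $(f(x_i))_{i\in R}$ has the same distribution for both secrets, because Shamir sharing is $(t-1)$-private.
\end{itemize}

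So in the relevant case it suffices to bound, for each fixed $z$, the advantage of the unentangled leakage on $U$ conditioned on $f(x_i)=z_i$ for $i\in R$.

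The third step identifies this conditional distribution. Conditioned on $f(0)=s$ and $f|_R=z$, write $f=g+h\cdot\prod_{j\in R}(X-x_j)$. Here $g$ is the unique interpolant of degree $\le r$ through $(0,s)$ and $(x_j,z_j)_{j\in R}$, and $h$ is uniform of degree $<t-r$. Fix $z$ and compare $s$ with $s'$. The two conditional share vectors on $U$ differ by a fixed shift $c\cdot L(x_i)$, where $L$ is the Lagrange polynomial vanishing on $R$ with $L(0)=1$. They also carry the common random term $\lambda_i h(x_i)$ with $\lambda_i=\prod_{j\in R}(x_i-x_j)\ne0$.

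Rescale each share of party $i\in U$ by $\lambda_i^{-1}$, which can be absorbed into the local channels $\Phi_i$. Then the vector on $U$ becomes $h(x_i)+(\text{fixed offset depending on }s,z)$. The dependence on $s$ is $s\cdot L(x_i)\lambda_i^{-1}$, which equals $s\cdot\kappa\cdot\tilde h_0(x_i)$ for a polynomial of degree $\le$ ...; more simply, the combined vector $(f(x_i))_{i\in U}$ ranges over a coset of a generalized Reed--Solomon code of dimension $t-r$ and length $n-r$. The difference between the two secrets corresponds to a direction outside that code, lying inside an MDS code of dimension $t-r+1$.

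The Fourier/MDS analysis behind \cref{thm:intro-product} and \cref{cor:direct-product} uses only two things:
\begin{itemize}
\item the MDS property of the dual code, which controls the Fourier support;
\item invariance under per-coordinate scalings and fixed offsets, which can be absorbed into the channels.
\end{itemize}
It therefore applies to this $(n-r,t-r)$ instance. That gives advantage $2^{-\Omega(n-r)}=2^{-\Omega(n)}$ whenever $(t-r)/(n-r)=(\tau-\delta)/(1-\delta)$ exceeds $\tau_\star$ by a fixed gap. Averaging over $z$ and using the triangle inequality finishes the proof. The corollaries for $r=o(n)$ and $r=\alpha n$ with large $\tau$ follow by direct computation.

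The main obstacle is the third step: checking that the proof of \cref{cor:direct-product} really is stated, or can be restated, for arbitrary GRS/MDS codes with arbitrary coordinate multipliers and a shifted secret direction, rather than only for plain Shamir evaluations. I would first attempt to exhibit the conditional instance exactly as a Shamir sharing of degree $<t-r$ on the points $\{x_i\}_{i\in U}$, with rescaled channels, so the earlier theorem applies verbatim. If the offset structure blocks this, I would instead reprove the Fourier bound directly for the coset, using the fact that the dual of a GRS code is GRS and hence MDS with minimum distance $t-r+1$.
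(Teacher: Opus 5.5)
Your proposal follows the paper's proof (\Cref{thm:ent-finite} and \Cref{cor:ent-region}), in the same order:
\begin{itemize}
\item reveal the shares on the entangled set and invoke data processing, since the adversary can prepare the devices' joint output and its reference register from $z$ and the fixed strategy;
\item use $r<t$ to see that $z$ is uniform under both secrets, so the advantage is the $z$-average of conditional advantages;
\item recognise the conditional instance as a coordinatewise affine image of an $(n-r,t-r)$ Shamir sharing, and absorb the affine maps into the one-qubit channels;
\item apply the $b=1$ bound at rate $(t-r)/(n-r)$.
\end{itemize}

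The one thing to correct is your third step. Write $L_R(X)=\prod_{j\in R}(X-x_j)$. Your $g$ already satisfies $g(0)=s$, and $L_R(0)\neq0$, so the constraint $f(0)=s$ forces $h(0)=0$. Thus $h$ is uniform over polynomials of degree $<t-r$ vanishing at $0$, not over all of them.

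Consequently, for a fixed secret the shares on $U$ lie in a coset of a code of dimension $t-r-1$, not $t-r$. The dual code entering the Fourier bound is MDS with minimum distance $t-r$, not $t-r+1$. The rate $(t-r)/(n-r)$ you state is nevertheless the right one.

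Your own formula also removes the ``main obstacle'': $L(x_i)\lambda_i^{-1}=1/L_R(0)$ does not depend on $i$. After dividing share $i$ by $\lambda_i=L_R(x_i)$, the vector on $U$ is $g_0(x_i)/\lambda_i+\bigl(h+s/L_R(0)\bigr)(x_i)$, where $g_0$ interpolates $(0,0)$ and $(x_j,z_j)_{j\in R}$. This is an exact Shamir sharing of threshold $t-r$ on $\{x_i\}_{i\in U}$ with secret $s/L_R(0)$, plus an offset depending only on $z$. That is precisely \Cref{lem:conditional-shamir}, so the unentangled theorem applies verbatim and the GRS fallback is unnecessary.
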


\noindent\emph{{Remark:}} The security guarantees for quantum leakage resilience also extend to a combination of the two models. \textbf{The strongest leakage model that we handle is the following: a set of leakage devices of linear size may share arbitrary entanglement with the adversary while
the remaining devices may form blocks of constant size independent of each other and the adversary's entangled block.} Security follows whenever the
residual rate $(t-r)/(n-r)$ exceeds the corresponding block
threshold by a fixed positive margin.

\begin{theorem}[Informal; a classical one-bit-leakage attack with entanglement]
  \label{thm:intro-ghz}
Consider an $(n,t)$-Shamir secret sharing scheme over the finite-field $\F_p$. There is a classical single-bit leakage  attack using a GHZ state shared by $t$ devices such that when each device leaks just
one classical bit, the  distinguishing advantage for two secrets is at least
$\cos^2(\pi/(2p))=1-O(p^{-2})$.
\end{theorem}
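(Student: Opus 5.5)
The plan is a phase-kickback attack. Fix $t$ parties $T$ with evaluation points $x_i$, and let $\lambda_i$ be the Lagrange coefficients for reconstructing $f(0)$, so that $s=\sum_{i\in T}\lambda_i s_i \pmod p$. We want two secrets $s_0\neq s_1$ to produce orthogonal, or nearly orthogonal, states.

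Let the $t$ devices share the qubit GHZ state $(|0\cdots0\rangle+|1\cdots1\rangle)/\sqrt2$. Device $i$ applies the diagonal phase $Z(\theta_i)=\mathrm{diag}(1,e^{i\theta_i})$ with $\theta_i=\frac{2\pi}{p}\cdot\widetilde{\lambda_i s_i}$. Here $\widetilde{a}\in\{0,\dots,p-1\}$ denotes the integer representative of $a\in\F_p$. The joint state becomes
\[
\bigl(|0\cdots0\rangle+e^{i\Theta}|1\cdots1\rangle\bigr)/\sqrt2,\qquad \Theta=\tfrac{2\pi}{p}\textstyle\sum_i\widetilde{\lambda_i s_i}.
\]
Because $e^{2\pi i k/p}$ depends only on $k\bmod p$, we get $e^{i\Theta}=e^{2\pi i\widetilde{s}/p}$, a function of the secret alone.

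To make the leakage classical and one bit per device, each device applies a further rotation $Z(-\phi/t)$, for a global offset $\phi$ chosen per pair of secrets, followed by a Hadamard, and then measures in the computational basis. Device $i$ outputs the bit $b_i$. A standard GHZ computation shows that the parity $\bigoplus_i b_i$ equals $0$ with probability $\cos^2((\Theta-\phi)/2)$. Every operation is local to device $i$ and depends only on $s_i$ and public data, so this is a valid local one-bit leakage with preshared entanglement among the devices only. The adversary just computes the parity.

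Now choose $\phi$ to separate $s_0$ from $s_1$. Let $\Delta=2\pi(\widetilde{s_1}-\widetilde{s_0})/p$ be the phase difference. Choosing $\phi$ so that the two phases lie symmetrically about $\pi/2$ gives acceptance probabilities $\cos^2(\pi/4-\Delta/4)$ and $\cos^2(\pi/4+\Delta/4)$, whose difference is $\sin(\Delta/2)$, equivalently $|\cos(\Delta/2+\cdots)|$.

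The main obstacle is the worst case: when $\Delta$ is close to $\pi$, which is fine, versus close to $0$ mod $2\pi$, which is bad. To reach $\cos^2(\pi/(2p))$ for every pair, I would instead use a scaling. Replace $\lambda_i$ by $c\lambda_i$ with $c=(s_1-s_0)^{-1}\cdot\frac{p-1}{2}$ in $\F_p$. Then the phase difference becomes $2\pi\frac{(p-1)/2}{p}=\pi-\pi/p$. With $\phi$ centred, the advantage is $\sin(\pi/2-\pi/(2p))=\cos(\pi/(2p))$.

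Squaring enters if the bound is stated via the success probability $\frac12+\frac12\cdot$advantage, or if one uses the optimal offset. I would check which convention matches $\cos^2(\pi/(2p))$: with the offset aligning $s_0$ to phase $0$, $s_0$ gives parity $0$ with certainty, while $s_1$ gives parity $0$ with probability $\cos^2((\pi-\pi/p)/2)=\sin^2(\pi/(2p))$. The statistical distance is then $1-\sin^2(\pi/(2p))=\cos^2(\pi/(2p))$. This is the stated bound, so I would take this offset.

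Finally, I would note that fewer than $t$ shares are independent of $s$, so the attack genuinely needs all $t$ entangled devices. No entanglement with the adversary is used, and $1-\cos^2(\pi/(2p))=\sin^2(\pi/(2p))=O(p^{-2})$.
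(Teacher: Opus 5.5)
Your proposal is correct and is essentially the paper's argument: Lagrange-weighted GHZ phase kickback, local $X$-basis measurements whose parity the adversary computes, rescaling by $k$ with $k(s_1-s_0)=(p-1)/2$, and an offset phase that places $s_0$ at phase $0$, which gives exactly $\cos^2(\pi/(2p))$. One correction to your side remark: the square does not come from an optimal offset, because your symmetric offset is actually the better choice (it attains the trace distance $\sin(\Delta/2)$ of the two pure states) and gives $\cos(\pi/(2p))\geq\cos^2(\pi/(2p))$, so either offset proves the stated bound.
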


Our results strengthen the security guarantees of unmodified Shamir sharing without changing its sharing or reconstruction algorithms. Together, the security bounds and the above attack show that quantum leakage resilience depends on both the size of the leaked outputs and the entanglement available to the leakage devices.

\subsection{Techniques}

Numerous works on classical leakage-resilient Shamir secret sharing use Fourier analysis techniques for proving security~\cite{BDIR2021,MajiEtAlCrypto2021,KleinKomargodski2023}. In the classical one-bit leakage model, each local leakage function
$f:\mathbb{F}_p\to\{0,1\}$ is Boolean. The identity $f^2=f$,
together with Parseval's identity and conjugate symmetry, gives
useful constraints on its scalar Fourier coefficients.

For quantum leakage, a one-qubit output can be described by three
real Pauli coefficient functions satisfying the joint constraint
$r_X(y)^2+r_Y(y)^2+r_Z(y)^2\leq1$.
These functions still satisfy Parseval's identity and conjugate
symmetry. Deriving suitable Fourier bounds from their joint constraint and using them to control the distinguishing advantage becomes more complicated than the classical single-bit leakage case. We use Pauli
orthogonality and then sharpen the analysis by bounding the trace
norms of matrix-valued Fourier coefficients directly. Proving leakage resilience against quantum adversaries that may also share entanglement with the leakage devices is an additional challenge that we need to take care of.

We adapt a matrix-valued Fourier analysis argument that lets us deal with quantum leakage. We write the qubit leaked by device
$i$ in its Bloch expansion,
\begin{equation*}
  \rho_i(y)=\frac12\bigl(I+r_{i,X}(y)X+r_{i,Y}(y)Y+r_{i,Z}(y)Z\bigr),
\end{equation*}
where $r_{i,X}(y),r_{i,Y}(y),r_{i,Z}(y)\in\mathbb{R}$ and
$r_{i,X}(y)^2+r_{i,Y}(y)^2+r_{i,Z}(y)^2\leq1$. Let
$r_{i,I}(y)=1$ and $
  W_i(a)=\sum_{P\in\{I,X,Y,Z\}}|\widehat r_{i,P}(a)|^2,$ 
where the hat denotes the Fourier transform over $\F_p$.
Parseval's identity and conjugate symmetry give
\begin{equation*}
  \sum_{a\in\F_p}W_i(a)\leq2,
  \qquad W_i(a)\leq\frac12\quad(a\neq0).
\end{equation*}
The shares for a secret $s$ are uniform on the affine coset
$\C_s=\C_0+s\one$, where $\C_0$ is the space of zero-secret sharings.
Using Pauli orthogonality, we bound the distinguishing advantage
by the following sum (similar to the idea in \cite{BDIR2021}):
\begin{equation*}
 \Adv(\rho_{s_0},\rho_{s_1})
  \leq \sum_{a\in\C_0^\perp\setminus\{0\}}
       \prod_{i=1}^n\sqrt{W_i(a_i)}.
\end{equation*}
We bound this sum using Cauchy-Schwarz and the elementary properties
of the dual MDS code. The code $\C_0^\perp$ has minimum distance $t$,
so every nonzero codeword has at most $n-t$ zero coordinates.
This limits the number of factors in the product that can exceed
$1/\sqrt{2}$: a nonzero coordinate contributes a factor at most
$1/\sqrt{2}$, whereas a zero coordinate contributes a factor at most
$\sqrt{2}$.
Combining this with Cauchy-Schwarz on suitable blocks
of coordinates gives a simple proof of security for threshold
rates above $5/6$.

For the stronger bound, we analyze independent blocks of leakage devices.
Let $B_1,\ldots,B_M$ be a partition of the devices into blocks of
sizes $m_j=|B_j|\leq b$. Each block may apply an arbitrary joint map
to its shares and output an $m_j$-qubit state
$\sigma_j(y_{B_j})$. There is no entanglement between different
blocks or with the adversary. We apply the Fourier transform directly to these density matrices and
write
$
  h_j(u)=\|\widehat\sigma_j(u)\|_1.$

As in the elementary argument, averaging over the randomness of the shares
leaves only frequencies in the dual code. Fourier inversion and the
triangle inequality then give (\Cref{lem:direct-block-fourier}):
\[
  \Adv(\rho_{s_0},\rho_{s_1})
  \leq
  \sum_{a\in\C_0^\perp\setminus\{0\}}
       \prod_{j=1}^M h_j(a_{B_j}).
\]
Thus we need to control a sum of products of local Fourier trace norms.

We obtain analogous bounds using Parseval's identity, trace-norm duality,
and elementary trigonometric estimates
(\Cref{lem:block-fourier-budget}):
\[
  \begin{gathered}
    h_j(0)=1,\qquad
    \sum_{u\neq0}h_j(u)^2\leq2^{m_j}-1,\\[4pt]
    h_j(u)\leq\kappa_{m_j}+\frac{\pi}{p}
    \qquad(u\neq0),
  \end{gathered}
\]
where
\[
  \kappa_m=\frac{2^m}{\pi}
  \sin\left(\frac{\pi}{2^m}\right)<1.
\]
 The block MDS
Brascamp-Lieb inequality (\Cref{lem:unequal-block-bl}) combines these local estimates into a bound
on the sum over dual codewords. We also use the minimum distance of the dual code as before. Every nonzero
codeword has at least $t$ nonzero coordinates and therefore meets
at least $\lceil t/b\rceil$ blocks. To incorporate this information,
we multiply each factor corresponding to a nonzero block frequency by a parameter $\lambda\geq1$.
Every product gains a factor of at least
$\lambda^{\lceil t/b\rceil}$, so we can divide the resulting bound
by this factor. Optimizing $\lambda$ gives exponential security
when the threshold rate is sufficiently large. 

For every
fixed maximum block size $b$, the sufficient threshold is strictly
less than one. The sharper thresholds when the blocks are of the same size are
listed in \Cref{tab:block-thresholds}.
Taking $b=1$ gives exponentially small distinguishing advantage
against unentangled one-qubit leakage whenever $t/n$ exceeds
$\tau_\star\approx0.73339$ by a fixed positive margin.

For devices in a specified set $E\subset[n]$, with $|E|=r<t$,
sharing entanglement with one another and with the adversary, we first give
the adversary the complete classical shares $Y_E$. From these shares and
the fixed leakage strategy, it can prepare the joint state of the supported
devices' outputs and its reference register. This channel is
independent of the secret, so data processing bounds the original
distinguishing advantage by that in the full-revelation experiment. Since $r<t$, the revealed shares $Y_E$ are uniform and independent
of the secret. Conditioned on $y_E$, the remaining shares
are coordinatewise affine images of a Shamir sharing with $n-r$
parties and threshold $t-r$. We can now apply the unentangled bound
on this smaller Shamir secret sharing scheme to get the desired bound.

Finally, we give an attack that adapts the GHZ phase-encoding and parity technique of
Buhrman et al.~\cite{BuhrmanVanDamHoyerTapp1999} to the Shamir scheme. For a reconstruction set $T$ of
size $t$, write $s=\sum_{i\in T}\lambda_i y_i$ and set $\omega=e^{2\pi i/p}$. Starting from a
$t$-qubit GHZ state, device $i$ applies the phase
$\omega^{\lambda_i y_i}$ to its $|1\rangle$ component. The resulting
state is
\[
  \frac{|0^t\rangle+\omega^s|1^t\rangle}{\sqrt2}.
\]
The product of the local Pauli-$X$ measurement outcomes,
represented as $\pm1$, has expectation $\cos(2\pi s/p)$. For the secrets
$0$ and $(p-1)/2$, this gives distinguishing advantage
$\cos^2(\pi/(2p))$. Rescaling the local phases and adding a fixed phase
at one device gives the same advantage for any chosen pair of distinct
secrets. Thus entanglement enables a classical-output attack even though
each device emits only one bit.
\subsection{Related work}

Benhamouda, Degwekar, Ishai, and Rabin initiated the study of local
leakage resilience for standard linear secret-sharing schemes and proved security
for Shamir sharing over large prime-order fields~\cite{BDIR2021}. Since then, a significant amount of research has been done on classical leakage-resilience of Shamir secret sharing. 
Maji et al. constructed new locally leakage-resilient linear secret-sharing
schemes and sharpened the earlier one-bit analysis for unmodified Shamir
sharing~\cite{MajiEtAlCrypto2021}.  Subsequent works lowered the sufficient
worst-case threshold to approximately $0.668n$~\cite{MajiEtAl2022,KleinKomargodski2023,Kasser2024}.  Nguyen proved
security at every fixed positive rate for almost every tuple of one-bit leakage
functions over sufficiently large prime fields, and gave a two-bit leakage attack
when $t=O(\sqrt n)$ and $p=\Theta(n)$~\cite{Nguyen2024}.  

In the quantum setting, leakage resilience has attracted considerable
interest. There have been a series of works that study unbounded classical leakage from
quantum encodings using tools from unclonable cryptography
\cite{CakanEtAl2023,CakanGoyal2024,CakanGoyal2025}.
Ananth, Kaleoglu, and Yuen construct secret-sharing schemes with quantum
shares that resist bounded classical leakage, using Haar-random states
and quantum state designs~\cite{AnanthKaleogluYuen2024}.

Boddu, Goyal, Jain, and Ribeiro construct classical-share schemes
resilient to bounded local quantum leakage, allowing preshared
entanglement and full exposure of an unauthorized set
\cite{BodduEtAl2025}.
Bergamaschi and Boddu also construct schemes for bounded joint quantum
leakage from one unauthorized set to another
\cite{BergamaschiBoddu2026}. These non-linear schemes are extractor-based and more complicated than the traditional Shamir scheme. Our goal is to understand the quantum leakage resilience of the
 widely used unmodified Shamir scheme. 

In a different line of work, Sun and Wootters connected local leakage resilience of secret sharing to
worst-case optimal polynomial intersection~\cite{SunWootters2026}.  Horinaga and
Yamakawa used an MDS Brascamp-Lieb inequality in this setting and gave a
worst-case quantum algorithm for optimal polynomial intersection
\cite{HorinagaYamakawa2026}. We combine a block version of their MDS Brascamp-Lieb inequality
with Fourier estimates for matrix-valued
functions to analyze quantum leakage from Shamir sharing.

\paragraph{Organization.}
\Cref{sec:prelim} introduces the preliminaries and notation.
\Cref{sec:reduction} combines Pauli expansions with Fourier analysis
to bound the distinguishing advantage by a weighted sum over the dual
Shamir code. \Cref{sec:cauchy} gives an elementary bound on this sum using
Cauchy-Schwarz and the MDS properties of the code. \Cref{sec:bl} obtains
security for disjoint entangled blocks using the MDS Brascamp-Lieb
inequality, and gives a sharper unentangled bound by setting $b=1$.
\Cref{sec:entanglement} establishes security against entangled leakage
on a specified support and presents the GHZ attack.
We conclude with some open questions in \cref{sec:comparison}.

\section{Preliminaries}
\label{sec:prelim}

\paragraph{Parameters and notation.}
Let $1\leq t\leq n<p$, where $p$ is an odd prime, and fix distinct nonzero
evaluation points $x_1,\ldots,x_n\in\F_p^\times$.  Set
$\omega=e^{2\pi i/p}$.  Inner products $\ip{a}{y}$ over $\F_p^m$
are evaluated in the field.  All logarithms used for rates have base two.
Throughout the paper, all registers that we use are finite-dimensional.

\begin{definition}[Fourier transform]
  \label{def:fourier}
For $g:\F_p^m\to\mathbb C$, its Fourier transform and inversion
formula are given by
\begin{equation*}
  \widehat g(a)=\E_{y\in\F_p^m}g(y)\omega^{-\ip{a}{y}},
  \qquad
  g(y)=\sum_{a\in\F_p^m}\widehat g(a)\omega^{\ip{a}{y}}.
\end{equation*}
\end{definition}

\begin{theorem}[Parseval's identity]
  \label{thm:parseval}
For every $g:\F_p^m\to\mathbb C$,
\begin{equation*}
  \sum_a|\widehat g(a)|^2=\E_y|g(y)|^2.
\end{equation*}
\end{theorem}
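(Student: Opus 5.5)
We prove Parseval's identity directly from \cref{def:fourier}, by expanding the squared modulus of the Fourier coefficients and using orthogonality of the characters $y\mapsto\omega^{\ip{a}{y}}$ on $\F_p^m$.

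\textbf{Step 1: character orthogonality.} We first establish that for every $c\in\F_p^m$,
\[
\sum_{a\in\F_p^m}\omega^{\ip{a}{c}}=p^m\,[c=0].
\]
If $c=0$ every term equals $1$. If $c\neq0$, pick a coordinate $c_k\neq0$. Factor the sum over coordinates. The $k$-th factor is $\sum_{a_k\in\F_p}\omega^{a_kc_k}$. As $a_k$ ranges over $\F_p$, the product $a_kc_k$ ranges over all of $\F_p$, so this factor equals $\sum_{j=0}^{p-1}\omega^j=0$, since $\omega\neq1$ is a $p$-th root of unity. Here we use that $p$ is prime, so $\omega^{\ip{a}{c}}$ is well defined on field elements.

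\textbf{Step 2: expand and swap sums.} Write
\[
|\widehat g(a)|^2=\widehat g(a)\overline{\widehat g(a)}
=p^{-2m}\sum_{y,z}g(y)\overline{g(z)}\,\omega^{-\ip{a}{y}+\ip{a}{z}}
=p^{-2m}\sum_{y,z}g(y)\overline{g(z)}\,\omega^{\ip{a}{z-y}}.
\]
Sum over $a$ and interchange the finite sums. By Step 1, the inner sum $\sum_a\omega^{\ip{a}{z-y}}$ equals $p^m[y=z]$. This gives
\[
\sum_a|\widehat g(a)|^2=p^{-2m}\cdot p^m\sum_y|g(y)|^2=\E_y|g(y)|^2.
\]

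There is no real obstacle here. The only point requiring care is the normalization: the $\E$ in the forward transform supplies $p^{-2m}$, and character orthogonality supplies the factor $p^m$. Consistency with the inversion formula can be checked by the same orthogonality computation, though it is not needed for Parseval itself.
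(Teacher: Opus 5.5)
Your proof is correct and is the standard one. Expanding $|\widehat g(a)|^2$ as a double sum over $(y,z)$, interchanging the finite sums, and using $\sum_a\omega^{\ip{a}{z-y}}=p^m[y=z]$ gives $p^{-2m}\cdot p^m\sum_y|g(y)|^2=\E_y|g(y)|^2$ under the paper's normalization.

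The paper states this identity as a standard fact without proof. Your Step~1 is exactly the case $V=\F_p^m$, $b=0$ of \Cref{fact:character}, so you could simply cite it. One small correction to your remark: primality of $p$ is what makes $a_k\mapsto a_kc_k$ a bijection of $\F_p$. Well-definedness of $\omega^{\ip{a}{c}}$ only needs $\omega^p=1$.
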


\begin{fact}[Conjugate symmetry]
  \label{fact:conjugate}
For real-valued $g$,
\begin{equation*}
  \widehat g(-a)=\overline{\widehat g(a)},
  \qquad |\widehat g(-a)|^2=|\widehat g(a)|^2.
\end{equation*}
\end{fact}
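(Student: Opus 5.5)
The statement to prove is \cref{fact:conjugate}: for real-valued $g:\F_p^m\to\mathbb C$, we have $\widehat g(-a)=\overline{\widehat g(a)}$, and hence $|\widehat g(-a)|^2=|\widehat g(a)|^2$. The proof is a direct computation from \cref{def:fourier}, using only that $g(y)$ is real and that $\omega=e^{2\pi i/p}$ has modulus one.

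First, write out the definition at the frequency $-a$:
\[
  \widehat g(-a)=\E_{y\in\F_p^m} g(y)\,\omega^{-\ip{-a}{y}}
  =\E_{y} g(y)\,\omega^{\ip{a}{y}},
\]
using bilinearity of the field inner product, $\ip{-a}{y}=-\ip{a}{y}$ in $\F_p$. This is well defined because $\omega^p=1$, so the exponent may be read modulo $p$.

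Next, take the complex conjugate of $\widehat g(a)$. Conjugation is additive and the expectation is a finite average, so it commutes with $\E_y$. Since $g(y)\in\mathbb R$ and $|\omega|=1$, we have $\overline{g(y)}=g(y)$ and $\overline{\omega^{-k}}=\omega^{k}$. Therefore
\[
  \overline{\widehat g(a)}=\E_y g(y)\,\omega^{\ip{a}{y}},
\]
which matches the previous display and proves the first identity. The second identity follows by taking squared moduli, since $|\overline z|=|z|$.

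There is no real obstacle here. The only point needing care is the convention that $\ip{a}{y}$ is evaluated in $\F_p$ before exponentiating, and this is harmless because $\omega$ is a $p$-th root of unity.
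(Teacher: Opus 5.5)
Your proof is correct: evaluating the definition at $-a$, using $\ip{-a}{y}=-\ip{a}{y}$ in $\F_p$ with $\omega^p=1$, and conjugating termwise with $g$ real gives exactly the claim. The paper states this as a standard fact without proof, and your direct calculation is the routine argument it relies on.
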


\begin{definition}
    For a linear subspace $V\subseteq\F_p^m$, define its orthogonal complement by $
  V^\perp:=\{a\in\F_p^m:\ip{a}{v}=0\text{ for every }v\in V\}.$
\end{definition}

\begin{fact}[Character average over an affine subspace]
  \label{fact:character}
If $V\subseteq\F_p^m$ is a linear subspace and $b\in\F_p^m$, then
\begin{equation*}
  \E_{y\leftarrow V+b}\omega^{\ip{a}{y}}
  =\begin{cases}
      \omega^{\ip{a}{b}},&a\in V^\perp,\\
      0,&a\notin V^\perp.
    \end{cases}
\end{equation*}
\end{fact}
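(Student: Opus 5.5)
The plan is to compute the average directly. Write $y=v+b$ with $v$ uniform on $V$, so that
\[
\E_{y\leftarrow V+b}\,\omega^{\ip{a}{y}}=\omega^{\ip{a}{b}}\;\E_{v\in V}\,\omega^{\ip{a}{v}},
\]
using bilinearity of $\ip{\cdot}{\cdot}$ over $\F_p$ and the fact that $\omega^{x}$ depends only on $x\bmod p$. The problem then reduces to the linear case $b=0$, and there are two cases.

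If $a\in V^\perp$, then $\ip{a}{v}=0$ for every $v\in V$. Every term in the average equals $1$, so the average is $1$ and the claimed value is $\omega^{\ip{a}{b}}$.

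If $a\notin V^\perp$, the map $\chi:V\to\F_p$, $v\mapsto\ip{a}{v}$, is a nonzero linear functional, so it is surjective onto $\F_p$. Its fibres are cosets of $\ker\chi$, which all have the same size $|V|/p$. Hence $\ip{a}{v}$ is uniformly distributed on $\F_p$ when $v$ is uniform on $V$, and
\[
\E_{v\in V}\,\omega^{\ip{a}{v}}=\frac1p\sum_{x\in\F_p}\omega^{x}=0,
\]
since $\sum_{x=0}^{p-1}\omega^x=(\omega^p-1)/(\omega-1)=0$ because $\omega\neq1$.

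An alternative for the second case is a shift-invariance argument. Pick $v_0\in V$ with $\ip{a}{v_0}\neq0$. Translating by $v_0$ is a bijection of $V$, so the average $S$ satisfies $S=\omega^{\ip{a}{v_0}}S$ with $\omega^{\ip{a}{v_0}}\neq1$, which forces $S=0$. Either argument works.

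There is no real obstacle here. The only point needing care is that the exponent is a field element while $\omega$ is a complex $p$-th root of unity, so one uses that $\omega^{x}$ is well defined on $\F_p$ and multiplicative in $x$.
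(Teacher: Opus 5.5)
Your proof is correct and complete. The substitution $y=v+b$ pulls out the factor $\omega^{\ip{a}{b}}$, and for $a\notin V^\perp$ the map $v\mapsto\ip{a}{v}$ is a nonzero, hence surjective, linear functional on $V$ with equal-size fibres, so the remaining average is the uniform average of all $p$-th roots of unity and vanishes.

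The paper states this as a standard fact without proof. Your argument is the standard one and matches how the paper handles such averages elsewhere: it factors out $\omega^{s\ip{a}{\one}}$ in the proof of \Cref{thm:hs}, and it counts the fibres of $z\mapsto\ip{a}{z}$ in \Cref{app:block-fourier-budget}.
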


\begin{fact}[H\"older's inequality]
  \label{fact:holder}
Let $\Omega$ be a finite set, let $R\geq1$ be an integer, and let
$F_1,\ldots,F_R:\Omega\to\mathbb R_{\geq0}$. Then
\[
  \sum_{x\in\Omega}\prod_{r=1}^R F_r(x)^{1/R}
  \leq\prod_{r=1}^R\left(\sum_{x\in\Omega}F_r(x)\right)^{1/R}.
\]
\end{fact}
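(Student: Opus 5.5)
We prove Hölder's inequality in this finite form (\cref{fact:holder}) by induction on $R$, using the two-function Hölder inequality. Alternatively, we can give a direct argument via the weighted AM--GM inequality, which is what we plan to use. Set $S_r=\sum_{x\in\Omega}F_r(x)$ for each $r$.

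First we dispose of the degenerate case. If some $S_r=0$, then $F_r\equiv0$ on $\Omega$ because the $F_r$ are nonnegative. Every term on the left then vanishes, so both sides equal $0$ and the inequality holds.

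Otherwise every $S_r>0$, and we normalize by setting $G_r=F_r/S_r$. Then each $G_r$ is nonnegative with $\sum_x G_r(x)=1$. Dividing both sides of the claim by $\prod_r S_r^{1/R}$ reduces it to showing
\[
  \sum_{x\in\Omega}\prod_{r=1}^R G_r(x)^{1/R}\leq 1 .
\]

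The key step is the unweighted AM--GM inequality applied pointwise: for nonnegative reals $g_1,\ldots,g_R$ we have $\prod_r g_r^{1/R}\leq \frac1R\sum_r g_r$. This follows from concavity of $\log$ (Jensen) when all $g_r>0$. If some $g_r=0$, the left side is $0$ and the inequality is trivial. Applying it with $g_r=G_r(x)$ and summing over $x\in\Omega$ gives
\[
  \sum_x\prod_r G_r(x)^{1/R}\leq\frac1R\sum_{r=1}^R\sum_x G_r(x)=\frac1R\cdot R=1,
\]
which is the normalized claim. Multiplying back by $\prod_r S_r^{1/R}$ gives the stated inequality.

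There is no real obstacle here. The only points needing care are the zero cases, both the vanishing $S_r$ and the vanishing factors inside AM--GM, and both are handled by nonnegativity as above.
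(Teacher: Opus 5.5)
Your proof is correct and complete. Normalizing by $S_r=\sum_x F_r(x)$, applying the $R$-term AM--GM pointwise and summing over $x$ is the standard argument. Your zero cases are handled properly, and an empty $\Omega$ falls under the $S_r=0$ case.

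The paper gives no proof of this fact; it states it as standard. So there is no paper route to compare against. Your argument fits the paper's toolkit: it uses unweighted AM--GM, which you derive from concavity of $\log$, whereas the paper's AM--GM fact covers only two terms. The only change worth making is to drop the opening mention of induction on $R$, since you never use it.
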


\begin{fact}[Weighted arithmetic-geometric mean inequality]
  \label{fact:amgm}
For $u,v\geq0$ and $0<\alpha<1$, we have (equality iff $u=v$):
\[
  (1-\alpha)u+\alpha v\geq u^{1-\alpha}v^\alpha.
\]
\end{fact}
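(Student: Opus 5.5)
The statement to prove is the weighted AM-GM inequality (\cref{fact:amgm}): $(1-\alpha)u+\alpha v\geq u^{1-\alpha}v^\alpha$ for $u,v\geq0$ and $0<\alpha<1$, with equality iff $u=v$. The plan is to use concavity of the logarithm, after first disposing of the boundary cases in which a logarithm is undefined.

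\emph{Boundary cases.} If $u=0$ or $v=0$, the right-hand side vanishes, because both exponents $1-\alpha$ and $\alpha$ are strictly positive. The left-hand side is nonnegative, so the inequality holds. Equality forces $(1-\alpha)u+\alpha v=0$. Since both weights are strictly positive, this means $u=v=0$, which is consistent with the equality clause.

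\emph{Main case $u,v>0$.} The function $\log$ is strictly concave on $(0,\infty)$, since its second derivative is $-1/x^2<0$. Concavity gives
\[
  \log\bigl((1-\alpha)u+\alpha v\bigr)\geq(1-\alpha)\log u+\alpha\log v=\log\bigl(u^{1-\alpha}v^\alpha\bigr).
\]
Exponentiating, which is monotone, yields the claimed inequality. Because the concavity is strict and both weights lie strictly between $0$ and $1$, equality holds exactly when the two points coincide, that is, when $u=v$.

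\emph{Alternative route.} If one prefers to avoid citing strict concavity, set $x=v/u$ and divide through by $u$. The claim becomes $\phi(x):=(1-\alpha)+\alpha x-x^\alpha\geq0$ for $x>0$. We have $\phi'(x)=\alpha(1-x^{\alpha-1})$, which is negative for $x<1$ and positive for $x>1$. So $\phi$ has its unique minimum at $x=1$, where $\phi(1)=0$. This gives both the inequality and the equality characterization.

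The only step needing care is the equality clause in the boundary cases, where one must use that both weights are strictly positive. The remaining steps are routine.
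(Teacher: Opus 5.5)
Your proof is correct. You handle the boundary cases properly: both weights are strictly positive, so equality there forces $u=v=0$. For $u,v>0$, strict concavity of $\log$, or equivalently your one-variable analysis of $\phi(x)=(1-\alpha)+\alpha x-x^\alpha$, gives both the inequality and the equality characterization.

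The paper states this as a standard fact without proof, so there is no competing argument to compare against. Your argument is the standard one. It also covers what the paper actually uses in the proof of \Cref{thm:constant-blocks}: the inequality with $u=1/(1-\alpha)$ and $v=Ax/\alpha$, with equality at $x_*=\alpha/((1-\alpha)A)$.
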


\subsection{Codes and Shamir's scheme}

\begin{definition}[MDS codes]
  \label{def:codes}
An $[n,k,d]_p$ linear code is a $k$-dimensional subspace of $\F_p^n$
whose minimum nonzero Hamming weight is $d$.  It is maximum distance separable
(MDS) if $d=n-k+1$.  
\end{definition}

\begin{definition}[Shamir's scheme]
  \label{def:shamir}
For sharing a secret $s\in\F_p$, we sample a uniformly random polynomial $f$ of degree
less than $t$, conditioned on $f(0)=s$, and give party $i$ the share
$y_i=f(x_i)$ for distinct, non-zero $x_i\in \F_p$.  Any $t$ shares reconstruct $f(0)$ by interpolation.
The zero-secret sharing space and the affine sharing coset are
\begin{equation*}
  \C_0=\{(f(x_1),\ldots,f(x_n)):f(0)=0,\ \deg f<t\},
  \qquad \C_s=\C_0+s\one,
\end{equation*}
where $\one=(1,\ldots,1)$.  The sharing vector for secret $s$ is uniform
over $\C_s$.  For $t\geq2$, the code $\C_0$ is an
$[n,t-1,n-t+2]_p$ MDS code; for $t=1$, $\C_0=\{0\}$.
\end{definition}

\begin{definition}[Dual Shamir code]
  \label{def:dual}
The dual Shamir code is
\begin{equation*}
  \C=\C_0^\perp
   =\{a\in\F_p^n:\ip{a}{c}=0\text{ for every }c\in\C_0\}.
\end{equation*}
\end{definition}

\begin{fact}[Dual distance]
  \label{fact:dual-distance}
Let $k=n-t+1$. The dual Shamir code $\C$ is an $[n,k,t]_p$
MDS code. In particular, $|\C|=p^k$, and every nonzero codeword
has Hamming weight at least $t$.
\end{fact}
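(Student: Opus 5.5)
We need to prove the Dual distance fact: the dual of $\C_0$ is an $[n,n-t+1,t]_p$ MDS code.

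The plan has two parts. First compute the dimension of $\C$. Second show that every nonzero $a\in\C$ has weight at least $t$. The MDS property and $|\C|=p^k$ then follow, since $t=n-k+1$.

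\emph{Dimension.} The map $f\mapsto(f(x_1),\ldots,f(x_n))$ sends polynomials with $\deg f<t$ and $f(0)=0$ to $\F_p^n$. Its domain is spanned by $X,X^2,\ldots,X^{t-1}$, so it has dimension $t-1$. The map is injective because $t-1<n$. If a nonzero such $f$ vanished at all $x_i$, then $f/X$ would be a nonzero polynomial of degree less than $t-1\le n$ vanishing at the $n$ distinct points $x_i$, which is impossible. Hence $\dim\C_0=t-1$. Since the standard bilinear form on $\F_p^n$ is nondegenerate, $\dim\C=n-(t-1)=n-t+1=k$, and so $|\C|=p^k$. This also covers $t=1$, where $\C_0=\{0\}$ and $\C=\F_p^n$ has minimum weight $1=t$.

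\emph{Weight bound.} This is the main step. Take $a\in\C$ with support $S$ and suppose $|S|\le t-1$; we show $a=0$. The condition $a\in\C$ says
\[
\sum_{i\in S}a_i x_i^j=0\qquad\text{for }j=1,\ldots,t-1 .
\]
Restrict to the first $|S|$ of these equations, $j=1,\ldots,|S|$. Their coefficient matrix $(x_i^j)_{j\le|S|,\,i\in S}$ equals a Vandermonde matrix on the distinct points $\{x_i\}_{i\in S}$, times the diagonal matrix with entries $x_i$. Its determinant is
\[
\prod_{i\in S}x_i\prod_{i<i'}(x_{i'}-x_i)\neq0,
\]
which is nonzero because the points are distinct and nonzero. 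This is exactly where the hypothesis $x_i\in\F_p^\times$ is used. It follows that $a_S=0$, so $a=0$. Therefore every nonzero codeword has weight at least $t=n-k+1$.

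Combining the two parts, the Singleton bound $d\le n-k+1$ forces $d=t$ exactly, so $\C$ is an $[n,k,t]_p$ MDS code. Alternatively, one can cite the standard fact that the dual of an MDS code is MDS, using that $\C_0$ is $[n,t-1,n-t+2]$ MDS. The direct Vandermonde argument avoids depending on that fact.
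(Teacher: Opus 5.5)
Your proof is correct. The paper gives no proof of this fact and treats it as standard. It implicitly rests on the claim in \Cref{def:shamir} that $\C_0$ is an $[n,t-1,n-t+2]_p$ MDS code, together with the general theorem that the dual of an $[n,k',n-k'+1]$ MDS code is an $[n,n-k',k'+1]$ MDS code; this is the alternative you mention at the end.

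Your route works directly with the parity-check description of $\C$. The vectors $(x_1^j,\ldots,x_n^j)$ for $j=1,\ldots,t-1$ span $\C_0$. Your Vandermonde-times-diagonal determinant shows that any $t-1$ columns of the matrix with entries $x_i^j$ are linearly independent, which is exactly the standard criterion for the dual code to have distance at least $t$.

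This gives you a self-contained argument with three advantages:
\begin{itemize}
\item It needs neither the distance claim for $\C_0$ nor MDS duality.
\item It handles $t=1$ uniformly, whereas the paper has to special-case $\C_0=\{0\}$.
\item It isolates exactly where the hypothesis $x_i\neq0$ is used. That hypothesis is necessary: if some $x_i=0$, the $i$th coordinate of every word of $\C_0$ vanishes, so the unit vector $e_i$ lies in $\C$ and the fact fails for $t\geq2$.
\end{itemize}
The paper's route is shorter if one accepts the standard coding-theory facts.

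A minor point: in the injectivity step, dividing by $X$ is unnecessary. The polynomial $f$ itself has degree at most $t-1<n$, so it cannot vanish at $n$ distinct points.
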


\begin{fact}[MDS bijectivity]
  \label{fact:mds-bijectivity}
For every $I\subseteq[n]$ with $|I|=n-t+1$, the coordinate
projection $\pi_I:\C\to\F_p^I$ is a bijection.
\end{fact}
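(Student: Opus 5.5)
The plan is to establish the two claims of \cref{fact:mds-bijectivity} in order: first that $\C$ is MDS with the stated parameters and minimum distance, and then the projection statement, which will follow from linearity plus a dimension count.

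\textbf{Step 1: parameters of $\C$.} The case $t=1$ is immediate, since then $\C_0=\{0\}$ and $\C=\F_p^n$. For $t\geq2$, \cref{def:shamir} tells us that $\C_0$ is an $[n,t-1,n-t+2]_p$ MDS code. The dual of an MDS code is MDS, so $\C=\C_0^\perp$ has dimension $n-(t-1)=k$ and minimum distance $n-k+1=t$. This gives $|\C|=p^k$, which is \cref{fact:dual-distance}, and we may simply invoke that fact.

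\textbf{Step 2: injectivity of $\pi_I$.} Fix $I\subseteq[n]$ with $|I|=n-t+1=k$. The map $\pi_I:\C\to\F_p^I$ is linear. Suppose $a\in\C$ lies in its kernel. Then $a$ vanishes on all of $I$, so its support sits inside $[n]\setminus I$. That complement has size $n-k=t-1<t$. A nonzero codeword would need Hamming weight at least $t$ by \cref{fact:dual-distance}, so $a=0$.

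\textbf{Step 3: counting.} Since $\pi_I$ is injective, its image has $|\C|=p^k$ elements. The target space satisfies $|\F_p^I|=p^{|I|}=p^k$. An injective map between finite sets of equal size is a bijection, which completes the argument.

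\textbf{Main obstacle.} The only nontrivial input is that the dual of an MDS code is MDS, which \cref{fact:dual-distance} already supplies. If I wanted to avoid relying on it, I would argue directly. The codewords of $\C$ are the vectors $a$ with $\sum_i a_i f(x_i)=0$ for all $f$ of degree less than $t$ with $f(0)=0$. That is, $\sum_i a_i x_i^j=0$ for $j=1,\dots,t-1$. Suppose $a\neq0$ is supported on a set $S$ with $|S|\leq t-1$. Then the system restricted to $S$ is a $(t-1)\times|S|$ matrix with entries $x_i^j$. Its columns are scaled Vandermonde columns, with scalings $x_i\neq0$ and distinct nodes $x_i$. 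So the columns are linearly independent, which forces $a_S=0$, a contradiction. This recovers minimum distance at least $t$, and Steps 2 and 3 then go through unchanged.
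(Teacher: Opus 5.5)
Your argument is correct and essentially matches the paper's. In the proof of \cref{lem:unequal-block-bl} (\cref{app:block-mds}), the paper notes that two codewords of $\C$ agreeing on $k=n-t+1$ coordinates differ in at most $t-1<t$ places and hence coincide, so $\pi_I$ is injective, and then obtains bijectivity from $|\C|=p^k=|\F_p^I|$; these are exactly your Steps 2 and 3. Your optional Vandermonde check of the dual distance is a valid self-contained substitute for \cref{fact:dual-distance}, though the counting step still needs $\dim\C=n-(t-1)$, which follows from $\dim\C_0=t-1$.
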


\subsection{Distinguishing advantage and leakage models}

\begin{definition}
  \label{def:advantage}
For density operators $\rho,\sigma$, set
\begin{equation*}
  \Adv(\rho,\sigma)=\frac12\|\rho-\sigma\|_1,
  \qquad \|A\|_1=\Tr\sqrt{A^\dagger A}.
\end{equation*}
For probability distributions $P,Q$ on a finite set $\mathcal X$,
their statistical distance is
\begin{equation*}
  \operatorname{SD}(P,Q)
  =\frac12\sum_{x\in\mathcal X}|P(x)-Q(x)|.
\end{equation*}
\end{definition}

\begin{fact}[Data processing]
  \label{fact:data-processing}
For every quantum channel $\Lambda$,
\begin{equation*}
  \Adv\bigl(\Lambda(\rho),\Lambda(\sigma)\bigr)
  \leq\Adv(\rho,\sigma).
\end{equation*}
\end{fact}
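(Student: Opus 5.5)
The statement is the data-processing inequality for trace distance: for every channel $\Lambda$, $\tfrac12\|\Lambda(\rho)-\Lambda(\sigma)\|_1\le\tfrac12\|\rho-\sigma\|_1$. I would prove it by writing $\Adv$ through a variational formula and then pulling the channel back to the input side with the adjoint map.

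The first step is the variational characterization. Put $\Delta=\rho-\sigma$, which is Hermitian and traceless, and take its Jordan decomposition $\Delta=\Delta_+-\Delta_-$ with $\Delta_\pm\ge0$ of orthogonal support. Then $\|\Delta\|_1=\Tr\Delta_++\Tr\Delta_-$, and tracelessness gives $\Tr\Delta_+=\Tr\Delta_-$, so $\Adv(\rho,\sigma)=\Tr\Delta_+$. From this I will show
\[
  \Adv(\rho,\sigma)=\max_{0\le M\le I}\Tr\bigl(M(\rho-\sigma)\bigr).
\]
The maximum is attained by the projector $\Pi_+$ onto the support of $\Delta_+$. For any $0\le M\le I$, we have $\Tr(M\Delta)\le\Tr(M\Delta_+)\le\Tr\Delta_+$.

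The second step uses the channel. Let $\Lambda$ be completely positive and trace preserving, with Hilbert--Schmidt adjoint $\Lambda^\dagger$. Then $\Lambda^\dagger$ is positive and unital. So for every $0\le M\le I$ on the output space, $\Lambda^\dagger(M)$ satisfies $0\le\Lambda^\dagger(M)\le\Lambda^\dagger(I)=I$. Combining this with the first step gives
\[
  \Adv\bigl(\Lambda(\rho),\Lambda(\sigma)\bigr)=\max_{0\le M\le I}\Tr\bigl(\Lambda^\dagger(M)(\rho-\sigma)\bigr)\le\Adv(\rho,\sigma).
\]

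An alternative for the second step avoids the adjoint. Since $\Lambda$ is linear and trace preserving, $\Lambda(\rho)-\Lambda(\sigma)=\Lambda(\Delta_+)-\Lambda(\Delta_-)$. The triangle inequality then gives $\|\Lambda(\Delta)\|_1\le\|\Lambda(\Delta_+)\|_1+\|\Lambda(\Delta_-)\|_1$. By positivity each term equals $\Tr\Lambda(\Delta_\pm)=\Tr\Delta_\pm$, and the sum of these is $\|\Delta\|_1$. Only positivity and trace preservation are used here.

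Neither step is hard. The only point that needs care is the equality $\|\Delta\|_1=\Tr\Delta_++\Tr\Delta_-$. It follows from the spectral decomposition of the Hermitian operator $\Delta$, since $\sqrt{\Delta^\dagger\Delta}=|\Delta|=\Delta_++\Delta_-$.
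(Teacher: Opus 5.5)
Your proof is correct. There is nothing in the paper to compare it against: the paper does not prove this statement and simply lists it among the preliminaries as a standard fact. What you give is the standard textbook proof, and both of your routes work.

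\textbf{Variational route.} You establish $\Adv(\rho,\sigma)=\max_{0\le M\le I}\Tr\bigl(M(\rho-\sigma)\bigr)$, attained at the projector $\Pi_+$. You then observe that $\Lambda^\dagger$ maps effects to effects.

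\textbf{Jordan-decomposition route.} This is shorter. It needs only linearity, positivity, trace preservation, and the triangle inequality. A small point: the identity $\Lambda(\rho)-\Lambda(\sigma)=\Lambda(\Delta_+)-\Lambda(\Delta_-)$ uses only linearity; trace preservation enters later, at $\Tr\Lambda(\Delta_\pm)=\Tr\Delta_\pm$.

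\textbf{Remarks.} Neither route actually uses complete positivity. The adjoint of any positive trace-preserving map is already positive and unital. Both arguments also apply verbatim to channels whose input and output spaces differ. That is exactly how the fact is used in the proof of \Cref{thm:ent-finite}: the channel measures $Z$, prepares $\sigma_{B_ER}(z)$, and discards $Z$. It is also how the step $\Adv(\rho_{s_0},\rho_{s_1})\ge\operatorname{SD}(B\mid s_0,\,B\mid s_1)$ in \Cref{thm:ghz} is justified, where a measurement followed by classical post-processing is the channel.
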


\begin{definition}[Leakage device]
    A {leakage device} is a local quantum processor that receives
a party's classical share and produces a quantum output for the
adversary. It may also use auxiliary quantum registers permitted
by the leakage model. Devices may share initial entanglement with the other party's devices or the adversary when the model allows it.
\end{definition}
\begin{definition}[Unentangled local one-qubit leakage]
  \label{def:unentangled-leakage}
For each $i\in[n]$, let $B_i$ be a one-qubit register.  A leakage strategy
specifies a map $y_i\mapsto\rho_i^{B_i}(y_i)$.  These maps are fixed before
the secret and shares are chosen.  Conditioned on $y=(y_1,\ldots,y_n)$, the
leaked state is $\bigotimes_i\rho_i^{B_i}(y_i)$.  The adversary receives
\begin{equation*}
  \rho_s^B=\E_{y\leftarrow\C_s}
       \bigotimes_{i=1}^n\rho_i^{B_i}(y_i)
\end{equation*}
and may perform an arbitrary joint measurement on $B=B_1\cdots B_n$.
\end{definition}

\begin{definition}[Entangled local leakage]
  \label{def:physical-leakage}
Fix $E\subseteq[n]$, $|E|=r$.  Before the shares are drawn, the devices in
$E$ and the adversary prepare a state $\tau^{A_E R}$, where
$A_E=\bigotimes_{i\in E}A_i$ belongs to the devices and $R$ belongs to the
adversary.    The initial state
is independent of the secret and the randomness used to generate the
shares.  Device $i\in E$ applies a local channel
\begin{equation*}
  \mathcal L_{i,y_i}:A_i\longrightarrow B_i,
  \qquad \dim B_i=2.
\end{equation*}
Devices outside $E$ prepare one-qubit states $\rho_i^{B_i}(y_i)$ without a
shared quantum state.  Set
\begin{equation*}
  \sigma_{B_E R}(y_E)=
  \left(\left(\bigotimes_{i\in E}\mathcal L_{i,y_i}\right)
       \otimes\operatorname{id}_R\right)(\tau^{A_E R}),
  \qquad B_E=\bigotimes_{i\in E}B_i.
\end{equation*}
The adversary receives
\begin{equation*}
  \rho_s^{(E,R)}=\E_{y\leftarrow\C_s}
       \sigma_{B_E R}(y_E)\otimes
       \bigotimes_{i\notin E}\rho_i^{B_i}(y_i).
\end{equation*}
The adversary may perform arbitrary joint quantum operations and
measurements on all leaked qubits $B_1\cdots B_n$ together with
its retained register $R$.
The set $E$, the initial state, and all local maps are fixed before the secret
and shares are chosen. 
\end{definition}

\begin{definition}[Disjoint-block quantum leakage]
  \label{def:block-leakage}
Fix an integer $b\geq1$, and let
$\mathcal B=\{B_1,\ldots,B_M\}$ be a partition of $[n]$
into nonempty blocks. Write
\[
  m_j=|B_j|\leq b,\qquad d_j=2^{m_j}.
\]
A leakage strategy specifies a map
$y_{B_j}\mapsto\sigma_j(y_{B_j})$ for each block, where
$\sigma_j(y_{B_j})$ is an arbitrary density operator on
$\mathbb C^{d_j}$, depending jointly on the shares in that
block. Thus each block emits as many qubits as there are
shares in the block. The partition and maps are fixed before
the secret and shares are chosen.

Conditioned on a sharing vector $y$, the leaked state is
$\bigotimes_{j=1}^M\sigma_j(y_{B_j})$. The adversary receives
\begin{equation*}
  \rho_s=\E_{y\leftarrow\C_s}
       \bigotimes_{j=1}^M\sigma_j(y_{B_j})
\end{equation*}
and may measure all outputs jointly. The tensor-product
condition holds for each fixed sharing vector; the average
$\rho_s$ need not be a product state.
\end{definition}

\noindent\emph{Remark:}
If every block contains one share, this model reduces to
\Cref{def:unentangled-leakage}. It also covers local devices that share
entanglement within each block and emit one qubit each, provided different
blocks start independently and the adversary holds no quantum register
correlated with the devices. The combined output of a block is then an
$m_j$-qubit state depending only on that block's shares.

The entangled local model in \Cref{def:physical-leakage} also allows
entanglement between different blocks and with the adversary where each local
device still accesses only its own share. In the block model, a single
leakage map can instead access all shares in its block (in particular, there can be communication between devices within a block). Unentangled local leakage is a special case of both models.

The number of input shares and output qubits can also differ for blocks. Our
analysis also applies with different  bounds on these two quantities,
with adjusted threshold bounds (see the remark at the beginning of
\Cref{sec:bl}).

\begin{definition}[Security]
  \label{def:statistical-security}
Fix a leakage model.  For a legal strategy $\mathcal L$, let
$\rho_s^{\mathcal L}$ be the adversary's state for secret $s$.  The scheme
is $\varepsilon$-secure if, for every legal $\mathcal L$ and every
$s_0,s_1\in\F_p$,
\begin{equation*}
  \Adv\bigl(\rho_{s_0}^{\mathcal L},\rho_{s_1}^{\mathcal L}\bigr)
  \leq\varepsilon.
\end{equation*}  A family is called \textit{exponentially secure} if
its advantage is at most $2^{-cn}$ for some constant $c>0$ and all sufficiently
large $n$.  It has \textit{negligible advantage} if this advantage is at most $n^{-C}$
for every constant $C>0$ and all sufficiently large $n$.  Exponential security
trivially implies negligible advantage. 
\end{definition}

All security bounds also hold with shared classical randomness independent of the secret and shares, even if the adversary knows the seed: condition on the seed, apply the uniform bound, and average using convexity of the trace norm.

\begin{fact}
  \label{fact:pauli-norm}
Let $\Pau_1=\{I,X,Y,Z\}$, and let
$P=(P_1,\ldots,P_n)$ and $Q=(Q_1,\ldots,Q_n)$ be Pauli
strings in $\Pau_1^n$, identified with the operators
$P_1\otimes\cdots\otimes P_n$ and
$Q_1\otimes\cdots\otimes Q_n$, respectively. Then
\[
  \Tr(P^\dagger Q)=2^n\delta_{P,Q},
\]
where $\delta_{P,Q}=1$ if $P=Q$, and
$\delta_{P,Q}=0$ otherwise.
\end{fact}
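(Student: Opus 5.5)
The plan is to prove the Pauli orthogonality fact by reducing it to the single-qubit case. The trace is multiplicative over tensor products, so
\[
  \Tr(P^\dagger Q)=\Tr\Bigl(\bigotimes_{i=1}^n P_i^\dagger Q_i\Bigr)=\prod_{i=1}^n\Tr(P_i^\dagger Q_i).
\]
Here I use that $(P_1\otimes\cdots\otimes P_n)^\dagger=P_1^\dagger\otimes\cdots\otimes P_n^\dagger$ and that tensor products multiply factorwise. It then suffices to show $\Tr(P_i^\dagger Q_i)=2\delta_{P_i,Q_i}$ for $P_i,Q_i\in\Pau_1$. Indeed, the product equals $2^n$ exactly when every factor agrees, i.e.\ when $P=Q$ as strings, and it vanishes otherwise.

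For the single-qubit claim there are two steps. First, every element of $\Pau_1$ is Hermitian and squares to $I$, so $P^\dagger P=I$ and $\Tr(P^\dagger P)=2$. Second, for $P\neq Q$, the product $P^\dagger Q=PQ$ is, up to a scalar phase, a non-identity Pauli. For instance, $XY=iZ$, $YZ=iX$ and $ZX=iY$, with the reversed orders giving the negatives, and $IP=P$. Each of $X$, $Y$ and $Z$ is traceless: $Z$ has diagonal entries $1,-1$, while $X$ and $Y$ have zero diagonal. Hence $\Tr(PQ)=0$.

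No step is a real obstacle; the fact is a routine verification. The one point to check is that distinct strings $P\neq Q$ really do differ in at least one coordinate. This holds because the identification of strings in $\Pau_1^n$ with tensor-product operators is taken coordinatewise, so a single zero factor forces the whole product to vanish.
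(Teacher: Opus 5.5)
Your proof is correct. The paper states this as a standard fact without proof, and your argument is the usual verification: factor $\Tr(P^\dagger Q)$ over the tensor product, then check that for $P\neq Q$ in $\Pau_1$ the product $PQ$ is a phase times a traceless non-identity Pauli, while $P^2=I$ gives trace $2$.
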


\section{The Pauli-Fourier analysis}
\label{sec:reduction}
This section develops the Pauli--Fourier reduction used to prove
the elementary $5/6$ threshold in \Cref{sec:cauchy}.
The sharper bounds in \Cref{sec:bl} use a separate, direct analysis
of matrix-valued Fourier coefficients and do not depend on the
results of this section.

Fix an arbitrary unentangled local one-qubit leakage strategy.  Leakage from the
$i$-th share is specified by a function $y\mapsto\rho_i(y)$ from $\F_p$
to one-qubit density operators.  Overall, the adversary receives
\begin{equation}
  \rho_s=\E_{y\leftarrow\C_s}\bigotimes_{i=1}^n\rho_i(y_i).
  \label{eq:product-leakage}
\end{equation}
For security of the scheme, we want to  prove that
$\|\rho_{s_0}-\rho_{s_1}\|_1$ is negligible for any
pair of secrets $(s_0,s_1)$.  We write the leakage corresponding to the share $i$ as
\begin{equation}
  \rho_i(y)=\frac12\bigl(I+r_{i,X}(y)X+r_{i,Y}(y)Y+r_{i,Z}(y)Z\bigr),
  \label{eq:bloch}
\end{equation}
where $r_{i,X}(y),r_{i,Y}(y),r_{i,Z}(y)$ are real and satisfy
$r_{i,X}(y)^2+r_{i,Y}(y)^2+r_{i,Z}(y)^2\leq1$ for every input $y$.
Let $r_{i,I}(y)=1,$
\begin{equation*}
  W_i(a)=\sum_{P\in\Pau_1}|\widehat r_{i,P}(a)|^2.
\end{equation*}

\begin{lemma}[One-qubit Fourier budget]
  \label{lem:qubit-budget}
For every coordinate $i$,
\begin{equation*}
  \sum_a W_i(a)\leq2,
  \qquad 1\leq W_i(0)\leq2,
  \qquad W_i(a)\leq\frac12\quad(a\neq0).
\end{equation*}
\end{lemma}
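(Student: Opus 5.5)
Each of the three inequalities follows from Parseval's identity (\Cref{thm:parseval}) and conjugate symmetry (\Cref{fact:conjugate}), applied to the four real functions $r_{i,P}$, $P\in\Pau_1$, using the Bloch constraint pointwise.

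\emph{Total budget.} Summing Parseval over the four Pauli components gives
\[
  \sum_a W_i(a)=\sum_{P\in\Pau_1}\E_y|r_{i,P}(y)|^2
  =\E_y\bigl(1+r_{i,X}(y)^2+r_{i,Y}(y)^2+r_{i,Z}(y)^2\bigr)\leq2,
\]
because the Bloch vector has norm at most one for every $y$.

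\emph{The zero frequency.} Since $r_{i,I}\equiv1$, we have $\widehat r_{i,I}(0)=1$, so $W_i(0)\geq1$. The upper bound $W_i(0)\leq2$ follows from the total budget, because every term $W_i(a)$ is nonnegative.

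\emph{Nonzero frequencies.} This is the only step that needs any care. Fix $a\neq0$. Two observations are needed.
\begin{itemize}
  \item Since $\widehat r_{i,I}(a)=\E_y\omega^{-ay}=0$ for $a\neq0$, the identity component contributes nothing to $W_i(a)$.
  \item Since $p$ is odd, $-a\neq a$, and since each $r_{i,P}$ is real, conjugate symmetry gives $W_i(-a)=W_i(a)$.
\end{itemize}
Hence
\[
  2W_i(a)=W_i(a)+W_i(-a)\leq\sum_{b\neq0}W_i(b)=\sum_b W_i(b)-W_i(0)\leq 2-1=1,
\]
and therefore $W_i(a)\leq\frac12$.

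The argument uses the budget bound together with $W_i(0)\geq1$. It does not need a sharper estimate of $W_i(0)$; for instance, it does not use the means of $r_{i,X},r_{i,Y},r_{i,Z}$.
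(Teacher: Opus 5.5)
Your proposal is correct and follows the paper's proof essentially step for step: Parseval summed over the four Pauli components with the Bloch constraint, $\widehat r_{i,I}(0)=1$ for $1\le W_i(0)\le 2$, and conjugate symmetry with $p$ odd to get $2W_i(a)\le 2-W_i(0)\le 1$. Your extra remark that $\widehat r_{i,I}(a)=0$ for $a\neq0$ is true but not needed for the argument.
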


\begin{proof}

Using \cref{thm:parseval}, we get
\begin{align*}
  \sum_a W_i(a)
  &=\sum_a\bigl(|\widehat r_{i,I}(a)|^2+|\widehat r_{i,X}(a)|^2
       +|\widehat r_{i,Y}(a)|^2+|\widehat r_{i,Z}(a)|^2\bigr)\notag\\
  &=\E_y\bigl[r_{i,I}(y)^2+r_{i,X}(y)^2+r_{i,Y}(y)^2+r_{i,Z}(y)^2\bigr]
    \leq2.
\end{align*}
Since $r_{i,I}\equiv1$, its zero-frequency coefficient is one, so
$1\leq W_i(0)\leq2$.  By \Cref{fact:conjugate}, $W_i(-a)=W_i(a)$.
For $a\neq0$, the two frequencies $a$ and $-a$ are distinct because
$p$ is odd.  Therefore
\begin{equation*}
  2W_i(a)\leq\sum_{b\neq0}W_i(b)\leq2-W_i(0)\leq1.
\end{equation*}
\end{proof}

\begin{theorem}[Hilbert-Schmidt reduction]
  \label{thm:hs}
For every unentangled local one-qubit leakage strategy and every pair of secrets,
\begin{equation*}
  \frac{1}{2}\|\rho_{s_0}-\rho_{s_1}\|_1
  \leq T,
  \qquad
  T:=\sum_{a\in\C\setminus\{0\}}
        \prod_{i=1}^n\sqrt{W_i(a_i)}.
\end{equation*}
\end{theorem}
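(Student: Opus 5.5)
We expand the leaked state in the Pauli basis, average over the sharing coset with \Cref{fact:character}, and then pass from trace norm to Hilbert--Schmidt norm.

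\textbf{Pauli expansion.} By \eqref{eq:bloch},
\[
\bigotimes_i\rho_i(y_i)=2^{-n}\sum_{P\in\Pau_1^n}\prod_i r_{i,P_i}(y_i)\,P .
\]
Fourier-expand each coefficient as $r_{i,P_i}(y_i)=\sum_{a_i}\widehat r_{i,P_i}(a_i)\omega^{a_iy_i}$. Averaging over $y\leftarrow\C_s=\C_0+s\one$ and applying \Cref{fact:character} gives
\[
\rho_s=2^{-n}\sum_{P}\Bigl(\sum_{a\in\C}\omega^{s\ip{a}{\one}}\prod_i\widehat r_{i,P_i}(a_i)\Bigr)P .
\]
In the difference $\rho_{s_0}-\rho_{s_1}$ the $a=0$ term cancels. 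Hence
\[
\rho_{s_0}-\rho_{s_1}=2^{-n}\sum_{a\in\C\setminus\{0\}}c_a\,M_a,
\qquad
M_a:=\sum_P\prod_i\widehat r_{i,P_i}(a_i)\,P,
\]
where $c_a=\omega^{s_0\ip{a}{\one}}-\omega^{s_1\ip{a}{\one}}$ satisfies $|c_a|\le2$.

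\textbf{Norm bound.} The triangle inequality gives $\tfrac12\|\rho_{s_0}-\rho_{s_1}\|_1\le 2^{-n}\sum_{a\ne0}\|M_a\|_1$. It then suffices to show $\|M_a\|_1\le 2^n\prod_i\sqrt{W_i(a_i)}$.

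The key observation is that $M_a$ factors as a tensor product, $M_a=\bigotimes_i N_i(a_i)$ with
\[
N_i(a_i)=\sum_{P\in\Pau_1}\widehat r_{i,P}(a_i)P .
\]
So $\|M_a\|_1=\prod_i\|N_i(a_i)\|_1$, and the problem reduces to a single $2\times2$ matrix. For a $2\times 2$ matrix, $\|N\|_1\le\sqrt2\,\|N\|_2$. Pauli orthogonality (\Cref{fact:pauli-norm}) gives $\|N_i(a_i)\|_2^2=2W_i(a_i)$. Therefore $\|N_i(a_i)\|_1\le 2\sqrt{W_i(a_i)}$, and taking the product over $i$ gives exactly the required $2^n\prod_i\sqrt{W_i(a_i)}$.

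The main point needing care is the constant bookkeeping. The prefactor $2^{-n}$ from the Pauli normalization, the factor $|c_a|\le2$, and the $\tfrac12$ in the advantage must combine to give precisely $T$. They do: the $\tfrac12$ cancels $|c_a|\le 2$, and the $2^{-n}$ cancels the $2^n$ from the trace-norm estimate.

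One could instead apply Hilbert--Schmidt to the whole $2^n$-dimensional operator at once. That route loses a factor $2^{n/2}$, so the tensor factorization is essential. Conjugate symmetry is not needed here, since the bound on $|c_a|$ handles any complex phases.
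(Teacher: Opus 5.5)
Your proof is correct and follows the paper's argument: Pauli expansion, Fourier inversion, character averaging over $\C_s$, cancellation at $a=0$, the triangle inequality, and Pauli orthogonality. The only difference is that you apply $\|N\|_1\le\sqrt2\|N\|_2$ qubit by qubit after tensor-factoring $M_a$, whereas the paper applies $\|A\|_1\le2^{n/2}\|A\|_2$ to the whole $n$-qubit operator. Your closing claim that the global route loses a factor $2^{n/2}$ is wrong. Since $\|M_a\|_2=\prod_i\|N_i(a_i)\|_2=2^{n/2}\prod_i\sqrt{W_i(a_i)}$, both routes give exactly $\|M_a\|_1\le2^n\prod_i\sqrt{W_i(a_i)}$, so the tensor factorization is not essential.
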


\begin{proof}
Using \eqref{eq:product-leakage} and \eqref{eq:bloch}, we get
\begin{equation}
  \bigotimes_{i=1}^n\rho_i(y_i)
  =2^{-n}\sum_{P\in\Pau_1^n}
       \left(\prod_{i=1}^n r_{i,P_i}(y_i)\right)P.
  \label{eq:product-pauli}
\end{equation}
 Define $R_{s,P}:=\E_{y\leftarrow\C_s}\prod_{i=1}^n r_{i,P_i}(y_i).$ 
Taking average in \eqref{eq:product-pauli} gives $$
  \rho_s=2^{-n}\sum_{P\in\Pau_1^n}R_{s,P}P.
$$

We next express $R_{s,P}$ in terms of Fourier coefficients.  Fourier inversion applied separately at each coordinate, gives
\begin{align*}
  \prod_{i=1}^n r_{i,P_i}(y_i)
  &=\prod_{i=1}^n\left(
       \sum_{a_i\in\F_p}\widehat r_{i,P_i}(a_i)\omega^{a_i y_i}\right)=\sum_{a\in\F_p^n}
       \left(\prod_{i=1}^n\widehat r_{i,P_i}(a_i)\right)\omega^{\ip{a}{y}}.
\end{align*}
Every $y\in\C_s$ has the unique form $y=c+s\one$, with
$c\in\C_0$, so
\[
  \E_{y\leftarrow\C_s}\omega^{\ip{a}{y}}
  =\omega^{s\ip{a}{\one}}
    \E_{c\leftarrow\C_0}\omega^{\ip{a}{c}}.
\]
Using \Cref{fact:character}, all frequencies outside the dual code vanish, leaving
\begin{equation*}
  R_{s,P}=\sum_{a\in\C}\omega^{s\ip{a}{\one}}
       \prod_{i=1}^n\widehat r_{i,P_i}(a_i).
\end{equation*}
For $a\in\C$, write
\[
  d_a=\omega^{s_0\ip{a}{\one}}-\omega^{s_1\ip{a}{\one}}.
\]
Consider
\[
  \rho_{s_0}-\rho_{s_1}
  =2^{-n}\sum_{a\in\C\setminus\{0\}}d_a
    \sum_{P\in\Pau_1^n}
      \left(\prod_{i=1}^n\widehat r_{i,P_i}(a_i)\right)P,
\]
where the zero frequency cancels because $d_0=0$.

For any operator $A$ on $n$ qubits, with
$\|A\|_2=(\Tr A^\dagger A)^{1/2}$, we have
$\|A\|_1\leq2^{n/2}\|A\|_2$.
We have
\begin{align*}
  \frac12\|\rho_{s_0}-\rho_{s_1}\|_1
  &\leq 2^{-n-1}
    \sum_{a\in\C\setminus\{0\}}|d_a|
    \left\|
      \sum_{P\in\Pau_1^n}
        \left(\prod_{i=1}^n\widehat r_{i,P_i}(a_i)\right)P
    \right\|_1
    &&\mbox{(triangle inequality)}\\
  &\leq 2^{-n/2}
    \sum_{a\in\C\setminus\{0\}}
    \left\|
      \sum_{P\in\Pau_1^n}
        \left(\prod_{i=1}^n\widehat r_{i,P_i}(a_i)\right)P
    \right\|_2
    &&\mbox{($\|A\|_1\leq2^{n/2}\|A\|_2$,~$|d_a|\leq2$)}\\
  &=2^{-n/2}
    \sum_{a\in\C\setminus\{0\}}
    \left(
      2^n\sum_{P\in\Pau_1^n}
        \prod_{i=1}^n|\widehat r_{i,P_i}(a_i)|^2
    \right)^{1/2}
    &&\mbox{(\Cref{fact:pauli-norm})}\\
  &=\sum_{a\in\C\setminus\{0\}}
    \left(
      \sum_{P\in\Pau_1^n}
        \prod_{i=1}^n|\widehat r_{i,P_i}(a_i)|^2
    \right)^{1/2}\\
  &=\sum_{a\in\C\setminus\{0\}}
      \prod_{i=1}^n
      \left(
        \sum_{P_i\in\Pau_1}
          |\widehat r_{i,P_i}(a_i)|^2
      \right)^{1/2}\\
  &=\sum_{a\in\C\setminus\{0\}}
      \prod_{i=1}^n\sqrt{W_i(a_i)}
   =T.
\end{align*}
\end{proof}
\section{An elementary Cauchy-Schwarz bound}
\label{sec:cauchy}
The following proof gives a weaker threshold than the bound in \cref{sec:bl}, but
it is more intuitive and simpler to analyze and uses only Cauchy-Schwarz. This proof is similar to the classical proof in \cite{BDIR2021}, adapted to our quantum leakage setting.

\begin{proposition}[Elementary threshold]
  \label{prop:block}
Suppose $t\geq(2n+2)/3$. For every unentangled local one-qubit leakage
strategy and every pair of secrets $s_0,s_1$,
\begin{equation*}
  \frac12\|\rho_{s_0}-\rho_{s_1}\|_1
  \leq2^{(5n-6t+4)/2}.
\end{equation*}
In particular, for every fixed $\eta>0$, a family with
$t>(5/6+\eta)n$ is exponentially secure against unentangled
local one-qubit leakage.
\end{proposition}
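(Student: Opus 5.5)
The plan is to start from \Cref{thm:hs}, which bounds the advantage by
$T=\sum_{a\in\C\setminus\{0\}}\prod_i\sqrt{W_i(a_i)}$, and to bound $T$ using only \Cref{lem:qubit-budget}, the dual distance (\Cref{fact:dual-distance}) and MDS bijectivity (\Cref{fact:mds-bijectivity}). Let $k=n-t+1$. I will partition $[n]$ into three disjoint sets $A$, $B$, $C$ with $|A|=|B|=k$ and $|C|=n-2k=2t-n-2$. The hypothesis $t\geq(2n+2)/3$ in particular gives $t\geq(n+2)/2$, so $|C|\geq0$. The idea is to let $A$ and $B$ absorb the sum over codewords via Cauchy--Schwarz, while $C$ supplies the exponential decay.

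\textbf{The $C$ part.} Fix a nonzero $a\in\C$. By \Cref{fact:dual-distance}, $a$ has at most $n-t=k-1$ zero coordinates, so at most $k-1$ zeros lie in $C$. By \Cref{lem:qubit-budget}, a zero coordinate contributes $\sqrt{W_i(0)}\leq 2^{1/2}$ and a nonzero coordinate contributes at most $2^{-1/2}$. Let $z$ be the number of zeros of $a$ in $C$. Then
\[
\prod_{i\in C}\sqrt{W_i(a_i)}\leq 2^{z/2}\,2^{-(|C|-z)/2}\leq 2^{(2(k-1)-|C|)/2},
\]
uniformly in $a$. I will pull this factor out of the sum.

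\textbf{The $A$ and $B$ parts.} After removing the $C$ factor, I will extend the remaining sum to all of $\C$ (the added terms are nonnegative) and apply Cauchy--Schwarz:
\[
\sum_{a\in\C}\prod_{i\in A}\sqrt{W_i(a_i)}\prod_{i\in B}\sqrt{W_i(a_i)}\leq\Bigl(\sum_{a\in\C}\prod_{i\in A}W_i(a_i)\Bigr)^{1/2}\Bigl(\sum_{a\in\C}\prod_{i\in B}W_i(a_i)\Bigr)^{1/2}.
\]
Since $|A|=k$, \Cref{fact:mds-bijectivity} makes $a\mapsto a_A$ a bijection from $\C$ onto $\F_p^A$. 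Hence the first sum equals $\prod_{i\in A}\sum_{u\in\F_p}W_i(u)\leq 2^k$, and likewise for $B$. This part therefore contributes at most $2^k$.

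\textbf{Combining and the main check.} Multiplying the two contributions gives
\[
T\leq 2^{k+(k-1)-|C|/2}=2^{3k-1-n/2}=2^{(5n-6t+4)/2},
\]
after substituting $k=n-t+1$. This is exactly the claimed bound. For $t>(5/6+\eta)n$ the exponent is at most $-3\eta n+2$, which gives exponential security. The only step needing real care is the counting in the $C$ factor: the zero-count bound must hold for every nonzero codeword, and the possibly-zero codeword must not break the Cauchy--Schwarz step. The first is handled by using the global bound of $k-1$ zeros for the whole codeword, and the second by extending the sum to all of $\C$ with nonnegative terms before applying Cauchy--Schwarz. The remaining bookkeeping is checking that the sizes $|A|=|B|=k$ and $|C|\geq0$ are admissible, which the threshold assumption guarantees.
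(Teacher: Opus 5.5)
Your proposal is correct and follows essentially the same route as the paper's proof. The paper uses the same partition into two sets of size $n-t+1$ plus a remainder, the same uniform bound $2^{(3n-4t+2)/2}$ on the remainder via the dual distance, and the same Cauchy--Schwarz step with MDS bijectivity on the two $k$-sets. Your observation that the zero-count bound $z\le k-1$ only needs $|C|\geq 0$, rather than the paper's $|A_3|\geq n-t$, is also right. It does not enlarge the useful range, though, since the bound $2^{(5n-6t+4)/2}$ is nontrivial only when $t>(5n+4)/6$ anyway.
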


\begin{proof}
Choose pairwise disjoint sets $A_1,A_2,A_3$ whose union is $[n]$, with
\[
  |A_1|=|A_2|=n-t+1,
  \qquad
  |A_3|=n-2(n-t+1)=2t-n-2.
\]
Such sets exist because $t\geq(2n+2)/3$ implies
$
  2t-n-2\geq n-t\geq0.$

By \Cref{fact:dual-distance}, every nonzero codeword has at most
$n-t$ zero coordinates, so at most $n-t$ coordinates in $A_3$
can be zero. Thus \Cref{lem:qubit-budget} gives, for every
$a\in\C\setminus\{0\}$,
\begin{equation}
  \label{eq:elementary-a3-bound}
  \prod_{i\in A_3}\sqrt{W_i(a_i)}
  \leq(\sqrt{2})^{n-t}
       \left(\frac1{\sqrt{2}}\right)^{|A_3|-(n-t)}
  =2^{(3n-4t+2)/2}.
\end{equation}
For each $j\in\{1,2\}$, projection onto $A_j$ is bijective
by \Cref{fact:mds-bijectivity}. Hence \Cref{lem:qubit-budget} gives
\begin{equation}
  \label{eq:elementary-projection-bound}
  \sum_{a\in\C}\prod_{i\in A_j}W_i(a_i)
  =\prod_{i\in A_j}\sum_{b\in\F_p}W_i(b)
  \leq2^{n-t+1}.
\end{equation}
Using \Cref{thm:hs}, we get
\begin{align*}
 &\frac12\|\rho_{s_0}-\rho_{s_1}\|_1
  \\&\leq\sum_{a\in\C\setminus\{0\}}
      \left(\prod_{i\in A_1}\sqrt{W_i(a_i)}\right)
      \left(\prod_{i\in A_2}\sqrt{W_i(a_i)}\right)
      \left(\prod_{i\in A_3}\sqrt{W_i(a_i)}\right)\\
  &\leq2^{(3n-4t+2)/2}
    \sum_{a\in\C\setminus\{0\}}
      \left(\prod_{i\in A_1}\sqrt{W_i(a_i)}\right)
      \left(\prod_{i\in A_2}\sqrt{W_i(a_i)}\right)
    &&\mbox{(\cref{eq:elementary-a3-bound})}\\
  &\leq2^{(3n-4t+2)/2}
    \Big(\sum_{a\in\C\setminus\{0\}}
      \prod_{i\in A_1}W_i(a_i)\Big)^{1/2}
    \Big(\sum_{a\in\C\setminus\{0\}}
      \prod_{i\in A_2}W_i(a_i)\Big)^{1/2}
    &&\mbox{(Cauchy-Schwarz)}\\
  &\leq2^{(3n-4t+2)/2}
    \left(\sum_{a\in\C}
      \prod_{i\in A_1}W_i(a_i)\right)^{1/2}
    \left(\sum_{a\in\C}
      \prod_{i\in A_2}W_i(a_i)\right)^{1/2}
    &&\mbox{($W_i(0)\geq 0$)}\\
  &\leq2^{(3n-4t+2)/2}\,2^{n-t+1}
    &&\mbox{(\cref{eq:elementary-projection-bound})}\\
  &=2^{(5n-6t+4)/2}.
\end{align*}
Thus, $t/n\geq5/6+\eta$ gives exponential security for every fixed $\eta>0$.
\end{proof}

\section{The MDS Brascamp-Lieb bound for disjoint blocks}
\label{sec:bl}
We now prove security in the disjoint-block leakage model of
\Cref{def:block-leakage}. The unentangled case follows by taking
every block to have size one. We use a block version of the MDS
Brascamp-Lieb inequality of Horinaga and
Yamakawa~\cite{HorinagaYamakawa2026}.

Fix a partition $\mathcal B=\{B_1,\ldots,B_M\}$ and a leakage
strategy as in \Cref{def:block-leakage}. Throughout this section,
write
\[
  m_j=|B_j|\leq b,\qquad d_j=2^{m_j},\qquad k=n-t+1.
\]

\paragraph{Remark:}
The same argument applies when block $B_j$ contains $m_j\leq b$
shares and outputs $\ell_j\leq\ell$ qubits, for fixed positive integers
$b,\ell$ independent of $n$, under the same independence assumptions.
Indeed, \Cref{lem:block-fourier-budget} treats the input size and output
dimension separately. Embedding each output in dimension $2^\ell$
replaces $2^b$ by $2^\ell$ in \cref{eq:common-block-bound}, while
$q_j=n/(k+m_j-1)$, $D=\lceil t/b\rceil$, and the MDS argument remain
unchanged. The threshold argument then uses $A_\ell(Q)$ in place of
$A_b(Q)$. Since $0<\kappa_\ell<1$, it gives exponential security
above a suitable threshold below one, with a fixed positive margin and
constants depending on $b,\ell$.

We develop the case $\ell_j=m_j$ explicitly because it gives a total
output budget of one qubit per share, includes local leakage with
entanglement within each block, and recovers the unentangled one-qubit
model for singleton blocks.

\subsection{Fourier bounds for one block}

We first study the output of a single block as a matrix-valued function
of its shares. Its Fourier coefficients describe how this output depends
on linear combinations of those shares. We will bound both the sum of
their squared trace norms and the trace norm of each nonzero coefficient.
Together, these bounds give the higher-power sums needed in the block
MDS inequality.

Let $\mathcal D(\mathbb C^d)$ denote the set of density operators
on $\mathbb C^d$. For an operator-valued map $\sigma:\F_p^m\to\mathcal D(\mathbb C^d)$, define
\begin{equation}
  \widehat\sigma(a)
  =\E_{z\in\F_p^m}\omega^{-\ip{a}{z}}\sigma(z),
  \qquad
  h(a)=\|\widehat\sigma(a)\|_1.
  \label{eq:block-fourier-definition}
\end{equation}
The input has $m$ field coordinates, while the output has dimension
$d$. We keep these parameters separate so that the bounds also apply
when a smaller block state is embedded in a larger output space.
\begin{lemma}[Block Fourier budget]
  \label{lem:block-fourier-budget}
For every map
$\sigma:\F_p^m\to\mathcal D(\mathbb C^d)$, with
$\widehat{\sigma}$ and $h$ defined as in
\Cref{eq:block-fourier-definition}, we have
\[
  h(0)=1,
  \qquad
  \sum_a h(a)^2\leq d,
  \qquad
  \sum_{a\neq0}h(a)^2\leq d-1.
\]
Moreover, for every $a\neq0$, $
  h(a)\leq c_{d,p},$ where
$
  c_{d,p}
  =\min\left\{1,
       \frac d\pi\sin\left(\frac\pi d\right)+\frac{\pi}{p}
       \right\}.$
Hence, for every $q\geq2$,
\[
  \sum_{a\neq0}h(a)^q
  \leq(d-1)c_{d,p}^{\,q-2}.
\] 
\end{lemma}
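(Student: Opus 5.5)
The plan is to prove the four assertions separately. Three of them are short; the per-coefficient bound $h(a)\le c_{d,p}$ is the real content.

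\emph{Zero coefficient, Parseval bound, and $q$-th moments.} First, $\widehat\sigma(0)=\E_z\sigma(z)$ is a density operator, so $h(0)=1$. For the second-moment bound, apply \Cref{thm:parseval} entrywise in a fixed basis. This gives
\[
\sum_a\|\widehat\sigma(a)\|_2^2=\E_z\|\sigma(z)\|_2^2=\E_z\Tr\sigma(z)^2\le 1 .
\]
Combining this with $\|A\|_1^2\le d\,\|A\|_2^2$ for $d\times d$ operators yields $\sum_a h(a)^2\le d$. Subtracting $h(0)^2=1$ gives $\sum_{a\neq0}h(a)^2\le d-1$. The final inequality then follows immediately:
\[
\sum_{a\ne0}h(a)^q\le\Bigl(\max_{a\ne0}h(a)\Bigr)^{q-2}\sum_{a\ne0}h(a)^2\le (d-1)c_{d,p}^{\,q-2}.
\]
The trivial bound $h(a)\le\E_z\|\sigma(z)\|_1=1$ comes from the triangle inequality and accounts for the $\min\{1,\cdot\}$ in $c_{d,p}$.

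\emph{Reducing to a one-dimensional phase average.} Fix $a\neq0$. The map $z\mapsto u=\ip{a}{z}$ pushes the uniform measure on $\F_p^m$ to the uniform measure on $\F_p$. Hence
\[
\widehat\sigma(a)=\E_{u\in\F_p}\omega^{-u}\tau(u),
\]
where $\tau(u)$ is the average of $\sigma(z)$ over the fiber $\{z:\ip{a}{z}=u\}$. Each $\tau(u)$ is a density operator. Next use trace-norm duality via the polar decomposition, choosing a unitary $V$ with $h(a)=\Tr(V^\dagger\widehat\sigma(a))$. Diagonalize $V=\sum_k e^{i\theta_k}|k\rangle\langle k|$ with $d$ eigenvalues, and set $q_k(u)=\langle k|\tau(u)|k\rangle$, a probability vector for each $u$. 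Since $h(a)$ is real, taking real parts gives
\[
h(a)=\E_u\sum_{k=1}^d q_k(u)\cos\bigl(\tfrac{2\pi u}{p}+\theta_k\bigr)\le \E_{u\in\F_p} g\bigl(\tfrac{2\pi u}{p}\bigr),
\qquad g(\phi):=\max_k\cos(\phi+\theta_k).
\]

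\emph{Main step: the trigonometric estimate.} I expect this to be the main obstacle. I will first bound the continuous average $\frac1{2\pi}\int_0^{2\pi}g$. The circle splits into at most $d$ arcs, and on each arc a single index $k$ attains the maximum. On an arc of length $L$, the integral of $\cos(\phi+\theta_k)$ is at most $\int_{-L/2}^{L/2}\cos=2\sin(L/2)$. Since $\sin$ is concave on $[0,\pi]$, Jensen's inequality gives
\[
\sum_j 2\sin(L_j/2)\le 2d\sin(\pi/d)\qquad\text{whenever }\textstyle\sum_j L_j=2\pi .
\]
If some arc has $L_j>2\pi$ this fails, but that can only happen for $d=1$, where the bound is trivial. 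So the continuous average is at most $\frac d\pi\sin(\pi/d)$.

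It remains to compare the discrete average over the $p$ equally spaced points $2\pi u/p$ with this integral. As a maximum of $1$-Lipschitz functions, $g$ is $1$-Lipschitz. On each cell of length $2\pi/p$ the value at the cell's left endpoint therefore differs from the cell average by at most $\pi/p$. This gives $h(a)\le\frac d\pi\sin(\pi/d)+\frac{\pi}{p}$, which completes the bound $h(a)\le c_{d,p}$.
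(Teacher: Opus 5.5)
Your proposal is correct and follows essentially the same route as the paper: entrywise Parseval plus $\|A\|_1^2\le d\|A\|_2^2$ for the moment bounds, and, for $a\neq0$, grouping by $\ip{a}{z}$, trace-norm duality with an eigendecomposition of the unitary, a Jensen bound on $\frac{1}{2\pi}\int_0^{2\pi}\max_k\cos(\phi+\theta_k)\,d\phi$, and a $1$-Lipschitz Riemann-sum comparison that costs $\pi/p$. Your Voronoi-arc bound $2\sin(L/2)$ is a harmless variant of the paper's exact gap-halving computation; one small correction is that no arc can exceed length $2\pi$, and an arc of length exactly $2\pi$ occurs for any $d$ when all eigenphases coincide, yet Jensen still applies because $L_j/2\in[0,\pi]$ (pad with zero-length arcs if there are fewer than $d$ arcs).
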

The proof is deferred to \Cref{app:block-fourier-budget}.

\begin{lemma}
  \label{lem:direct-block-fourier}
For every disjoint-block strategy and any two secrets,
\[
  \Adv(\rho_{s_0},\rho_{s_1})
  \leq\sum_{a\in\C\setminus\{0\}}
       \prod_{j=1}^M\|\widehat\sigma_j(a_{B_j})\|_1.
\]
\end{lemma}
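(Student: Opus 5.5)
We mirror the proof of \Cref{thm:hs}, but Fourier-expand the block density matrices directly instead of passing through Pauli coefficients. For each block $j$, apply the inversion formula of \Cref{def:fourier} entrywise to the matrix-valued map $\sigma_j:\F_p^{m_j}\to\mathcal D(\mathbb C^{d_j})$. With $\widehat\sigma_j$ defined as in \cref{eq:block-fourier-definition}, this gives
\[
  \sigma_j(z)=\sum_{u\in\F_p^{m_j}}\widehat\sigma_j(u)\,\omega^{\ip{u}{z}} .
\]
Inversion holds for operator-valued maps because it holds entry by entry.

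Next, take the tensor product over blocks and expand it multilinearly. Since the blocks partition $[n]$, a tuple $(u_1,\ldots,u_M)$ is the same thing as a vector $a\in\F_p^n$ with $a_{B_j}=u_j$, and $\sum_j\ip{u_j}{y_{B_j}}=\ip{a}{y}$. Hence
\[
  \bigotimes_{j=1}^M\sigma_j(y_{B_j})
  =\sum_{a\in\F_p^n}\omega^{\ip{a}{y}}\bigotimes_{j=1}^M\widehat\sigma_j(a_{B_j}).
\]

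We then average over $y\leftarrow\C_s$. Write $y=c+s\one$ with $c\in\C_0$ and apply \Cref{fact:character}. Every frequency outside $\C=\C_0^\perp$ vanishes, and the remaining ones pick up the phase $\omega^{s\ip{a}{\one}}$:
\[
  \rho_s=\sum_{a\in\C}\omega^{s\ip{a}{\one}}\bigotimes_{j}\widehat\sigma_j(a_{B_j}).
\]
Subtracting the two secrets, the $a=0$ term cancels because its coefficient $d_0=0$. Every other term has coefficient $d_a=\omega^{s_0\ip{a}{\one}}-\omega^{s_1\ip{a}{\one}}$ with $|d_a|\leq2$.

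For the final estimate, apply the triangle inequality for $\|\cdot\|_1$ together with multiplicativity of the trace norm under tensor products, $\|A\otimes B\|_1=\|A\|_1\|B\|_1$. The factor $\tfrac12$ in $\Adv$ cancels the bound $|d_a|\leq2$, which yields exactly the claimed sum $\sum_{a\in\C\setminus\{0\}}\prod_j\|\widehat\sigma_j(a_{B_j})\|_1$.

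There is no real obstacle here. The only points that need care are the bookkeeping identifying block frequency tuples with vectors $a\in\F_p^n$, and the justification that the trace norm is multiplicative over tensor products: the singular values of $A\otimes B$ are exactly the pairwise products of the singular values of $A$ and $B$. Unlike \Cref{thm:hs}, there is no need for the Hilbert--Schmidt detour or the dimension factor $2^{n/2}$, which is why the direct bound is sharper.
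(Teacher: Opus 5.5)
Your proposal is correct and follows the paper's proof essentially step for step. Both arguments use entrywise Fourier inversion of each block state, expand the tensor product over frequencies $a\in\F_p^n$, and average over $\C_s$ via \Cref{fact:character} to keep only $a\in\C$. The $a=0$ term then cancels, and the triangle inequality together with $|d_a|\leq 2$ and multiplicativity of the trace norm under tensor products gives the bound.
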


\begin{proof}
The ideas are similar to those in
the proof of \Cref{thm:hs}. Fourier inversion applies entry by entry
to each block output. Thus
\[
  \bigotimes_{j=1}^M\sigma_j(y_{B_j})
  =\sum_{a\in\F_p^n}\omega^{\ip{a}{y}}
       \bigotimes_{j=1}^M\widehat\sigma_j(a_{B_j}).
\]
Averaging over $\C_s=\C_0+s\one$ and applying
\Cref{fact:character} leaves precisely the frequencies in
$\C=\C_0^\perp$:
\[
  \rho_s=\sum_{a\in\C}\omega^{s\ip{a}{\one}}
       \bigotimes_{j=1}^M\widehat\sigma_j(a_{B_j}).
\]
The term at $a=0$ is independent of $s$, so it cancels when the
two states are subtracted. For every other term,
$\left|\omega^{s_0\ip{a}{\one}}-
\omega^{s_1\ip{a}{\one}}\right|\leq2$. Therefore,
\begin{align*}
  \Adv(\rho_{s_0},\rho_{s_1})
  &\leq\frac12\sum_{a\in\C\setminus\{0\}}
    \left|\omega^{s_0\ip{a}{\one}}-
    \omega^{s_1\ip{a}{\one}}\right|
    \left\|\bigotimes_j\widehat\sigma_j(a_{B_j})\right\|_1\\
  &\leq\sum_{a\in\C\setminus\{0\}}
       \prod_j\|\widehat\sigma_j(a_{B_j})\|_1.
\end{align*}
The last line uses multiplicativity of the trace norm under tensor products.
\end{proof}

\subsection{A block MDS inequality}

The following is a variant of the MDS Brascamp-Lieb inequality of
Horinaga and Yamakawa~\cite{HorinagaYamakawa2026} for blocks of different sizes.
Its proof is given in \Cref{app:block-mds}.

\begin{lemma}[MDS Brascamp-Lieb for unequal blocks]
  \label{lem:unequal-block-bl}
Let $k=n-t+1$, and suppose $k+m_j-1\leq n$ for every block.  Define
$
  q_j=\frac{n}{k+m_j-1}.$
For arbitrary nonnegative functions
$g_j:\F_p^{B_j}\to\mathbb R_{\geq0}$,
\[
  \sum_{a\in\C}\prod_{j=1}^M g_j(a_{B_j})
  \leq
  \prod_{j=1}^M
  \Big(\sum_{u\in\F_p^{B_j}}g_j(u)^{q_j}\Big)^{1/q_j}.
\]
\end{lemma}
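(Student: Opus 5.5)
The plan is a Finner-type argument: reduce to Hölder (\Cref{fact:holder}) with $R=n$ factors, after arranging the coordinates so that every block is an interval of a cyclic order. The only code property I intend to use is that a codeword of $\C$ is determined by any $k$ of its coordinates (\Cref{fact:mds-bijectivity}); in particular the projection onto any set of at least $k$ coordinates is injective.

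\textbf{Step 1 (cyclic windows).} Relabel $[n]$ so that $B_1,\ldots,B_M$ are consecutive intervals, placed in order around a cycle $\mathbb Z_n$. This only permutes coordinates of $\C$, so the injectivity property is unaffected. For $r\in\mathbb Z_n$ let $W_r=\{r,r+1,\ldots,r+k-1\}$ be the cyclic window of length $k$. A cyclic interval of length $m_j$ meets exactly $\min\{n,k+m_j-1\}$ of these windows, which equals $k+m_j-1$ under the hypothesis $k+m_j-1\le n$. Let $J_r=\{j: B_j\cap W_r\ne\emptyset\}$ and $U_r=\bigcup_{j\in J_r}B_j\supseteq W_r$. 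Since $|U_r|\ge k$, the map $a\mapsto a_{U_r}$ is injective on $\C$.

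\textbf{Step 2 (Hölder).} Define $F_r(a)=\prod_{j\in J_r} g_j(a_{B_j})^{q_j}$. Each index $j$ lies in exactly $k+m_j-1$ of the sets $J_r$, and $q_j(k+m_j-1)/n=1$, so
\[
\prod_{j=1}^M g_j(a_{B_j})=\prod_{r\in\mathbb Z_n}F_r(a)^{1/n}.
\]
\Cref{fact:holder} with $R=n$ then bounds the left side of the lemma, summed over $a\in\C$, by $\prod_r\bigl(\sum_{a\in\C}F_r(a)\bigr)^{1/n}$.

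\textbf{Step 3 (injectivity).} $F_r(a)$ depends only on $a_{U_r}$, and $a\mapsto a_{U_r}$ is injective. Hence
\[
\sum_{a\in\C}F_r(a)\le\sum_{u\in\F_p^{U_r}}\prod_{j\in J_r}g_j(u_{B_j})^{q_j}=\prod_{j\in J_r}\sum_{v\in\F_p^{B_j}}g_j(v)^{q_j},
\]
where the equality holds because the blocks in $J_r$ are disjoint and cover $U_r$. Taking the product over $r$, the factor for block $j$ appears $k+m_j-1$ times with exponent $1/n$, i.e.\ with total exponent $1/q_j$. This is exactly the claimed bound.

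The only delicate point is Step 1. The functions $g_j$ depend on whole blocks, not on individual coordinates, so I enlarge each window $W_r$ to the union $U_r$ of the blocks it touches. This keeps every factor a function of $a_{U_r}$ alone, at no cost, because injectivity only gets easier on larger coordinate sets. The hypothesis $k+m_j-1\le n$ is what makes the count of windows meeting $B_j$ exactly $k+m_j-1$, which matches the exponent $q_j=n/(k+m_j-1)$.
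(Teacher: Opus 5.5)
Your proposal is correct and follows essentially the same argument as the paper. The paper also relabels so the blocks are cyclic intervals, uses the $n$ cyclic windows of length $k$ enlarged to the union of the blocks they meet, applies H\"older with $R=n$ (block $j$ meets exactly $k+m_j-1$ windows), and bounds each $\sum_{a\in\C}F_r(a)$ via the MDS projection property. The only cosmetic difference is that the paper sums over $\F_p^{I_r}$ through the bijection onto the window before enlarging to $T_r$, whereas you use injectivity onto $U_r$ directly.
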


\subsection{Exponential security for disjoint blocks}

We now combine the Fourier bounds with the block MDS inequality.
For every integer $m\geq1$, define
\begin{equation}\label{eq:defK}
  \kappa_m=\frac{2^m}{\pi}\sin\left(\frac{\pi}{2^m}\right).
\end{equation}
Writing $z=\pi/2^m$, we have $0<z\leq\pi/2$ and
$0<\sin z<z$. Hence $\kappa_m=\sin z/z\in(0,1)$.
This is the limiting coefficient bound for an $m$-qubit output,
since $c_{2^m,p}=\min\{1,\kappa_m+\pi/p\}$.

\begin{theorem}[Exponential security for blocks of bounded size]
  \label{thm:constant-blocks}
For every fixed integer $b\geq1$, there exists a threshold
$\widehat\tau_b\in(1/2,1)$ such that for every fixed $\eta>0$, there are constants $c>0$ and $n_0$
such that every disjoint-block leakage strategy with blocks
of size at most $b$ satisfies
\[
  \Adv(\rho_{s_0},\rho_{s_1})\leq2^{-cn}
\]
whenever $n\geq n_0$ and
$t/n\geq\widehat\tau_b+\eta$.

The threshold $\widehat\tau_b$ is the unique solution in
$(1/2,1)$ of
\begin{equation}
  A_b\left(\frac1{1-\widehat\tau_b}\right)
  =K\left(\frac{\widehat\tau_b}{b}\right),
  \label{eq:unequal-threshold-root}
\end{equation}
where for integers $m\geq1$, real $Q\geq2$, and
$0<\alpha<1$, we define
\begin{equation}\label{eq:defAK}
  A_m(Q)=(2^m-1)\kappa_m^{\,Q-2},
  \qquad
  K(\alpha)=\alpha(1-\alpha)^{(1-\alpha)/\alpha}.
\end{equation}
\end{theorem}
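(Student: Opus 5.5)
The plan is to combine \Cref{lem:direct-block-fourier}, a weighting trick that exploits the dual distance, the block MDS inequality \Cref{lem:unequal-block-bl}, and the per-block estimates of \Cref{lem:block-fourier-budget}. I then optimize a single free parameter $\lambda$.

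\textbf{Step 1 (bound and weighting).} Write $h_j(u)=\|\widehat\sigma_j(u)\|_1$. \Cref{lem:direct-block-fourier} gives $\Adv\le\sum_{a\in\C\setminus\{0\}}\prod_j h_j(a_{B_j})$. Fix $\lambda\ge1$ and set
\[
  g_j(0)=h_j(0)=1,\qquad g_j(u)=\lambda h_j(u)\quad(u\neq0).
\]
By \Cref{fact:dual-distance}, every nonzero $a\in\C$ has at least $t$ nonzero coordinates. Since each block has size at most $b$, the restriction $a_{B_j}$ is nonzero for at least $D=\lceil t/b\rceil$ blocks. Hence
\[
  \Adv\le\lambda^{-D}\sum_{a\in\C}\prod_j g_j(a_{B_j}).
\]
Adding back $a=0$ is harmless because all terms are nonnegative.

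\textbf{Step 2 (block MDS inequality and a uniform exponent).} Assume $t\ge b$, so that $k+m_j-1\le n$. Then \Cref{lem:unequal-block-bl} applies with exponents $q_j=n/(k+m_j-1)$. These exponents differ across blocks, and I remove this dependence using monotonicity of $\ell^q$ norms on a finite set: $\|g\|_{q}$ is nonincreasing in $q$. Since $q_j\ge Q_n:=n/(k+b-1)$, each factor is at most $\|g_j\|_{Q_n}$. Because $t/n\ge\widehat\tau_b+\eta>1/2$, we have $Q_n\ge2$ for large $n$, and \Cref{lem:block-fourier-budget} with $d=2^{m_j}$ gives
\[
  \sum_u g_j(u)^{Q_n}\le 1+\lambda^{Q_n}(2^{m_j}-1)c_{2^{m_j},p}^{\,Q_n-2}.
\]
Since $p>n$, we have $c_{2^m,p}\le\kappa_m+\pi/n$. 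Both $2^m-1$ and $\kappa_m=\sin z/z$ (with $z=\pi/2^m$) increase in $m$, so I can replace $m_j$ by $b$ throughout. This yields the bound $1+\lambda^{Q_n}A^{(n)}$, where
\[
  A^{(n)}=(2^b-1)(\kappa_b+\pi/n)^{Q_n-2}\to A_b(Q),\qquad Q=\frac{1}{1-\tau},\quad \tau=t/n.
\]
The number of blocks satisfies $M\le n$. Substituting $x=\lambda^{Q_n}A^{(n)}$ gives
\[
  \log\Adv\le\frac{n}{Q_n}\Bigl[\log(1+x)-\frac{\tau}{b}\log\frac{x}{A^{(n)}}\Bigr].
\]
Here I use $M\le n$ for the first term and $D\ge\tau n/b$ for the second. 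The upper bound is a valid cap whenever the bracket's first term is nonnegative, which holds for $x\ge0$.

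\textbf{Step 3 (optimization and threshold).} Set $\alpha=\tau/b$ and choose $x=\alpha/(1-\alpha)$. A direct computation shows that the bracket is negative if and only if $A^{(n)}<\alpha(1-\alpha)^{(1-\alpha)/\alpha}=K(\alpha)$. This choice needs $\lambda\ge1$, which holds once $x\ge A^{(n)}$; that is automatic when $A^{(n)}<K(\alpha)\le x$. The bracket is then at most some fixed negative constant, which gives $2^{-cn}$ because $n/Q_n=\Theta(n)$.

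For the threshold, I will check that $\tau\mapsto A_b(1/(1-\tau))$ is continuous and strictly decreasing, since $\kappa_b<1$ and it tends to $0$ as $\tau\to1$. I will also check that $\tau\mapsto K(\tau/b)$ is continuous and increasing on $(1/2,1)$, and that $A_b(2)=2^b-1>K(1/(2b))$. Together these give a unique root $\widehat\tau_b\in(1/2,1)$ of \eqref{eq:unequal-threshold-root}. For $\tau\ge\widehat\tau_b+\eta$ there is then a uniform gap $A_b(Q)\le K(\alpha)-\gamma(\eta)$. Continuity in $Q$, together with $A^{(n)}\to A_b(Q)$ uniformly for $\tau$ bounded away from $1/2$ and $1$, carries this gap over to all $n\ge n_0$. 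For $\tau$ near $1$, the bound only improves.

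\textbf{Main obstacle.} I expect the main difficulty to be the bookkeeping in Step 3. This includes verifying monotonicity of $K$ and the single crossing, and keeping the constants uniform in both the finite-$n$ corrections ($\pi/n$, and $Q_n$ versus $1/(1-\tau)$) and the unequal block sizes. The reduction to the uniform exponent $Q_n$ in Step 2 is what makes these corrections manageable, so I would fix that reduction first and then handle the limiting estimates.
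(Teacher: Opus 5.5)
Your proposal follows the paper's proof essentially step for step. The shared steps are: the tilt $g_j(u)=\lambda h_j(u)$ for $u\neq0$, justified by the fact that every nonzero dual codeword meets at least $\lceil t/b\rceil$ blocks; the block MDS Brascamp--Lieb inequality, followed by $\ell^q$-monotonicity to reach a common exponent; the block Fourier budget; the same optimizer, since your $x=\alpha/(1-\alpha)$ with $x=\lambda^{Q}A$ is the paper's $x_*=\alpha/((1-\alpha)A)$ with $x=\lambda^Q$; and the same single-crossing argument for $\widehat\tau_b$. Replacing $m_j$ by $b$ via monotonicity of $2^m-1$ and $\kappa_m$, rather than embedding every block into dimension $2^b$, is an equivalent minor variant.

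One justification in Step 3 is wrong as written. You let the exponent $Q_n=n/(n-t+b)$ and $\alpha=t/(nb)$ depend on $t$, so the prefactor is $n/Q_n=n-t+b$. This is not $\Theta(n)$ as $t/n\to1$; for $t=n$ it equals $b$. So ``a fixed negative bracket times $n/Q_n$'' does not give $2^{-cn}$ in that regime, and your remark that ``for $\tau$ near $1$ the bound only improves'' is asserted without proof. The conclusion is still true, because $\log A^{(n)}\approx Q_n\log\kappa_b$ is then itself of order $-Q_n$, so the product stays of order $-n$. But this needs an argument.

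The clean fix is the paper's device of freezing the parameters. Since $q_j\ge Q_n$, you may use any smaller fixed exponent. Since $\log\lambda\ge0$, you may replace $D/n$ by any smaller fixed $\alpha$. So set $\tau_0=\widehat\tau_b+\eta/2$, $Q=1/(1-\tau_0)$ and $\alpha=\tau_0/b$ once and for all. For every $t/n\ge\widehat\tau_b+\eta$ and $n\ge2b/\eta$ you then have $q_j\ge Q$ and $D\ge\alpha n$, and the prefactor is $n/Q=(1-\tau_0)n$. The only remaining $n$-dependence is $c_{2^b,p}\le\kappa_b+\pi/n$. This also removes the need for your uniform-convergence and continuity argument for $A^{(n)}$ over a range of $\tau$.
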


\begin{proof}
By \Cref{lem:direct-block-fourier}, it suffices to bound
\[
  \sum_{a\in\C\setminus\{0\}}\prod_{j=1}^M h_j(a_{B_j}),
  \qquad h_j(u)=\|\widehat\sigma_j(u)\|_1.
\]
Write
\[
  k=n-t+1,\qquad q_j=\frac{n}{k+m_j-1},
  \qquad D=\left\lceil\frac tb\right\rceil.
\]
To bound this sum, we introduce a tilt parameter \(\lambda\geq1\). Choose
$Q\geq2$ such that $q_j\geq Q$ for every block.
Since $k+m_j-1=n/q_j\leq n$, the hypothesis of
\Cref{lem:unequal-block-bl} is satisfied.
Recall that $h_j(0)=1$ for all $j$. We define
\[
  g_j(0)=1,\qquad g_j(u)=\lambda h_j(u)\quad(u\neq0).
\]
Every nonzero codeword of $\C$ has at least $t$ nonzero
coordinates by \Cref{fact:dual-distance}. Since a block contains
at most $b$ coordinates, a nonzero codeword meets at least
$D$ blocks. If it meets $w\geq D$ blocks, then
\begin{equation}
\label{eq:lambdaD}
  \prod_j h_j(a_{B_j})
  =\lambda^{-w}\prod_j g_j(a_{B_j})
  \leq\lambda^{-D}\prod_j g_j(a_{B_j}).
\end{equation}
The above bound holds for every $\lambda\geq1$. We now have
\begin{align}
  \Adv(\rho_{s_0},\rho_{s_1})
  &\leq \sum_{a\in\C\setminus\{0\}}
      \prod_j h_j(a_{B_j})
      &&\mbox{(\Cref{lem:direct-block-fourier})}\nonumber\\ 
  &\leq \lambda^{-D}
      \sum_{a\in\C\setminus\{0\}}
      \prod_j g_j(a_{B_j})
      &&\mbox{(\cref{eq:lambdaD})} \nonumber\\
  &\leq \lambda^{-D}
      \sum_{a\in\C}\prod_j g_j(a_{B_j})
      &&\mbox{(since $g_j(\cdot)\geq0$)} \nonumber\\
  &\leq \lambda^{-D}\prod_{j=1}^M
      \left(\sum_u g_j(u)^{q_j}\right)^{1/q_j}
      &&\mbox{(\Cref{lem:unequal-block-bl}).}\label{eq:tilted-mds-sums}
\end{align}
We will later select an appropriate \(\lambda\geq 1\) to make the upper bound exponentially small. For each block, let
\[
  S_j=\sum_u g_j(u)^Q.
\]
From the monotonicity of $\ell_p$ norm in $p$, we have $\|\mathbf{x}\|_{q_j}\le \|\mathbf{x}\|_{Q}$ when $q_j\ge Q$. For \(\mathbf{x}=(g_j(u))_u\), this gives
\[
  \left(\sum_u g_j(u)^{q_j}\right)^{1/q_j}
  \leq
  \left(\sum_u g_j(u)^Q\right)^{1/Q}
  =S_j^{1/Q}.
\]
Substituting into \cref{eq:tilted-mds-sums}, we obtain
\[
  \Adv(\rho_{s_0},\rho_{s_1})
  \leq\lambda^{-D}\prod_{j=1}^M S_j^{1/Q}.
\]
Since $d_j\leq2^b$, view each block state as a
$2^b\times2^b$ matrix by adding zero rows and columns.
It remains a density matrix, and its Fourier coefficients retain
the same trace norms. Applying \Cref{lem:block-fourier-budget}
in this common dimension gives 
\[
  S_j=1+\lambda^Q\sum_{u\neq0}h_j(u)^Q
  \leq1+(2^b-1)c_{2^b,p}^{\,Q-2}\lambda^Q,
\]
where $
  c_{d,p}
  =\min\left\{1,
       \frac d\pi\sin\left(\frac\pi d\right)+\frac{\pi}{p}
       \right\}$ from \Cref{lem:block-fourier-budget}. Therefore,
\begin{equation}
  \Adv(\rho_{s_0},\rho_{s_1})
  \leq\lambda^{-D}
  \left(1+(2^b-1)c_{2^b,p}^{\,Q-2}\lambda^Q\right)^{M/Q}.
  \label{eq:common-block-bound}
\end{equation} 
To rewrite \cref{eq:common-block-bound}, set
\begin{equation}\label{eq:defA}
  A_{b,p}(Q)=(2^b-1)c_{2^b,p}^{\,Q-2},
  \qquad x=x(\lambda)=\lambda^Q\geq1.
\end{equation}
Let $R_n$ denote the right-hand side of that bound. Since
$\lambda=x^{1/Q}$,
\[
  R_n=x^{-D/Q}\left(1+A_{b,p}(Q)x\right)^{M/Q},
\]
and hence
\[
  \frac{\log_2R_n}{n}
  =\frac1Q\left[-\frac Dn\log_2x
    +\frac Mn\log_2\left(1+A_{b,p}(Q)x\right)\right].
\]
For fixed $x\geq1$ and a fixed $0<\alpha<1$ with
$D/n\geq\alpha$, the inequalities $M\leq n$ and
$\log_2x\geq0$ give the exact bound 
\[
\begin{aligned}
  \frac{\log_2R_n}{n}
  &\leq\frac1Q\left[
       \log_2(1+A_{b,p}(Q)x)-\alpha\log_2x\right]\\
  &=\frac1Q\log_2\left(
       \frac{1+A_{b,p}(Q)x}{x^\alpha}\right).
\end{aligned}
\]
We want the above quantity to be bounded above by a fixed negative
number for an appropriate choice of $x\geq 1$, which gives exponentially small
distinguishing advantage. Recall the definitions from \cref{eq:defK,eq:defAK,eq:defA}. 
For fixed $b,Q$, $A_{b,p}(Q)\to A_b(Q)$ as $n\to\infty$,
uniformly over $p>n$, since $0\le c_{2^b,p}-\kappa_b\le\pi/n$. 
We first examine the above expression with $A=A_b(Q)$, and then
take $n$ sufficiently large.

Fix $A>0$, $Q\geq2$, and $0<\alpha<1$. With $x=\lambda^Q$,
\begin{align}
  -\alpha\log_2\lambda+\frac1Q\log_2(1+A\lambda^Q)
  &=\frac1Q\left[\log_2(1+Ax)-\alpha\log_2x\right] \nonumber\\
  &=\frac1Q\log_2\left(\frac{1+Ax}{x^\alpha}\right). \label{eq:aboveequality}
\end{align}

For every $x>0$, \Cref{fact:amgm} gives
\[
  1+Ax
  =(1-\alpha)\frac1{1-\alpha}+\alpha\frac{Ax}{\alpha}
  \geq\frac{A^\alpha x^\alpha}
              {\alpha^\alpha(1-\alpha)^{1-\alpha}}.
\]
Since \(K(\alpha)=\alpha(1-\alpha)^{(1-\alpha)/\alpha}\),
the denominator on the right is \(K(\alpha)^\alpha\). Thus
\begin{equation}
  \frac{1+Ax}{x^\alpha}
  \geq\left(\frac{A}{K(\alpha)}\right)^\alpha,
  \label{eq:common-amgm}
\end{equation}
with equality at
$ x_*:=\frac{\alpha}{(1-\alpha)A}.
$ 

We now show that \(A<K(\alpha)\) allows an admissible choice
\(x=\lambda^Q\geq1\) for which the distinguishing advantage is exponentially small.
Since $0<1-\alpha<1$ and $(1-\alpha)/\alpha>0$,
\[
  0<(1-\alpha)^{(1-\alpha)/\alpha}<1,
  \qquad 0<K(\alpha)<\alpha<1.
\]
Therefore, if $A<K(\alpha)$, we have
\[
  x_*=\frac{\alpha}{(1-\alpha)A}
  >\frac{\alpha}{(1-\alpha)\alpha}
  =\frac1{1-\alpha}>1.
\]
Since $x=\lambda^Q\geq1$ is equivalent to $\lambda\geq1$, the
scalar exponent is negative for some admissible $\lambda$ when
$A<K(\alpha)$. Under this condition, choose
\begin{equation}
  \lambda=x_*^{1/Q}
  =\left(\frac{\alpha}{(1-\alpha)A}\right)^{1/Q}>1.
  \label{eq:common-tilt-choice}
\end{equation}
The equality in \cref{eq:common-amgm} along with \cref{eq:aboveequality} now gives 
\begin{equation}
\begin{aligned}
  -\alpha\log_2\lambda+\frac1Q\log_2(1+A\lambda^Q)
  &=\frac1Q\log_2\left(\frac{1+Ax_*}{x_*^\alpha}\right)=\frac{\alpha}{Q}\log_2\frac{A}{K(\alpha)}<0.
\end{aligned}
  \label{eq:common-negative-exponent}
\end{equation}
To show that $K$ is strictly increasing, we have:
\[
  \ln K(\alpha)
  =\ln\alpha+\frac{1-\alpha}{\alpha}\ln(1-\alpha).
\]
Differentiation gives
\[
  \frac{d}{d\alpha}\ln K(\alpha)
  =-\frac{\ln(1-\alpha)}{\alpha^2}>0
  \qquad(0<\alpha<1).
\]
Since $K(\alpha)>0$, it follows that
$K'(\alpha)=-K(\alpha)\ln(1-\alpha)/\alpha^2>0$.

As $\tau$ increases from $1/2$ to $1$, the value
$Q=1/(1-\tau)$ increases from $2$ to infinity.
Since $0<\kappa_b<1$, its power $\kappa_b^{Q-2}$
strictly decreases from one to zero. Therefore,
$A_b(1/(1-\tau))=(2^b-1)\kappa_b^{Q-2}$ strictly decreases
from $2^b-1$ to zero. 

Meanwhile, as $\tau$ increases, $\tau/b$ increases, so
$K(\tau/b)$ strictly increases. 
At $\tau=1/2$, we have
$K(1/(2b))<1\leq2^b-1$. Moreover, for every $\tau>1/2$,
\[
  K(\tau/b)\geq K(1/(2b))>0.
\]
Since $A_b(1/(1-\tau))$ tends to zero as
$\tau\to1$, it is eventually smaller than
$K(\tau/b)$.
Since both functions are continuous and monotone, they cross exactly once.
This proves existence and uniqueness in
\cref{eq:unequal-threshold-root}. In particular,
\[
  \tau>\widehat\tau_b
  \quad\Longleftrightarrow\quad
  A_b\left(\frac1{1-\tau}\right)<K\left(\frac\tau b\right).
\]
Fix $\eta>0$. If $\widehat\tau_b+\eta>1$,
the claim is vacuous. Otherwise, set
\[
  \tau_0=\widehat\tau_b+\frac\eta2<1,
  \qquad Q=\frac1{1-\tau_0}>2,
  \qquad \alpha=\frac{\tau_0}{b},
  \qquad A=A_b(Q).
\]
Since $\tau_0>\widehat\tau_b$, we have $A<K(\alpha)$.
From \cref{eq:common-tilt-choice}, let
$x_*=\lambda^Q=\alpha/((1-\alpha)A)$.
For $t/n\geq\widehat\tau_b+\eta$ and $n\geq2b/\eta$, we have
\[
  \frac1{q_j}
  =1-\frac tn+\frac{m_j}{n}
  \leq1-\frac tn+\frac bn
  \leq1-\tau_0=\frac1Q.
\]
Since $q_j\geq Q$, $M\leq n$, and $D/n\geq\alpha$, we can
apply \cref{eq:common-block-bound}. By \cref{eq:common-negative-exponent} and continuity, there is
a constant $c>0$ such that, for all sufficiently large $n$
and every $p>n$,
\begin{equation}
  \frac{\log_2R_n}{n}
  \leq\frac1Q\log_2\left(
       \frac{1+A_{b,p}(Q)x_*}{x_*^\alpha}\right)
  \leq-c.
  \label{eq:finite-negative-exponent}
\end{equation}
Hence
$\Adv(\rho_{s_0},\rho_{s_1})\leq R_n\leq2^{-cn}$, as claimed.

\end{proof}

\subsection{Threshold rates and the unentangled case}

When every block has size $b$, we can apply
\Cref{lem:unequal-block-bl} with
\[
  M=\frac nb,\qquad
  q_j=q:=\frac{n}{k+b-1}
  \quad\text{for every }j.
\]
Suppose $t/n\to\tau\in(1/2,1)$, and put
$Q=1/(1-\tau)$.
Then $q\to Q>2$, so $q\geq2$ for sufficiently
large $n$.

Set
\[
  D=\left\lceil\frac tb\right\rceil,
  \qquad A_{b,p}(q)=(2^b-1)c_{2^b,p}^{\,q-2}.
\]
As before, let $R_n$ denote the right-hand side of
\cref{eq:common-block-bound}, with the  above parameters.
\[
  R_n=\lambda^{-D}
      \left(1+A_{b,p}(q)\lambda^q\right)^{n/(bq)}.
\]
Keep $\lambda\geq1$ fixed, and set $x=\lambda^Q$.
Since $q\to Q$, $D/n\to\tau/b$,
$A_{b,p}(q)\to A_b(Q)$, and $\lambda^q\to x$,
we obtain
\begin{align}
  \lim_{n\to\infty}\frac{\log_2R_n}{n}
  &=\frac1{bQ}
       \left[\log_2(1+A_b(Q)x)-\tau\log_2x\right]=\frac1{bQ}\log_2\left(
       \frac{1+A_b(Q)x}{x^\tau}\right). \label{eq:equal-exponent}
\end{align}
When blocks are of equal size, we know the exact number of blocks which allows us to improve the bound as compared to \Cref{thm:constant-blocks}. This is because \Cref{thm:constant-blocks} uses $M\leq n$ whereas when every block is of size $b$, $M=\frac{n}{b}$. 
\begin{theorem}[Threshold for equal blocks]
  \label{thm:block-threshold}
For each fixed integer $b\geq1$, there is a unique
$Q_b>2$ satisfying
\begin{equation}
  A_b(Q_b)=(Q_b-1)Q_b^{-Q_b/(Q_b-1)}.
  \label{eq:block-threshold-root}
\end{equation}
Set $\tau_b=1-1/Q_b$.
For $\tau\in(1/2,1)$, the expression on the
right-hand side of \cref{eq:equal-exponent} is
negative for some $x\geq1$ if
$\tau>\tau_b$.

For every fixed $\eta>0$, there are constants
$c>0$ and $n_0$ such that equal blocks of size
$b$ satisfy $
  \Adv(\rho_{s_0},\rho_{s_1})\leq2^{-cn}$
whenever $b\mid n$, $n\geq n_0$, and
$t/n\geq\tau_b+\eta$.
\end{theorem}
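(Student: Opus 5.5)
We minimize the limiting exponent in \cref{eq:equal-exponent} over $x$, as in the proof of \Cref{thm:constant-blocks}. Fix $\tau\in(1/2,1)$, put $Q=1/(1-\tau)$ and $A=A_b(Q)$. Apply \Cref{fact:amgm} with weight $\alpha=\tau$, exactly as in \cref{eq:common-amgm}. This gives $\inf_{x>0}(1+Ax)/x^\tau=(A/K(\tau))^\tau$, attained at $x_*=\tau/((1-\tau)A)$. Two conditions then make the exponent negative at an admissible point: $A<K(\tau)$, and $x_*\geq1$. The second follows from the first, since $K(\tau)<\tau$ gives $x_*>1/(1-\tau)>1$, as in the earlier proof.

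Next we rewrite the condition in terms of $Q$. With $\tau=1-1/Q$ and $1-\tau=1/Q$, we get $K(\tau)=\tau(1-\tau)^{(1-\tau)/\tau}=\frac{Q-1}{Q}\,Q^{-1/(Q-1)}=(Q-1)Q^{-Q/(Q-1)}$. So the condition becomes $A_b(Q)<G(Q):=(Q-1)Q^{-Q/(Q-1)}$, which is the comparison in \cref{eq:block-threshold-root}.

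For existence and uniqueness of $Q_b$, use the monotonicity facts already proved. $Q\mapsto A_b(Q)=(2^b-1)\kappa_b^{Q-2}$ is continuous and strictly decreasing from $2^b-1\geq1$ at $Q=2$ to $0$ at infinity, since $0<\kappa_b<1$. $G(Q)=K(1-1/Q)$ is strictly increasing in $Q$, because $K$ is strictly increasing and $1-1/Q$ is increasing. Also $G(2)=1/4<1\leq A_b(2)$, while $G(Q)\to1$ and $A_b(Q)\to0$. Hence the curves cross at exactly one $Q_b>2$, and $A_b(Q)<G(Q)$ if and only if $Q>Q_b$, i.e.\ $\tau>\tau_b$. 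This proves the first claim.

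For the finite-$n$ statement, fix $\eta$ and set $\tau_0=\tau_b+\eta/2$, with $Q_0=1/(1-\tau_0)$ and $x=x_*(\tau_0)$. For $t/n\geq\tau_b+\eta$ and large $n$ we get $1/q=(k+b-1)/n\leq1-\tau_0$, so $q\geq Q_0$. By $\ell_p$ monotonicity, $q$ can be replaced by $Q_0$ in \cref{eq:common-block-bound}, as in \Cref{thm:constant-blocks}.

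Then we use $M=n/b$ exactly instead of $M\leq n$, and $D/n\geq(\tau_b+\eta)/b\geq\tau_0/b$. This gives
\[
\frac{\log_2 R_n}{n}\leq \frac{1}{bQ_0}\log_2\frac{1+A_{b,p}(Q_0)x}{x^{\tau_0}}.
\]
As $n\to\infty$, uniformly over $p>n$, the right-hand side tends to $\frac{\tau_0}{bQ_0}\log_2(A_b(Q_0)/K(\tau_0))<0$. So for $n\geq n_0$ it is at most $-c$.

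The main obstacle is this last step. The true $t/n$ may exceed $\tau_0$, so the monotone replacements must all go the right way. Using the fixed $Q_0$ rather than the actual $q$ needs $x\geq1$, so that raising the tilt-dependent parts only loosens the bound. The $D$ term also needs $\log_2x\geq0$, so that a larger $D$ only helps.
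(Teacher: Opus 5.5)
Your proposal is correct and follows the paper's proof. The shared steps are the AM--GM minimization with $\alpha=\tau$, the identity $K(1-1/Q)=(Q-1)Q^{-Q/(Q-1)}$, the unique crossing from strict monotonicity of $A_b$ and $K$, and the finite-$n$ step via $\tau_0=\tau_b+\eta/2$, $\ell_p$ monotonicity and $M=n/b$, which you spell out more explicitly than the paper's appeal to the ``same continuity and finite-bound argument''. One small correction to your closing remark: replacing $q$ by $Q_0$ is a pure norm inequality and does not need $x\geq1$; what needs $\lambda\geq1$ is the tilting step $\lambda^{-w}\leq\lambda^{-D}$ and the sign of the $D$-term, and your choice $x_*>1$ already secures both.
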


\begin{proof}
We can apply the parameter calculation from the proof of
\Cref{thm:constant-blocks} to
\cref{eq:equal-exponent}, with
\[
  Q=\frac1{1-\tau},\qquad
  A=A_b(Q),\qquad
  \alpha=\tau.
\]
The factor $1/b$ does not affect the sign.
Thus a choice of $x\geq1$ gives a negative
exponent when $A_b(Q)<K(\tau)$.
Since $\tau=(Q-1)/Q$,
\[
  K(\tau)
  =\frac{Q-1}{Q}\left(\frac1Q\right)^{1/(Q-1)}
  =(Q-1)Q^{-Q/(Q-1)}.
\]

As shown in the previous proof, $K$ strictly
increases, while $A_b(Q)$ strictly decreases with increasing $\tau$.
At $Q=2$, their values are $1/4$ and
$2^b-1$, respectively.
As $Q\to\infty$, we have $A_b(Q)\to0$, whereas
\[
  K(1-1/Q)\geq K(1/2)=1/4.
\]
Continuity and strict monotonicity therefore show that there is a unique crossing.
As before, we get exponentially small distinguishing advantage when the exponent is negative, which happens when $\tau>\tau_b$. This follows by the same continuity and finite-bound argument as in the proof
of \Cref{thm:constant-blocks}, applied to the
equal-block bound.

\end{proof}
The values in \Cref{tab:block-thresholds} are obtained
by solving \cref{eq:block-threshold-root} numerically with the help of ChatGPT Astra.

\begin{table}[htbp]
  \centering
  \renewcommand{\arraystretch}{1.2}
  \setlength{\tabcolsep}{8pt}
  \caption{Approximate values of the sufficient threshold rates for equal independent blocks with one output qubit per share ($\ell=b$).}
  \label{tab:block-thresholds}
  \begin{tabular}{ccc}
    Block size $b$&Output dimension $2^b$&Threshold $\tau_b$\\
    \hline
    1&2&0.7333927776\\
    2&4&0.9331803707\\
    3&8&0.9874928530\\
    4&16&0.9976497793\\
    5&32&0.9995330485\\
    6&64&0.9999031039
  \end{tabular}
\end{table}
\begin{corollary}[Sharper unentangled threshold]
  \label{cor:direct-product}
For every fixed $\eta>0$, the distinguishing advantage for unentangled
local one-qubit leakage is $2^{-\Omega(n)}$ whenever
$t/n\geq\tau_\star+\eta$, where
$\tau_\star:=\tau_1\approx0.7333927776$,
with $\tau_1$ defined in \Cref{thm:block-threshold}.
\end{corollary}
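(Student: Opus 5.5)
The corollary is the specialization $b=1$ of \Cref{thm:block-threshold}. The plan is first to check that the unentangled model is the singleton-block model, and then to confirm that nothing in that theorem's hypotheses is lost.

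\emph{Reduction to singleton blocks.} Take the partition $\mathcal B=\{\{1\},\ldots,\{n\}\}$, so $M=n$, $m_j=1$ and $d_j=2$. By the remark after \Cref{def:block-leakage}, an unentangled local one-qubit leakage strategy $y_i\mapsto\rho_i(y_i)$ is exactly a disjoint-block strategy with $\sigma_j=\rho_j$. The adversary's state $\rho_s$ in \Cref{def:unentangled-leakage} therefore coincides with the state in \Cref{def:block-leakage}, and every bound proved for block strategies applies verbatim. The divisibility condition $b\mid n$ in \Cref{thm:block-threshold} is automatic for $b=1$, so no padding or rounding of block sizes is needed.

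\emph{Applying the equal-block threshold.} With $b=1$, \Cref{thm:block-threshold} gives a unique $Q_1>2$ with $A_1(Q_1)=(Q_1-1)Q_1^{-Q_1/(Q_1-1)}$, where $A_1(Q)=\kappa_1^{Q-2}$ and $\kappa_1=(2/\pi)\sin(\pi/2)=2/\pi$. Setting $\tau_\star:=\tau_1=1-1/Q_1$, the theorem states that for every fixed $\eta>0$ there are $c>0$ and $n_0$ such that $\Adv(\rho_{s_0},\rho_{s_1})\le 2^{-cn}$ whenever $n\ge n_0$ and $t/n\ge\tau_\star+\eta$. This is precisely $2^{-\Omega(n)}$. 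The finitely many $n<n_0$ are absorbed by the constant in $\Omega(\cdot)$, or the statement is simply read asymptotically, as in \Cref{def:statistical-security}.

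\emph{Numerical value.} The only step that is not purely formal is confirming $\tau_1\approx0.7333927776$. I would solve $(2/\pi)^{Q-2}=(Q-1)Q^{-Q/(Q-1)}$ numerically. The left side decreases from $1$ at $Q=2$, while the right side starts at $1/4$ and increases, so bisection on $Q\in(2,10)$ locates the unique root. Monotonicity was already established in the proof of \Cref{thm:block-threshold}, so the root is unique and bisection converges to it. This matches the first row of \Cref{tab:block-thresholds}. I expect no genuine obstacle here; the only care needed is that the constant $c$ may depend on $\eta$, which is allowed by the statement.
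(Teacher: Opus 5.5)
Your proposal is correct and matches the paper, which obtains this corollary simply by specializing \Cref{thm:block-threshold} to $b=1$. With singleton blocks, $b\mid n$ is automatic and $A_1(Q)=(2/\pi)^{Q-2}$, and the value of $\tau_1$ comes from solving \cref{eq:block-threshold-root} numerically.

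One small correction: the cases $n<n_0$ cannot in general be absorbed into the constant. For example, with $n=t=1$ the single share is the secret, so the advantage can equal $1$. Your alternative asymptotic reading of $2^{-\Omega(n)}$, as in \Cref{def:statistical-security}, is therefore the right one.
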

\section{Pre-shared entanglement}
\label{sec:entanglement}

We now consider entangled local leakage as in
\Cref{def:physical-leakage}, where the adversary may jointly process
and measure the leaked qubits together with its retained register $R$.
We bound its distinguishing advantage by reducing to the setting in
which it receives the complete classical shares in the entangled
support $E$, together with the unentangled one-qubit leakage from
the remaining devices. Knowing these shares and the fixed leakage
strategy allows the adversary to prepare the joint state of the
supported devices' outputs and its reference register. Conditioned
on the revealed shares, the remaining problem reduces to unentangled
leakage from a smaller Shamir scheme.

\subsection{Shamir sharing with revealed shares}

\begin{lemma}[Conditioning on revealed shares]
  \label{lem:conditional-shamir}
Let $E\subseteq[n]$ have size $r<t$, and let $Y_E$ denote the shares in $E$.
For every secret $s$, the random variable $Y_E$ is uniform over $\F_p^E$. Fix $z\in\F_p^E$.  Conditioned on $Y_E=z$, the remaining shares are coordinatewise
affine images of a Shamir sharing with $n-r$ parties and threshold $t-r$. More precisely, define $
  L_E(X)=\prod_{i\in E}(X-x_i),$ 
and let $h_z$ be the polynomial of degree less than $r$ satisfying
$h_z(x_i)=z_i$ for every $i\in E$. When $E=\varnothing$, take
$L_E=1$ and $h_z=0$. Then the smaller Shamir secret (for a fixed $s$ and $z$) is
\begin{equation}
  u_{s,z}=\frac{s-h_z(0)}{L_E(0)}.
  \label{eq:conditioned-secret}
\end{equation}
The share from the smaller scheme is transformed into the original share by
\begin{equation*}
  g(x_i)\longmapsto h_z(x_i)+L_E(x_i)g(x_i)=f(x_i),
  \qquad i\notin E,
\end{equation*}
where $g$ is the sharing polynomial of the smaller Shamir scheme.
\end{lemma}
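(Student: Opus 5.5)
The argument is pure polynomial algebra: a bijection between Shamir sharing polynomials and pairs $(z,g)$, followed by a count showing the induced distribution is uniform.

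\textbf{The bijection.} Fix a secret $s$ and let $f$ be uniform among polynomials of degree less than $t$ with $f(0)=s$. Since $|E|=r<t$ and the points $x_i$ are distinct, I would use division with remainder by $L_E$, which has degree $r$. Every $f$ with $\deg f<t$ can be written uniquely as
\[
  f=h+L_E\,g,\qquad \deg h<r,\quad \deg g<t-r.
\]
Here $h$ is the remainder, and it is the unique interpolant of $f$ on the points $\{x_i\}_{i\in E}$. Hence $h=h_z$ exactly when $Y_E=z$.

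The condition $f(0)=s$ reads $h_z(0)+L_E(0)g(0)=s$. Because every $x_i$ is nonzero, $L_E(0)\neq0$, so this is equivalent to $g(0)=u_{s,z}$ as in \eqref{eq:conditioned-secret}.

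The map $(z,g)\mapsto h_z+L_Eg$ is therefore a bijection between two sets:
\begin{itemize}
  \item pairs consisting of $z\in\F_p^E$ and a polynomial $g$ with $\deg g<t-r$ and $g(0)=u_{s,z}$;
  \item polynomials $f$ with $\deg f<t$ and $f(0)=s$.
\end{itemize}
For $E=\varnothing$ this degenerates correctly to $h_z=0$, $L_E=1$, $u_{s,z}=s$.

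\textbf{Uniformity.} I would check that the counts match. For each $z$ there are $p^{t-r-1}$ admissible $g$: the degree bound $t-r\ge1$ allows this, with $g(0)$ fixed and the other $t-r-1$ coefficients free. Summing over the $p^r$ values of $z$ gives $p^r\cdot p^{t-r-1}=p^{t-1}$, which is the number of admissible $f$.

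Since $f$ is uniform, the pair $(z,g)$ is uniform on the product-like set, where for each $z$ the fiber has the same size. Consequently:
\begin{itemize}
  \item $Y_E$ is uniform over $\F_p^E$, independently of $s$;
  \item conditioned on $Y_E=z$, the polynomial $g$ is uniform among polynomials of degree less than $t-r$ with $g(0)=u_{s,z}$.
\end{itemize}
The second point says exactly that $(g(x_i))_{i\notin E}$ is a Shamir sharing of $u_{s,z}$ with threshold $t-r$. It uses the $n-r$ distinct nonzero points $\{x_i\}_{i\notin E}$, so the parameters remain legal since $n-r<p$.

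\textbf{The affine map.} For $i\notin E$ we get $f(x_i)=h_z(x_i)+L_E(x_i)g(x_i)$. Since $L_E(x_i)\neq0$ for $i\notin E$, this is an invertible coordinatewise affine map for fixed $z$. Invertibility is what later lets leakage functions be composed with it.

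\textbf{Main obstacle.} The only real subtlety is justifying the uniqueness of the decomposition $f=h+L_Eg$ together with the degree bound on $g$. This follows from the division algorithm, using $\deg f<t$, which forces $\deg g<t-r$. The nonvanishing of $L_E(0)$, which relies on nonzero evaluation points, also needs care.
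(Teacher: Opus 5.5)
Your proposal is correct and follows essentially the same route as the paper: the unique decomposition $f=h_z+L_Eg$ with $\deg g<t-r$, the equivalence $f(0)=s\iff g(0)=u_{s,z}$ via $L_E(0)\neq0$, the fiber count $p^{t-r-1}$ giving uniformity of $Y_E$ and of $g$ given $Y_E=z$, and invertibility of the affine map from $L_E(x_i)\neq0$. The only difference is cosmetic: you get a global bijection $(z,g)\leftrightarrow f$ from the division algorithm, whereas the paper fixes $z$ and characterizes the consistent $f$ by divisibility of $f-h_z$ by $L_E$.
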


\begin{proof}
Fix $s$ and $z\in\F_p^E$.  A polynomial $f$ agrees with $h_z$ at every
point indexed by $E$ exactly when $f-h_z$ is divisible by $L_E$.  Hence every
polynomial consistent with $Y_E=z$ has a unique representation
\begin{equation}
  f(X)=h_z(X)+L_E(X)g(X).
  \label{eq:conditioned-polynomial}
\end{equation}
Because $\deg L_E=r$ and $\deg h_z<r$, the bound $\deg f<t$ is equivalent
to $\deg g<t-r$. 
The evaluation points are nonzero, so $L_E(0)\neq0$.  Thus $f(0)=s$ is
equivalent to $g(0)=u_{s,z}$, where $u_{s,z}$ is given by
\cref{eq:conditioned-secret}.

There are exactly $p^{t-r-1}$ polynomials $g$ of degree less than $t-r$
with $u_{s,z}$ as the constant term.
From \cref{eq:conditioned-polynomial}, this is also the number of sharing polynomials for the original Shamir scheme consistent with $Y_E=z$.  Every original sharing polynomial
has probability $p^{-(t-1)}$.  Therefore,
\begin{equation}
  \Pr[Y_E=z\mid s]=p^{t-r-1}p^{-(t-1)}=p^{-r},
  \label{eq:revealed-uniform}
\end{equation}
independently of $s$ and $z$.  Conditional on $Y_E=z$, each consistent
polynomial $f$ therefore has probability
$p^{-(t-1)}/p^{-r}=p^{-(t-r-1)}$.  Thus, the non-constant coefficients of $g$ are
independent and uniform, exactly as needed in Shamir sharing. 

Finally, for $i\notin E$, \cref{eq:conditioned-polynomial}
gives $f(x_i)=h_z(x_i)+L_E(x_i)g(x_i)$.  Since the evaluation points are
distinct, $L_E(x_i)\neq0$.  Thus each remaining original share is an invertible affine
image of the corresponding share of $g$.
\end{proof}

\begin{theorem}
  \label{thm:ent-finite}
Let $E\subseteq[n]$ have size $r<t$.  If Shamir sharing with $n-r$ parties
and threshold $t-r$, on the remaining evaluation points, is
$\varepsilon$-secure against unentangled local one-qubit leakage, then the
original scheme is $\varepsilon$-secure against every entangled local
leakage strategy supported on $E$.
\end{theorem}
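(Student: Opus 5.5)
We reduce the entangled experiment to a full-revelation experiment, in which the adversary gets the classical shares $Y_E$ instead of the entangled outputs. Fix an entangled strategy supported on $E$: an initial state $\tau^{A_E R}$, local channels $\mathcal L_{i,y_i}$ for $i\in E$, and maps $\rho_i(\cdot)$ for $i\notin E$. Define the revealed state
\[
  \Gamma_s=\E_{y\leftarrow\C_s}\,|y_E\rangle\langle y_E|\otimes\bigotimes_{i\notin E}\rho_i^{B_i}(y_i),
\]
where $|y_E\rangle\langle y_E|$ is a classical register holding the shares in $E$. Consider the channel $\Lambda$ that reads this register, prepares a fresh copy of $\tau^{A_E R}$, applies $\bigotimes_{i\in E}\mathcal L_{i,y_i}\otimes\operatorname{id}_R$, and discards the classical register. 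The channel $\Lambda$ depends only on the fixed strategy, not on $s$. Because $\Lambda$ acts on $y_E$ jointly with the rest of the state, we get $\Lambda(\Gamma_s)=\rho_s^{(E,R)}$ exactly, including the correlations between $y_E$ and the shares outside $E$. \Cref{fact:data-processing} then gives $\Adv(\rho_{s_0}^{(E,R)},\rho_{s_1}^{(E,R)})\le\Adv(\Gamma_{s_0},\Gamma_{s_1})$.

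Next we bound $\Adv(\Gamma_{s_0},\Gamma_{s_1})$ by conditioning on the classical register. By \Cref{lem:conditional-shamir}, $Y_E$ is uniform over $\F_p^E$ for every secret, so $\Gamma_s=p^{-r}\sum_z|z\rangle\langle z|\otimes\rho_{s,z}$, where $\rho_{s,z}$ is the conditional leakage from the parties outside $E$. Since the $z$-blocks are orthogonal,
\[
  \tfrac12\|\Gamma_{s_0}-\Gamma_{s_1}\|_1=p^{-r}\sum_z\tfrac12\|\rho_{s_0,z}-\rho_{s_1,z}\|_1,
\]
and it suffices to bound each term by $\varepsilon$.

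Fix $z$. By \Cref{lem:conditional-shamir}, conditioned on $Y_E=z$, the remaining shares are $f(x_i)=h_z(x_i)+L_E(x_i)g(x_i)$ for $i\notin E$. Here $g$ is a uniformly random Shamir polynomial with $n-r$ parties, threshold $t-r$, evaluation points $\{x_i\}_{i\notin E}$, and secret $u_{s,z}$. Define the modified local maps $\rho'_i(w)=\rho_i(h_z(x_i)+L_E(x_i)w)$. Each is still a legal unentangled one-qubit leakage map, since it is applied locally to the small-scheme share $w=g(x_i)$. Thus $\rho_{s,z}$ is exactly the adversary's state for this unentangled strategy in the smaller scheme with secret $u_{s,z}$. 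The hypothesis gives $\tfrac12\|\rho_{s_0,z}-\rho_{s_1,z}\|_1\le\varepsilon$, and this bound holds even when $u_{s_0,z}=u_{s_1,z}$, where the difference is zero. Averaging over $z$ completes the proof.

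The main point requiring care is the first step. Data processing must be applied to the joint state that keeps $y_E$ correlated with the other shares, not to a product of marginals. The channel $\Lambda$ must be a legitimate secret-independent CPTP map; this holds because $\tau^{A_E R}$ and the maps $\mathcal L_{i,y_i}$ are fixed before the secret is chosen. The edge case $E=\varnothing$ is trivial.
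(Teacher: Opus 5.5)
Your proposal is correct and follows the paper's proof essentially step for step. Your $\Gamma_s$ is the paper's $\widetilde\rho_s$, and the remaining steps match: the secret-independent channel that prepares $\sigma_{B_ER}(z)$ followed by data processing, the block-diagonal averaging over the uniform $Y_E=z$, and the reduction via \Cref{lem:conditional-shamir} to the smaller scheme with the maps $w\mapsto\rho_i(h_z(x_i)+L_E(x_i)w)$.
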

\begin{proof}
Fix an arbitrary entangled local leakage strategy supported on $E$,
as in \Cref{def:physical-leakage}, and write its leaked state as
$\rho_s^{(E,R)}$. The strategy specifies the initial shared state
$\tau^{A_E R}$ and the local leakage channels
$\mathcal L_{i,y_i}$. Together with the share vector $z$,
these determine the state $\sigma_{B_E R}(z)$. Let
\begin{equation*}
  \overline\rho_{s,z}
  =
  \E_{Y\leftarrow\C_s}\left[
    \bigotimes_{i\notin E}\rho_i(Y_i)
    \,\middle|\,
    Y_E=z
  \right]
\end{equation*}
be the leakage from the devices outside $E$, conditioned on secret $s$ and
$Y_E=z$.  By \cref{eq:revealed-uniform},
\begin{equation*}
  \rho_s^{(E,R)}
  =\frac1{p^r}\sum_{z\in\F_p^E}
    \sigma_{B_E R}(z)\otimes\overline\rho_{s,z}.
\end{equation*}
Consider instead the state in which the adversary receives the complete
classical shares in $E$, together with the leakage from the remaining
devices:
\begin{equation*}
  \widetilde\rho_s
  =\frac1{p^r}\sum_{z\in\F_p^E}
    |z\rangle\!\langle z|^Z\otimes\overline\rho_{s,z}.
\end{equation*}
There is a quantum channel that measures $Z$ in its computational
basis, prepares $\sigma_{B_E R}(z)$ on outcome $z$, and discards
$Z$. It acts as the identity on the outside leakage registers.
Preparing $\sigma_{B_E R}(z)$ requires only $z$ and the fixed
leakage strategy, including its initial shared state
$\tau^{A_E R}$ and its local leakage channels. Thus the same channel applies to all
secrets and maps
$\widetilde\rho_s$ to $\rho_s^{(E,R)}$.
Therefore,
\begin{align}
  \Adv(\rho_{s_0}^{(E,R)},\rho_{s_1}^{(E,R)})
  &\leq\Adv(\widetilde\rho_{s_0},\widetilde\rho_{s_1})
    &&\mbox{(\Cref{fact:data-processing})}\notag\\
  &=\frac1{p^r}\sum_{z\in\F_p^E}
    \Adv(\overline\rho_{s_0,z},\overline\rho_{s_1,z})
    &&\mbox{(block-diagonal states)}.
  \label{eq:revelation-average-bound}
\end{align}
Fix $z$. By \Cref{lem:conditional-shamir}, the conditioned shares
outside $E$ are affine images of a smaller Shamir sharing with
$n-r$ parties, threshold $t-r$, and secret $u_{s,z}$.
For every $i\notin E$, define the leakage function
\begin{equation*}
  \rho_{i,z}'(u)
  =\rho_i\bigl(h_z(x_i)+L_E(x_i)u\bigr),
  \qquad u\in\F_p.
\end{equation*}
If $g$ is the sharing polynomial of the smaller scheme, then
\[
  \rho_{i,z}'(g(x_i))
  =\rho_i\bigl(h_z(x_i)+L_E(x_i)g(x_i)\bigr)
  =\rho_i(f(x_i)).
\]
Thus applying $\rho_{i,z}'$ to the smaller share produces exactly
the leakage obtained by applying $\rho_i$ to the original share.
 The constants
in this computation depend on $z$ and the public evaluation points, but not on
the secret.  Thus the same functions $\rho_{i,z}'$ are used for all secrets and are fixed before the smaller sharing is sampled. So they form a valid unentangled local one-qubit
leakage strategy.

Writing $\C'_u$ for the sharing coset of the smaller scheme with
secret $u$, \Cref{lem:conditional-shamir} gives
\begin{equation*}
  \overline\rho_{s,z}
  =\E_{v\leftarrow\C'_{u_{s,z}}}
      \bigotimes_{i\notin E}\rho_{i,z}'(v_i),
\end{equation*}
where $v_i=g(x_i)$ is the smaller share at coordinate $i$.
By the assumed security of the smaller scheme, applied to the
leakage functions $\rho_{i,z}'$ and the secrets
$u_{s_0,z}$ and $u_{s_1,z}$, we obtain
$\Adv(\overline\rho_{s_0,z},\overline\rho_{s_1,z})
  \leq\varepsilon$
for every $z$. \cref{eq:revelation-average-bound} then gives
$\Adv(\rho_{s_0}^{(E,R)},\rho_{s_1}^{(E,R)})
  \leq\varepsilon.$
\end{proof}

\subsection{Security threshold}

\begin{corollary}
  \label{cor:ent-region}
Write $\tau=t/n$ and $\delta=r/n$. Suppose there are
constants $c_0,\eta>0$, independent of $n$, such that,
for all sufficiently large $n$,
\[
  1-\delta\geq c_0,
  \qquad
  \frac{\tau-\delta}{1-\delta}
    \geq\tau_\star+\eta.
\]
Then Shamir sharing is exponentially secure against entangled local
leakage supported on any set of at most $r$ devices.
In particular, it is exponentially secure when $r=o(n)$ and
$t/n>\tau_\star$ with a fixed positive gap.
\end{corollary}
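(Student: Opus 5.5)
The plan is to combine \Cref{thm:ent-finite} with \Cref{cor:direct-product}, applied to the residual Shamir scheme. First I will reduce to the case $|E|=r$ exactly. If $|E|=r'<r$, then $(\tau-r'/n)/(1-r'/n)\geq(\tau-\delta)/(1-\delta)$, because the map $\delta\mapsto(\tau-\delta)/(1-\delta)$ is decreasing in $\delta$ when $\tau<1$. Its derivative has numerator $-(1-\delta)+(\tau-\delta)=\tau-1\leq0$. So a smaller support only makes the hypothesis stronger. Alternatively, one can enlarge $E$ by adding devices that ignore the shared state.

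Next I will check $r<t$, which \Cref{thm:ent-finite} requires. Since $\tau_\star>0$, the hypothesis gives $\tau-\delta\geq(\tau_\star+\eta)c_0>0$. Hence $t-r\geq(\tau_\star+\eta)c_0 n>0$ for large $n$.

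By \Cref{thm:ent-finite}, it then suffices to show that the residual scheme is exponentially secure against unentangled local one-qubit leakage. This scheme has $n'=n-r$ parties, threshold $t'=t-r$, and evaluation points $\{x_i\}_{i\notin E}$, which are distinct, nonzero, and satisfy $p>n\geq n'$. Its threshold rate is
\[
\frac{t'}{n'}=\frac{\tau-\delta}{1-\delta}\geq\tau_\star+\eta,
\]
and $n'\geq c_0n\to\infty$.

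Applying \Cref{cor:direct-product} with this $\eta$ gives constants $c>0$ and $n_0$ with advantage at most $2^{-cn'}$ once $n'\geq n_0$. Since $n'\geq c_0 n$, this is at most $2^{-cc_0 n}$, which is exponentially small in $n$. The particular case follows directly. If $r=o(n)$ and $\tau\geq\tau_\star+2\eta$, then $\delta\to0$, so $1-\delta\geq1/2$ eventually. Also $(\tau-\delta)/(1-\delta)\to\tau$, so it is at least $\tau_\star+\eta$ for large $n$.

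The main point needing care is uniformity. The constants in \Cref{cor:direct-product}, traced back through \Cref{thm:block-threshold} and \Cref{thm:constant-blocks}, must depend only on $\eta$ and not on the particular evaluation points, the prime $p>n'$, or the leakage maps. Checking this in the proof of \Cref{thm:constant-blocks}, the bound \eqref{eq:finite-negative-exponent} holds for every $p>n$ and arbitrary nonzero distinct points. This is because it uses only the MDS property and \Cref{lem:block-fourier-budget}. The requirement there is $n'\geq\max(n_0,2b/\eta)$, which holds for large $n$ since $n'\geq c_0n$.

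One caveat concerns \Cref{thm:block-threshold}, which is stated for rates near a limit $\tau$. If the residual rate $t'/n'$ varies with $n$, I will instead invoke \Cref{thm:constant-blocks} with $b=1$. Its threshold is $\widehat\tau_1$, which is defined by $A_1(1/(1-\tau))=K(\tau)$ and therefore coincides with $\tau_1=\tau_\star$. That theorem gives the uniform statement for every $t'/n'\geq\tau_\star+\eta$, so the whole reduction goes through.
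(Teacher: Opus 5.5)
Your proposal is correct and follows the paper's proof. You reduce via \Cref{thm:ent-finite} to the unentangled scheme with $n-r$ parties and threshold $t-r$, apply the $b=1$ bound to get $2^{-\Omega(n-r)}=2^{-\Omega(n)}$ from $n-r\geq c_0n$, and then handle smaller supports and the case $r=o(n)$ directly. Your two small deviations are both sound: you use monotonicity of $(\tau-\delta)/(1-\delta)$ in $\delta$ instead of padding the support, and you invoke \Cref{thm:constant-blocks} with $b=1$ (valid since $\widehat\tau_1=\tau_1$), which makes uniformity over a residual rate that varies with $n$ explicit.
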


\begin{proof}
The rate condition implies $r<t$. By \Cref{thm:ent-finite},
it suffices to prove unentangled leakage security for the smaller
scheme with length $N=n-r$, threshold $T=t-r$, and threshold rate
\begin{equation*}
  \frac TN=\frac{t-r}{n-r}
  =\frac{\tau-\delta}{1-\delta}\geq\tau_\star+\eta.
\end{equation*}
By \Cref{thm:block-threshold} with $b=1$, its unentangled leakage
advantage is $2^{-\Omega(N)}$. Since $N=(1-\delta)n\geq c_0n$,
this is $2^{-\Omega(n)}$. Applying \Cref{thm:ent-finite} gives the same
bound for entangled local leakage on any support of size $r$.

Any entangled local leakage strategy supported on fewer than $r$
devices can be represented as one supported on a set of size $r$.
Give each added device an independent ancilla and let its channel
ignore that ancilla and reproduce its original one-qubit leakage.
Hence the bound holds for every support of size at most $r$.

If $r=o(n)$, then for large enough $n$, we have 
\[
  \frac{t-r}{n-r}
  =\frac tn-o(1).
\]
A fixed positive gap between $t/n$ and $\tau_\star$ therefore gives
security for sufficiently large $n$.
\end{proof}

For fixed rates $\tau>\tau_\star$ and $\delta<1$,
the above condition is equivalent to
\begin{equation}
  \delta<\delta_{\max}(\tau)
  :=\frac{\tau-\tau_\star}{1-\tau_\star}.
  \label{eq:support-endpoint}
\end{equation}
\Cref{tab:support-rates} lists the entanglement-support endpoints for exponential security certified
by the full-revelation reduction for several fixed threshold rates (calculated with the help of ChatGPT Astra). Entries are rounded to three decimal places and security holds at
every fixed support rate strictly below the exact endpoint in
\cref{eq:support-endpoint}. These endpoints need not be optimal.
\begin{table}[htbp]
  \centering
  \renewcommand{\arraystretch}{1}
  \setlength{\tabcolsep}{6pt}
  \caption{Entanglement-support endpoints according to full-revelation
  of the shares.}
  \label{tab:support-rates}
  \begin{tabular}{c|ccccc}
    Threshold rate $\tau$ & $0.76$ & $0.80$ & $0.85$ & $0.90$ & $0.95$\\
    \hline
    Entanglement-support endpoint $\delta_{\max}(\tau)$
      & $0.100$ & $0.250$ & $0.437$ & $0.625$ & $0.812$
  \end{tabular}
\end{table}

At $r=t$, revealing the supported shares reconstructs the secret
by \Cref{def:shamir}, so
\cref{thm:ent-finite} cannot give security.  The next attack shows that $t$
devices can obtain almost the same distinguishing power while emitting only one
classical bit each.

\subsection{A classical-leakage \texorpdfstring{$t$}{t}-party GHZ attack}
The attack below applies the GHZ phase-encoding and parity method used by
Buhrman, van Dam, H{\o}yer, and Tapp~\cite[Section~3.1]{BuhrmanVanDamHoyerTapp1999}
to Shamir reconstruction. 

\begin{theorem}[GHZ parity attack]
  \label{thm:ghz}
For every $T\subseteq[n]$ with $|T|=t$, there is a leakage attack using a
$t$-qubit GHZ state shared by the devices in $T$.  Each device leaks one
classical bit locally.  For secrets $s_0=0$ and
$s_1=(p-1)/2$, the leaked states satisfy $
  \Adv(\rho_{s_0},\rho_{s_1})
  \geq\cos^2\left(\frac{\pi}{2p}\right)=1-O(p^{-2}).$
\end{theorem}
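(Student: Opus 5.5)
The attack follows the GHZ phase-encoding idea of Buhrman et al.: use linearity of Shamir reconstruction to write the global phase $\omega^s$ onto a GHZ state through local phases, then read off $\cos(2\pi s/p)$ from the parity of local $X$-measurements. Since $|T|=t$, interpolation gives fixed Lagrange coefficients $\lambda_i\in\F_p$, $i\in T$, depending only on the public evaluation points, with
\[
  s=f(0)=\sum_{i\in T}\lambda_i y_i
\]
for every sharing $y$ of $s$. The devices in $T$ share the state $(|0^t\rangle+|1^t\rangle)/\sqrt2$ before the shares are drawn. Devices outside $T$ ignore their shares and output the fixed bit $0$, so they carry no information.

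\emph{Step 1 (local phases).} On receiving $y_i$, device $i\in T$ applies the diagonal unitary $\operatorname{diag}(1,\omega^{\lambda_i y_i})$ to its qubit. This is a local operation depending only on its own share. Because exponents of $\omega$ may be reduced mod $p$, the phases on the $|1^t\rangle$ branch multiply to $\omega^{\sum_i\lambda_i y_i}=\omega^s$. The joint state is therefore
\[
  \frac{|0^t\rangle+\omega^s|1^t\rangle}{\sqrt2},
\]
regardless of the randomness of the sharing.

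\emph{Step 2 (local measurement).} Each device measures in the $X$ basis and leaks its one classical bit $b_i$, with $\pm1$ value $(-1)^{b_i}$. The key computation is the expectation of $X^{\otimes t}$ in this state. Since $X^{\otimes t}$ swaps $|0^t\rangle$ and $|1^t\rangle$, the expectation is the off-diagonal term
\[
  \tfrac12\bigl(\omega^s+\overline{\omega^s}\bigr)=\cos(2\pi s/p).
\]
Hence the parity $\bigoplus_i b_i$ is $0$ with probability $(1+\cos(2\pi s/p))/2$.

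\emph{Step 3 (advantage).} The leaked state is a classical distribution over $\{0,1\}^n$. By \Cref{fact:data-processing}, applied to the classical map that computes the parity, the advantage is at least the statistical distance between the two parity distributions. That distance is
\[
  \tfrac12\left|\cos 0-\cos\!\left(\tfrac{2\pi}{p}\cdot\tfrac{p-1}{2}\right)\right|
  =\tfrac12\bigl(1-\cos(\pi-\pi/p)\bigr)
  =\tfrac12\bigl(1+\cos(\pi/p)\bigr)
  =\cos^2\!\left(\tfrac{\pi}{2p}\right).
\]
Expanding the cosine gives $\cos^2(\pi/(2p))=1-O(p^{-2})$.

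The only delicate point is Step 2: checking that the GHZ state with the accumulated relative phase yields exactly $\cos(2\pi s/p)$ for the product of local $X$ outcomes. This holds because $X^{\otimes t}$ has no diagonal support on $\operatorname{span}\{|0^t\rangle,|1^t\rangle\}$, so only the cross terms contribute. Everything else is bookkeeping: Lagrange linearity, and the fact that the local phases depend only on $y_i$ and public constants, so the strategy is legal and fixed before the sharing.
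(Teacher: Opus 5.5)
Your proposal is correct and follows the paper's proof essentially step for step. You use the Lagrange coefficients to imprint $\omega^s$ as a relative phase on the GHZ state via local diagonal unitaries, then measure locally in the $X$ basis; the parity of the outcomes has mean $\cos(2\pi s/p)$, and the statistical distance between the parity distributions for $s_0=0$ and $s_1=(p-1)/2$ is $\tfrac12(1+\cos(\pi/p))=\cos^2(\pi/(2p))$. The only cosmetic difference is that you record outcomes as bits and take their parity, whereas the paper uses $\pm1$ outcomes and their product $B$; these are the same statistic.
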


\begin{proof}
Fix $T\subseteq[n]$ of size $t$.  By Lagrange interpolation, there are
coefficients $\lambda_i\in\F_p$, for $i\in T$, such that every polynomial
$f$ of degree less than $t$ satisfies
\begin{equation*}
  f(0)=\sum_{i\in T}\lambda_i f(x_i),
  \qquad
  \lambda_i=\prod_{\substack{j\in T\\j\neq i}}
       \frac{-x_j}{x_i-x_j}.
\end{equation*}
These coefficients depend only on the public evaluation points.  Thus, for
shares $y_i=f(x_i)$,
\begin{equation}
  s=\sum_{i\in T}\lambda_i y_i.
  \label{eq:ghz-reconstruction}
\end{equation}

The leakage devices corresponding to the parties in $T$ share the $t$-qubit
GHZ state
$|\mathrm{GHZ}\rangle=\frac{|0^t\rangle+|1^t\rangle}{\sqrt2}.$
Each party in $T$ holds one qubit of this state.  Parties outside $T$ simply
output a fixed bit, say $+1$.

When party $i\in T$ receives its share $y_i$, it applies the local unitary
\[
  U_i(y_i)=|0\rangle\!\langle0|+
           \omega^{\lambda_i y_i}|1\rangle\!\langle1|
\]
to its qubit.  These gates leave $|0^t\rangle$ unchanged.  On
$|1^t\rangle$, their phase factors multiply to
$\prod_{i\in T}\omega^{\lambda_i y_i}
=\omega^{\sum_{i\in T}\lambda_i y_i}=\omega^s$, by
\cref{eq:ghz-reconstruction}.  Hence the shared state becomes
\begin{equation*}
  |\psi_s\rangle=\frac{|0^t\rangle+\omega^s|1^t\rangle}{\sqrt2}.
\end{equation*}
This state is the same for every sharing polynomial with secret $s$, so averaging
over the sharing randomness leaves it unchanged.

Now each party in $T$ measures its qubit in the Pauli-$X$ basis and outputs
the classical measurement result $b_i\in\{\pm1\}$.   The adversary receives
the $t$ classical bits and computes their product $B=\prod_{i\in T}b_i$. This product is exactly the measurement outcome of the observable $X^{\otimes t}$. Therefore
$\E[B\mid s]=\langle\psi_s|X^{\otimes t}|\psi_s\rangle$.

Note that $X^{\otimes t}$ interchanges
$|0^t\rangle$ and $|1^t\rangle$.  Thus,
\begin{align*}
  \E[B\mid s]
  &=\frac12\bigl(\langle0^t|+\omega^{-s}\langle1^t|\bigr)
                 \bigl(|1^t\rangle+\omega^s|0^t\rangle\bigr)\notag\\
  &=\frac{\omega^s+\omega^{-s}}2
   =\operatorname{Re}(\omega^s)
   =\cos\left(\frac{2\pi s}{p}\right).
\end{align*}
Since $B\in\{\pm1\}$, we have:
\[
  \Pr[B=+1\mid s]
  =\frac{1+\cos(2\pi s/p)}{2},
  \qquad
  \Pr[B=-1\mid s]
  =\frac{1-\cos(2\pi s/p)}{2}.
\]
For $s_0=0$, we have $\E[B\mid s_0]=1.$
For $s_1=(p-1)/2$, we have
$\E[B\mid s_1]
=\cos\left(\pi-\frac{\pi}{p}\right)
=-\cos\left(\frac{\pi}{p}\right).$
Since the adversary can compute $B$ from the leaked bits,
\[
  \Adv(\rho_{s_0},\rho_{s_1})
  \geq \operatorname{SD}(B\mid s_0,\,B\mid s_1)
  =\frac{1+\cos(\pi/p)}{2}
  =\cos^2\left(\frac{\pi}{2p}\right)
  =1-O(p^{-2}).
\]
This proves the claimed attack. Note that the same idea distinguishes any fixed pair $s_0\neq s_1$
with the same advantage by choosing $k\in\F_p^\times$ such that
$k(s_1-s_0)=(p-1)/2$, replacing each phase by
$\omega^{k\lambda_i y_i}$, and applying an additional phase
$\omega^{-ks_0}$ to $|1\rangle$ at one participating device.
\end{proof}

\paragraph{Difference from ordinary classical local leakage.}
In the usual classical one-bit local leakage model, the leakage has
the form $
  \bigl(g_1(y_1),\ldots,g_n(y_n)\bigr),$
where each $g_i:\F_p\to\{\pm1\}$ is a local function of the
corresponding share.
In the attack above, the outputs are generated by local measurements
on a previously entangled quantum state.
Although each output is a single classical bit, the attack uses
entangled leakage devices and therefore does not contradict security
results for ordinary classical local leakage.

For every fixed threshold rate $\tau>\tau_\star$,
\Cref{cor:ent-region} certifies security for a positive linear number of
devices sharing arbitrary entanglement with one another and with the
adversary, while the attack above uses $t=\Theta(n)$ devices and needs
no adversary-held reference. Thus the achievable entanglement-support
size in this model is determined up to constant factors in this regime. The optimal support fraction remains
open. In particular, for
\(1\leq\ell<t\), the attack extends to a GHZ state on
\(t-\ell\) devices: the other \(\ell\) devices output locally
randomized, share-dependent bits, giving distinguishing advantage at least
\(2^{-\ell}\cos^2(\pi/(2p))\). Determining the optimal support fraction is left to future work. 
\section{Comparison and further questions}
\label{sec:comparison}

For worst-case classical one-bit leakage, security is known for threshold
rates above approximately $0.668$~\cite{Kasser2024}. Our sufficient
threshold for unentangled one-qubit leakage is
$\tau_\star\approx0.73339$.
This comparison and our results on entangled leakage suggest the
following questions.

\begin{enumerate}
  \item Can one lower the sufficient threshold for unentangled one-qubit
  leakage to match the classical bound of approximately $0.668$,
  or even improve upon it? Since classical one-bit leakage is a special
  case of one-qubit leakage, a better bound below  $0.668$ would also
  improve the known classical bound.

  \item Can one close the gap between the certified entanglement-support
  region in \Cref{cor:ent-region} and the $t$-party GHZ attack?
  More precisely, for a given threshold rate, what is the largest
  fraction of devices that can share arbitrary entanglement with one
  another and with the adversary while preserving security?

 \item Can one use the actual structure on the entangled leakage devices to do a better analysis than revealing their shares completely?

  \item Is the quantum polynomial secret-sharing scheme of
  Cleve-Gottesman-Lo~\cite{CleveGottesmanLo1999}, a quantum analogue
  of Shamir's scheme, resilient to local quantum leakage?
  Here both the secret and the shares are quantum states, so the
  classical conditioning argument in \Cref{thm:ent-finite} does not
  directly apply.
\end{enumerate}

\section*{Disclosure of AI assistance}
ChatGPT (GPT-5.5, GPT-5.6 and Astra) helped in identifying the GHZ leakage attack and assisted in the analysis and numerical optimisation of the Fourier bounds and MDS inequalities. ChatGPT was also used to assist with drafting the manuscript, creating the numerical tables, and basic editing.
The authors verified, extended, and reorganized these arguments to obtain the final results and retain all responsibility for the contents of the paper.

\bibliographystyle{splncs04}
\bibliography{references}

\appendix
\section{Proofs of the matrix facts}
\label{app:matrix-facts}

\begin{fact}[Trace-norm duality]
  \label{fact:trace-duality}
For every matrix $A\in\mathbb C^{d\times d}$,
\[
  \|A\|_1=\max_{U\text{ unitary}}\operatorname{Re}\Tr(UA).
\]
\end{fact}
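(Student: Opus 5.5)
We prove the two inequalities $\|A\|_1\geq\operatorname{Re}\Tr(UA)$ for every unitary $U$, and equality for one specific $U$. Both rest on the singular value decomposition $A=V\Sigma W^\dagger$, with $V,W$ unitary and $\Sigma=\operatorname{diag}(s_1,\ldots,s_d)$, $s_i\geq0$. Since $A^\dagger A=W\Sigma^2W^\dagger$, we have $\sqrt{A^\dagger A}=W\Sigma W^\dagger$, and so $\|A\|_1=\Tr\Sigma=\sum_i s_i$.

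\emph{Attainment.} Take $U=WV^\dagger$, which is unitary. Cyclicity of the trace gives
\[
  \Tr(UA)=\Tr(WV^\dagger V\Sigma W^\dagger)=\Tr(W\Sigma W^\dagger)=\sum_i s_i,
\]
which is real and equals $\|A\|_1$. The maximum is therefore attained, and a maximizer exists by this explicit choice (or by compactness of the unitary group).

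\emph{Upper bound.} Fix any unitary $U$ and set $M=W^\dagger UV$, which is also unitary. Cyclicity gives $\Tr(UA)=\Tr(W^\dagger UV\Sigma)=\Tr(M\Sigma)=\sum_i M_{ii}s_i$. Each column of a unitary matrix is a unit vector, so $|M_{ii}|\leq1$. Hence
\[
  \operatorname{Re}\Tr(UA)\leq|\Tr(UA)|\leq\sum_i|M_{ii}|\,s_i\leq\sum_i s_i=\|A\|_1.
\]
Combining the two parts gives the identity. The only point needing care is the bound $|M_{ii}|\leq1$ for the diagonal entries of a unitary matrix, which follows directly from the unit-norm columns.
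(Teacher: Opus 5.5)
Your proposal is correct and follows essentially the same route as the paper: take an SVD $A=V\Sigma W^\dagger$, use cyclicity to write $\operatorname{Re}\Tr(UA)=\sum_\ell s_\ell\operatorname{Re}(W^\dagger UV)_{\ell\ell}\le\sum_\ell s_\ell$ because diagonal entries of a unitary have modulus at most one, and attain equality at $U=WV^\dagger$. You additionally verify $\|A\|_1=\sum_i s_i$ via $\sqrt{A^\dagger A}=W\Sigma W^\dagger$, which the paper leaves implicit.
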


\begin{proof}
Let $A=V\Sigma W^\dagger$ be a
singular-value decomposition, where the diagonal entries of
$\Sigma$ are $s_1,\ldots,s_d$.
For every unitary $U$,
\[
\begin{aligned}
  \operatorname{Re}\Tr(UA)
  &=
  \operatorname{Re}\Tr
       (W^\dagger UV\Sigma)\\
  &=
  \sum_{\ell=1}^d
       s_\ell\operatorname{Re}(W^\dagger UV)_{\ell\ell}\\
  &\le
  \sum_{\ell=1}^d s_\ell
  =
  \|A\|_1.
\end{aligned}
\]
Here $W^\dagger UV$ is unitary, so each of its diagonal
entries has absolute value at most one.
Equality holds for $U=WV^\dagger$, proving the identity.
\end{proof}

\begin{fact}[Expectation of a Hermitian matrix]
  \label{fact:hermitian-expectation}
For every Hermitian matrix $H$ and density matrix $\tau$ of the
same dimension,
\[
  \Tr(H\tau)\leq\lambda_{\max}(H).
\]
\end{fact}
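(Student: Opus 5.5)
Since $H$ is Hermitian, I will diagonalize it by the spectral theorem: write $H=\sum_{\ell=1}^d\mu_\ell|v_\ell\rangle\!\langle v_\ell|$ with real eigenvalues $\mu_\ell$ and an orthonormal eigenbasis $\{|v_\ell\rangle\}$. By linearity of the trace,
\[
  \Tr(H\tau)=\sum_{\ell=1}^d\mu_\ell\langle v_\ell|\tau|v_\ell\rangle .
\]
The plan is to read the weights $w_\ell:=\langle v_\ell|\tau|v_\ell\rangle$ as a probability distribution on the eigenvalues. The whole argument is this averaging step.

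Next I check that the $w_\ell$ really are probabilities. Since $\tau\succeq0$, each $w_\ell\geq0$. Since $\{|v_\ell\rangle\}$ is an orthonormal basis, $\sum_\ell w_\ell=\Tr\tau=1$.

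It follows that $\Tr(H\tau)=\sum_\ell w_\ell\mu_\ell$ is a convex combination of the eigenvalues. Such a combination is at most the largest one, so $\Tr(H\tau)\leq\max_\ell\mu_\ell=\lambda_{\max}(H)$.

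The only care needed is to note that $\Tr(H\tau)$ is real, so the inequality makes sense. This holds because $H$ and $\tau$ are both Hermitian, and the displayed sum is visibly real. Beyond that there is no real obstacle.
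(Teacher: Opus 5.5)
Your proof is correct and follows the paper's argument: both diagonalize $H$ spectrally and write $\Tr(H\tau)=\sum_\ell \mu_\ell\langle v_\ell|\tau|v_\ell\rangle$. Both then note that the weights are nonnegative and sum to $\Tr\tau=1$, so the sum is a convex combination of eigenvalues and is at most $\lambda_{\max}(H)$.
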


\begin{proof}
Write
\[
  H=\sum_\ell\eta_\ell
       |v_\ell\rangle\langle v_\ell|.
\]
Then
\[
  \Tr(H\tau)
  =
  \sum_\ell\eta_\ell
       \langle v_\ell|\tau|v_\ell\rangle
\]
is a weighted average of the eigenvalues $\eta_\ell$:
the weights are nonnegative and sum to
$\Tr(\tau)=1$.
It is therefore at most $\max_\ell\eta_\ell=\lambda_{\max}(H)$.
\end{proof}

\section{Proof of the block Fourier budget}
\label{app:block-fourier-budget}

To bound a nonzero coefficient, we will compare an average over equally
spaced points on the circle with an integral. The following Riemann-sum
estimate controls the error in that comparison.

A function $f:[A,B]\to\mathbb R$ is $L$-Lipschitz if
\[
  |f(x)-f(y)|\leq L|x-y|\qquad\text{for all }x,y\in[A,B].
\]

\begin{fact}[Riemann-sum error]
  \label{fact:riemann-error}
Let $A<B$, let $f:[A,B]\to\mathbb R$ be $L$-Lipschitz, and
let $N\geq1$ be an integer. Put $z_j=A+j(B-A)/N$. Then
\[
  \left|\int_A^B f(x)\,dx-\frac{B-A}{N}\sum_{j=0}^{N-1}f(z_j)\right|
  \leq\frac{L(B-A)^2}{2N}.
\]
\end{fact}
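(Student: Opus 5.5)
The plan is to prove the estimate interval by interval. We partition $[A,B]$ into the $N$ subintervals $I_j=[z_j,z_{j+1}]$, each of length $h=(B-A)/N$, for $j=0,\ldots,N-1$. Since $z_N=B$, the integral decomposes as
\[
  \int_A^B f(x)\,dx=\sum_{j=0}^{N-1}\int_{z_j}^{z_{j+1}}f(x)\,dx .
\]
The Riemann sum $h\sum_j f(z_j)$ can likewise be written as $\sum_j\int_{z_j}^{z_{j+1}}f(z_j)\,dx$. Subtracting the two, the error becomes
\[
  \sum_{j=0}^{N-1}\int_{z_j}^{z_{j+1}}\bigl(f(x)-f(z_j)\bigr)\,dx .
\]

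Next we apply the triangle inequality, first to the sum and then inside each integral. The Lipschitz hypothesis gives $|f(x)-f(z_j)|\leq L(x-z_j)$ for every $x\in I_j$. Integrating this bound over $I_j$ gives
\[
  \int_{z_j}^{z_{j+1}}L(x-z_j)\,dx=\frac{Lh^2}{2}.
\]

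Summing over the $N$ subintervals, the total error is at most
\[
  N\cdot\frac{Lh^2}{2}=\frac{L(B-A)^2}{2N},
\]
which is the claimed bound. Integrability of $f$ is not an issue, because a Lipschitz function is continuous.

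There is no real obstacle in this argument. The only points that need checking are the indexing convention, which uses left endpoints $z_0,\ldots,z_{N-1}$, and the fact that these $N$ subintervals exactly tile $[A,B]$.
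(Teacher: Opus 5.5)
Your proof is correct: splitting $[A,B]$ into the $N$ cells $[z_j,z_{j+1}]$, bounding $|f(x)-f(z_j)|\le L(x-z_j)$ on each cell, and integrating gives exactly $N\cdot Lh^2/2=L(B-A)^2/(2N)$. The paper states this fact without proof as a standard estimate, and your argument is the standard one it implicitly relies on, so nothing is missing.
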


\begin{proof}[\Cref{lem:block-fourier-budget}]
Recall that
\[
  \widehat{\sigma}(a)
  =
  \E_{z\in\F_p^m}
  \omega^{-\langle a,z\rangle}\sigma(z),
  \qquad
  h(a)=\|\widehat{\sigma}(a)\|_1,
  \qquad
  \omega=e^{2\pi i/p}.
\]
All expectations over finite sets are uniform.

At $a=0$, the Fourier phase is identically one, so
\[
  \widehat{\sigma}(0)=\E_z\sigma(z).
\]
An average of density matrices is a density matrix. Since the trace
norm of a positive semidefinite matrix equals its trace, we obtain
\[
  h(0)
  =
  \|\widehat{\sigma}(0)\|_1
  =
  \Tr(\widehat{\sigma}(0))
  =
  1.
\]

We next bound the sum of the squared Hilbert-Schmidt norms.
The Fourier transform of a matrix-valued function acts entry by
entry:
\[
  \bigl(\widehat{\sigma}(a)\bigr)_{rs}
  =
  \E_z
  \omega^{-\langle a,z\rangle}\sigma_{rs}(z)
  =
  \widehat{\sigma_{rs}}(a).
\]
For any matrix $A$,
\begin{equation}
  \|A\|_2^2=\Tr(A^\dagger A)=\sum_{r,s}|A_{rs}|^2.
  \label{eq:hs-entrywise}
\end{equation}
It follows that
\begin{align*}
  \sum_a\|\widehat\sigma(a)\|_2^2
  &=\sum_{r,s}\sum_a|\widehat{\sigma_{rs}}(a)|^2
    &&\mbox{(\Cref{eq:hs-entrywise})}\\
  &=\sum_{r,s}\E_z|\sigma_{rs}(z)|^2
    &&\mbox{(\Cref{thm:parseval})}\\
  &=\E_z\Tr(\sigma(z)^\dagger\sigma(z))
    &&\mbox{(\Cref{eq:hs-entrywise})}\\
  &=\E_z\Tr(\sigma(z)^2)
    &&\mbox{(Hermiticity of $\sigma(z)$)}.
\end{align*}

To bound this expectation, let
$\mu_1,\ldots,\mu_d$ be the eigenvalues of $\sigma(z)$.
Since $\sigma(z)$ is a density matrix,
\[
  \mu_\ell\ge0,
  \qquad
  \sum_{\ell=1}^d\mu_\ell=1.
\]
Thus $0\le\mu_\ell\le1$, and hence
\[
  \Tr(\sigma(z)^2)
  =
  \sum_{\ell=1}^d\mu_\ell^2
  \le
  \sum_{\ell=1}^d\mu_\ell
  =
  1.
\]
Therefore,
\[
  \sum_a\|\widehat{\sigma}(a)\|_2^2\le1.
\]

For every $d\times d$ matrix $A$, Cauchy-Schwarz gives
\[
  \|A\|_1^2\leq d\|A\|_2^2.
\]
Applying this inequality to every Fourier coefficient yields
\[
  \sum_a h(a)^2
  \le
  d\sum_a\|\widehat{\sigma}(a)\|_2^2
  \le d.
\]
Since $h(0)=1$, it follows that
\[
  \sum_{a\ne0}h(a)^2
  =
  \sum_a h(a)^2-h(0)^2
  \le d-1.
\]
For any $a\ne0$, the linear function $z\mapsto\langle a,z\rangle$ takes each
value $u\in\F_p$ exactly $p^{m-1}$ times.

For each $u\in\F_p$, define
\[
  \tau_u
  =
  \E\bigl[
    \sigma(z)\mid\langle a,z\rangle=u
  \bigr].
\]
Each $\tau_u$ is a density matrix. Grouping the Fourier sum
according to the value of $\langle a,z\rangle$ gives
\[
  \widehat{\sigma}(a)
  =
  \frac1p\sum_{u=0}^{p-1}\omega^{-u}\tau_u.
\]

Applying \Cref{fact:trace-duality} to $\widehat{\sigma}(a)$, we obtain
\[
  h(a)
  =
  \max_{U\text{ unitary}}
  \frac1p\sum_{u=0}^{p-1}
  \operatorname{Re}\Tr
       (\omega^{-u}U\tau_u).
\]
Fix a unitary $U$, and define the Hermitian part of
$\omega^{-u}U$ as
\[
  H_u
  =
  \frac{\omega^{-u}U+\omega^uU^\dagger}{2}.
\]
Since $\tau_u^\dagger=\tau_u$,
\begin{align*}
  \overline{\Tr(\omega^{-u}U\tau_u)}
  &=\Tr\bigl((\omega^{-u}U\tau_u)^\dagger\bigr)
    &&\mbox{(conjugating the trace)}\\
  &=\Tr(\omega^u\tau_uU^\dagger)
    &&\mbox{($\tau_u^\dagger=\tau_u$)}\\
  &=\Tr(\omega^uU^\dagger\tau_u)
    &&\mbox{(cyclicity)}.
\end{align*}
Therefore,
\[
\begin{aligned}
  \operatorname{Re}\Tr(\omega^{-u}U\tau_u)
  &=
  \frac12\left(
    \Tr(\omega^{-u}U\tau_u)
    +
    \Tr(\omega^uU^\dagger\tau_u)
  \right)\\
  &=
  \Tr(H_u\tau_u).
\end{aligned}
\]
Applying \Cref{fact:hermitian-expectation} to $H_u$ and $\tau_u$ yields
\[
  h(a)
  \le
  \max_{U\text{ unitary}}
  \frac1p\sum_{u=0}^{p-1}\lambda_{\max}(H_u).
\]

Write the spectral decomposition of $U$ as
\[
  U
  =
  \sum_{\ell=1}^d e^{i\theta_\ell}
       |v_\ell\rangle\langle v_\ell|.
\]
Then
\[
  H_u
  =
  \sum_{\ell=1}^d
  \cos\left(\theta_\ell-\frac{2\pi u}{p}\right)
       |v_\ell\rangle\langle v_\ell|.
\]
Thus, defining the $2\pi$-periodic function
\[
  f(\theta)
  =
  \max_{\ell\in[d]}\cos(\theta-\theta_\ell),
\]
we have
\[
  \lambda_{\max}(H_u)
  =
  f\left(\frac{2\pi u}{p}\right).
\]
We will bound the discrete average of $f$ by a quantity
independent of the choice of $U$.

First consider the integral average
\begin{equation}
  I=\frac1{2\pi}\int_0^{2\pi}f(\theta)\,d\theta.
  \label{eq:phase-integral}
\end{equation}
Order the eigenvalue phases so that
\[
  0\le\theta_1\le\cdots\le\theta_d<2\pi,
\]
and set
\[
  \theta_{d+1}=\theta_1+2\pi,
  \qquad
  \Delta_\ell=\theta_{\ell+1}-\theta_\ell.
\]
Repeated phases are allowed and give zero gaps. We have
\[
  \Delta_\ell\ge0,
  \qquad
  \sum_{\ell=1}^d\Delta_\ell=2\pi.
\]

The value of $f(\theta)$ is the cosine of the smallest angular
distance to an eigenvalue phase on the circle. This follows because
cosine decreases with angular distance in $[0,\pi]$.

Consider the gap from $\theta_\ell$ to
$\theta_{\ell+1}$, and write
$\theta=\theta_\ell+x$, where
$0\le x\le\Delta_\ell$.
On the first half of the gap, a nearest phase is
$\theta_\ell$; on the second half, a nearest phase is
$\theta_{\ell+1}$. Therefore,
\[
  f(\theta_\ell+x)
  =
  \begin{cases}
    \cos x,
      &0\le x\le\Delta_\ell/2,\\
    \cos(\Delta_\ell-x),
      &\Delta_\ell/2\le x\le\Delta_\ell.
  \end{cases}
\]
The integral over this gap is therefore
\[
\begin{aligned}
  \int_{\theta_\ell}^{\theta_{\ell+1}}
       f(\theta)\,d\theta
  &=
  \int_0^{\Delta_\ell/2}\cos x\,dx
  +
  \int_{\Delta_\ell/2}^{\Delta_\ell}
       \cos(\Delta_\ell-x)\,dx\\
  &=
  2\int_0^{\Delta_\ell/2}\cos x\,dx\\
  &=
  2\sin(\Delta_\ell/2).
\end{aligned}
\]
Summing over all gaps and using periodicity in \Cref{eq:phase-integral} gives
\[
  I
  =
  \frac1\pi
  \sum_{\ell=1}^d\sin(\Delta_\ell/2).
\]
The function sine is concave on $[0,\pi]$, and each
$\Delta_\ell/2$ lies in this interval. Jensen's inequality
therefore gives
\[
\begin{aligned}
  \frac1d\sum_{\ell=1}^d\sin(\Delta_\ell/2)
  &\le
  \sin\left(
    \frac1d\sum_{\ell=1}^d\frac{\Delta_\ell}{2}
  \right)\\
  &=
  \sin(\pi/d).
\end{aligned}
\]
Hence
\[
  I\le\frac d\pi\sin(\pi/d).
\]

Each shifted cosine is one-Lipschitz, since its derivative has absolute
value at most one. The maximum of finitely many one-Lipschitz functions is
also one-Lipschitz, so $f$ is one-Lipschitz. Set
\[
  \alpha_u=\frac{2\pi u}{p},\qquad
  S=\frac1p\sum_{u=0}^{p-1}f(\alpha_u).
\]
The points $\alpha_u$, for $u=0,\ldots,p-1$, are the left endpoints
of the $p$ equal subintervals of $[0,2\pi]$.
By \Cref{fact:riemann-error} with $L=1$ and $N=p$,
\[
  |S-I|
  =\frac1{2\pi}\left|
    \frac{2\pi}{p}\sum_{u=0}^{p-1}f(\alpha_u)
    -\int_0^{2\pi}f(\theta)\,d\theta
  \right|
  \leq\frac1{2\pi}\frac{(2\pi)^2}{2p}
  =\frac{\pi}{p}.
\]
Combining this with the integral estimate gives
\[
  S
  \le
  \frac d\pi\sin(\pi/d)+\frac{\pi}{p}.
\]
The right-hand side does not depend on $U$. Thus the
earlier trace-norm bound implies
\[
  h(a)
  \le
  \frac d\pi\sin(\pi/d)+\frac{\pi}{p}.
\]

We also have the elementary bound
\[
\begin{aligned}
  h(a)
  &=
  \left\|\frac1p\sum_{u=0}^{p-1}
       \omega^{-u}\tau_u\right\|_1\\
  &\le
  \frac1p\sum_{u=0}^{p-1}
       |\omega^{-u}|\,\|\tau_u\|_1\\
  &=1.
\end{aligned}
\]
Taking the smaller of these two bounds proves
\[
  h(a)\le
  \min\left\{
    1,\,
    \frac d\pi\sin(\pi/d)+\frac{\pi}{p}
  \right\}
  =
  c_{d,p}
  \qquad(a\ne0).
\]

For every $q\geq2$, the pointwise estimate gives
\[
  h(a)^q\leq h(a)^2c_{d,p}^{\,q-2}\qquad(a\ne0).
\]
Summing over nonzero frequencies and using the squared
Fourier-mass bound, we conclude that
\[
  \sum_{a\ne0}h(a)^q
  \le
  c_{d,p}^{\,q-2}\sum_{a\ne0}h(a)^2
  \le
  (d-1)c_{d,p}^{\,q-2}.
\]
\end{proof}

\section{Proof of the block MDS inequality}
\label{app:block-mds}

\begin{proof}[\Cref{lem:unequal-block-bl}]
Let $\C\subseteq\F_p^n$ be an MDS code of dimension
$k$, and let $B_1,\ldots,B_M$ be the given partition
of the coordinates. Write
\[
  m_j=|B_j|,
  \qquad
  q_j=\frac{n}{k+m_j-1}.
\]
By assumption, $k+m_j-1\le n$ for every $j$.

Since $\C$ is MDS, its minimum distance is
$n-k+1$. If two codewords agree on at least $k$
coordinates, they differ in at most $n-k$ coordinates.
The minimum-distance property therefore forces them
to be equal.

Therefore, for every $T\subseteq[n]$ with
$|T|\ge k$, the projection
\[
  \pi_T:\C\longrightarrow\F_p^T,
  \qquad
  \pi_T(a)=a_T,
\]
is injective. Thus the coordinates $a_T$ determine
the entire codeword. When $|T|=k$, this projection
is bijective, because both $\C$ and $\F_p^T$
have $p^k$ elements.

Relabel the coordinates so that the elements of each block occupy
consecutive positions in the cyclic order
\(1,2,\ldots,n\) and $1$ comes after $n$. Coordinate permutations preserve the MDS property.

For each $r\in[n]$, let
\[
  I_r=\{r,r+1,\ldots,r+k-1\},
\]
where indices are interpreted modulo $n$.
Thus $I_r$ is a cyclic interval of $k$ coordinates.
Define
\[
  S_r=\{j\in[M]:B_j\cap I_r\ne\varnothing\},
  \qquad
  T_r=\bigcup_{j\in S_r}B_j.
\]
We include the whole block whenever the interval meets
it. Hence
\[
  I_r\subseteq T_r.
\]
Since $|I_r|=k$, the values on $I_r$ determine
the entire codeword, including all coordinates in
$T_r\setminus I_r$.

For a fixed block index \(j\), we now count the number of
starting positions \(r\) for which \(j\in S_r\).
Suppose that $B_j$ starts at position $s$ and has
length $m_j$. An interval of length $k$ meets this
block precisely when its starting position is one of
\[
  s-k+1,\ s-k+2,\ \ldots,\ s+m_j-1,
\]
with positions interpreted modulo $n$.
The number of positions in this list is
\[
  (s+m_j-1)-(s-k+1)+1
  =
  k+m_j-1.
\]
Since $k+m_j-1\le n$, these positions are distinct
modulo $n$.

Thus, setting
\[
  \nu_j=k+m_j-1,
\]
we have
\[
  |\{r\in[n]:j\in S_r\}|=\nu_j,
  \qquad
  q_j=\frac{n}{\nu_j}.
\]
For each \(r\in[n]\), define the nonnegative function
\[
  F_r(a)
  =
  \prod_{j\in S_r}g_j(a_{B_j})^{q_j},
  \qquad a\in\C.
\]
For a fixed \(j\), the factor \(g_j(a_{B_j})^{q_j}\)
appears in \(F_r(a)\) precisely when \(j\in S_r\).
This occurs for exactly
\(\nu_j\) values of \(r\). Therefore,
\[
\begin{aligned}
  \prod_{r=1}^n F_r(a)^{1/n}
  &=
  \prod_{j=1}^M
       g_j(a_{B_j})^{q_j\nu_j/n}\\
  &=
  \prod_{j=1}^M g_j(a_{B_j}),
\end{aligned}
\]
where the last equality follows from
$q_j\nu_j/n=1$.
Applying \cref{fact:holder} with $\Omega=\C$
and $R=n$, we obtain
\begin{equation}
  \label{eq:unequal-block-holder}
\begin{aligned}
  \sum_{a\in\C}\prod_{j=1}^M g_j(a_{B_j})
  &=
  \sum_{a\in\C}\prod_{r=1}^n F_r(a)^{1/n}
  \le
  \prod_{r=1}^n
  \left(\sum_{a\in\C}F_r(a)\right)^{1/n}.
\end{aligned}
\end{equation}

Fix $r$. Projection onto $I_r$ is bijective.
Thus, for every $u\in\F_p^{I_r}$, there is exactly
one codeword $a(u)\in\C$ satisfying
\[
  a(u)_{I_r}=u.
\]
Summing over codewords is therefore the same as summing
over all assignments $u$ to $I_r$.

Once $u$ is fixed, the values of $a(u)$ on the
remaining coordinates $T_r\setminus I_r$ are also
fixed. The function $F_r$ depends only on the
coordinates in $T_r$. Summing over all possible values of
the coordinates in $T_r\setminus I_r$, instead of
only the values specified by $a(u)$ adds only
nonnegative terms. Hence
\[
\begin{aligned}
  \sum_{a\in\C}F_r(a)
  &=
  \sum_{u\in\F_p^{I_r}}
       \prod_{j\in S_r}g_j(a(u)_{B_j})^{q_j}\\
  &\le
  \sum_{z\in\F_p^{T_r}}
       \prod_{j\in S_r}g_j(z_{B_j})^{q_j}\\
  &=
  \prod_{j\in S_r}
  \left(
    \sum_{u_j\in\F_p^{B_j}}g_j(u_j)^{q_j}
  \right).
\end{aligned}
\]
For each block, let
\[
  Z_j
  =
  \sum_{u\in\F_p^{B_j}}g_j(u)^{q_j}.
\]
Then, we have
\[
  \sum_{a\in\C}F_r(a)
  \le
  \prod_{j\in S_r}Z_j.
\]
Using \cref{eq:unequal-block-holder}, we get
\[
\begin{aligned}
  \sum_{a\in\C}\prod_{j=1}^M g_j(a_{B_j})
  &\le
  \prod_{r=1}^n
       \left(\prod_{j\in S_r}Z_j\right)^{1/n}\\
  &=
  \prod_{j=1}^M Z_j^{\nu_j/n}\\
  &=
  \prod_{j=1}^M Z_j^{1/q_j}.
\end{aligned}
\]
The first equality uses the fact that block $j$
belongs to exactly $\nu_j$ of the sets $S_r$;
the second uses $\nu_j/n=1/q_j$.

Substituting the definition of $Z_j$, we get
\[
  \sum_{a\in\C}\prod_{j=1}^M g_j(a_{B_j})
  \le
  \prod_{j=1}^M
  \left(
    \sum_{u\in\F_p^{B_j}}g_j(u)^{q_j}
  \right)^{1/q_j}.
\]
\end{proof}

\end{document}